\definecolor{pku-red}{RGB}{139,0,18}
    \let\Cref\crtCref
    \let\cref\crtcref
\patchcmd{\@bibitem}{\ignorespaces}{\label{bib-#1}\ignorespaces}{}{}
\newenvironment{interposition}[1][name]
{\medskip\noindent\ignorespaces\textbf{#1.}}{\medskip\noindent\ignorespacesafterend}
\newcommand{\R}{\mathbb{R}}
\newcommand{\T}[1]{{#1}^T}
\theoremstyle{plain}
\newtheorem{theorem}{Theorem}
\newtheorem{lemma}{Lemma}
\newtheorem{proposition}{Proposition}
\newtheorem{corollary}{Corollary}
\theoremstyle{definition}
\newtheorem{definition}{Definition}
\newtheorem{example}{Example}
\theoremstyle{remark}
\newtheorem{remark}{Remark}
\DeclareMathOperator*{\suppmax}{suppmax}
\DeclareMathOperator*{\suppmin}{suppmin}
\DeclareMathOperator*{\argmin}{argmin}
\DeclareMathOperator*{\supp}{supp}
\DeclareMathOperator*{\poly}{poly}
\newcommand\NEn{Nash equilibrium}
\newcommand\NE{Nash equilibrium }
\newcommand\SPn{stationary point}
\newcommand\SP{stationary point }
\newcommand{\FP}{fictitious play }
\newcommand\lhysays[1]{\textcolor{blue}{#1}}
\newcommand\AdjFi{(x_{\mathrm{TS}}, y_{\mathrm{TS}})}
\newcommand\AdjSe{(x_{\mathrm{MB}}, y_{\mathrm{MB}})}
\newcommand\AdjTh{(x_{\mathrm{IL}}, y_{\mathrm{IL}})}
\newenvironment{breakablealgorithm}
  {
   \begin{center}
     \refstepcounter{algorithm}
     \hrule height.8pt depth0pt \kern2pt
     \renewcommand{\caption}[2][\relax]{
       {\raggedright\textbf{\ALG@name~\thealgorithm} ##2\par}
       \ifx\relax##1\relax 
         \addcontentsline{loa}{algorithm}{\protect\numberline{\thealgorithm}##2}
       \else 
         \addcontentsline{loa}{algorithm}{\protect\numberline{\thealgorithm}##1}
       \fi
       \kern2pt\hrule\kern2pt
     }
  }{
     \kern2pt\hrule\relax
   \end{center}
  }
\title{On Tightness of Tsaknakis-Spirakis Descent Methods for Approximate Nash Equilibria}
\author[1]{Zhaohua Chen\thanks{\href{mailto:chenzhaohua@pku.edu.cn}{chenzhaohua@pku.edu.cn}.}}
\author[1]{Xiaotie Deng\thanks{\href{mailto:xiaotie@pku.edu.cn}{xiaotie@pku.edu.cn}.}}
\author[2]{Wenhan Huang\thanks{\href{mailto:rowdark@sjtu.edu.cn}{rowdark@sjtu.edu.cn}.}}
\author[1]{Hanyu Li\thanks{\href{mailto:lhydave@pku.edu.cn}{lhydave@pku.edu.cn}, the main technical contributor.}}
\author[3]{Yuhao Li\thanks{\href{mailto:yuhaoli@cs.columbia.edu}{yuhaoli@cs.columbia.edu}.}}
\affil[1]{Peking University}
\affil[2]{Shanghai Jiao Tong University}
\affil[3]{Columbia University}
\begin{document}

\maketitle
\begin{abstract}
This article explores the minimum approximation ratio for Nash equilibrium in bi-matrix games, focusing on the Tsaknakis and Spirakis (TS) methods. The previous SOTA, TS algorithm, achieved an approximation ratio of 0.3393, but efforts to improve the analysis of the TS algorithm have been unsuccessful. This work demonstrates that the bound of 0.3393 is tight for the TS algorithm and presents a theoretical worst-case analysis. A condition for identifying tight instances is provided, along with a generator. While most generated instances are unstable, indicating potential improvements, stable instances exist where perturbations cannot enhance the 0.3393 bound. Other approximate algorithms, such as regret-matching and fictitious play, achieve better ratios on these instances. The generated instances can serve as benchmarks for approximate Nash equilibrium algorithms. The article also mentions progress in the TS algorithm, achieving an approximation ratio of 1/3, which can be further studied using the presented techniques.
\end{abstract}

\section{Introduction}\label{sec:intro}
Nash equilibria characterize a static status of players in which each player cannot gain more utilities by changing her current strategy. 
Ever since Nash proved the existence of such an equilibrium \cite{nash1951non}, it has become the fundamental concept in non-cooperative game theory and economics.
In view of computer science, computing Nash equilibria rises in great importance from computational complexity theory \cite{chen2009settling, DBLP:journals/siamcomp/DaskalakisGP09, rubinstein2016settling}, algorithmic game theory \cite{DBLP:journals/csr/KoutsoupiasP09, DBLP:journals/geb/NisanR01, DBLP:journals/jacm/RoughgardenT02}, and learning theory \cite{DBLP:journals/jmlr/BlumM07, DBLP:books/daglib/0016248, DBLP:conf/icml/HuW98}.
It has been shown that \NE computing lies in the complexity class PPAD introduced by Papadimitriou \cite{DBLP:journals/jcss/Papadimitriou94}. 
The PPAD-completeness results have been established for 3NASH (finding an approximate solution of Nash equilibria for at-least-three-player games) by Daskalakis, Goldberg, and Papadimitriou \cite{DBLP:journals/siamcomp/DaskalakisGP09}, and for 2NASH (finding an exact/approximate solution of Nash equilibria for two-player games) by Chen, Deng, and Teng \cite{chen2009settling}. 
It is well believed that PPAD-complete problems can hardly have a polynomial-time algorithm.
These completeness results thus lead to a great many efforts to find an $\epsilon$-approximate \NE (see the formal definition in Section~\ref{sec:prelim}) in polynomial time for some small constant $\epsilon>0$. 

Early works by Kontogiannis et al. \cite{kontogiannis2009polynomial} and Daskalakis et al. \cite{daskalakis2009note} introduced polynomial-time algorithms to reach an approximation ratio of $\epsilon = 3/4$ and $\epsilon = 1/2$, respectively. 
Their algorithms are based on searching strategies with small supports. Conitzer \cite{conitzer2009approximation} also showed that the well-known fictitious play algorithm \cite{brown1951iterative} gives a $1/2$-approximate \NE within constant rounds, matching Feder et al.'s lower-bound result \cite{feder2007approximating}. 
Subsequently, Daskalakis et al. \cite{daskalakis2007progress} gave an algorithm with an approximation ratio of $0.38$ by enumerating arbitrarily large supports. 
The same result was achieved by Czumaj et al. \cite{czumaj2019distributed} with a totally different approach by finding the Nash equilibria of two zero-sum games and further making a convex combination between the solution and the corresponding best responses. 
Bosse et al. \cite{bosse2010new} provided another algorithm based on the previous work of Kontogiannis and Spirakis \cite{kontogiannis2007efficient}, and their algorithm reaches a $0.36$-approximation ratio. 
Concurrent with these works, Tsaknakis and Spirakis \cite{tsaknakis2008optimization} established the best previously known approximation ratio of $0.3393$. 
The basic idea of the Tsaknakis and Spirakis (TS) algorithm is to directly optimize the approximation ratio itself by a descent procedure, after which a further adjustment is conducted by making a convex combination between the solution produced by the optimization and the corresponding best responses.

The original paper of Tsaknakis and Spirakis \cite{tsaknakis2008optimization} proved that the approximation ratio of the algorithm is at most $0.3393$. 
However, it leaves open whether $0.3393$ is tight for the algorithm. 
In the literature, the experimental performance of the algorithm is far better than $0.3393$ \cite{tsaknakis2008performance}. 
The worst approximation ratio in experiments reported ahead of our paper is provided in Fearnley et al. \cite{fearnley2015empirical}, where the TS algorithm on a game finds a $0.3385$-approximate \NEn.

In this work, exploring the descent procedure in the TS algorithm, we present a delicate analysis of the lower bound of the TS algorithm, which is illustrated with several images of the worst cases. 
It provides a full understanding of the worst cases of the TS algorithm. 
The analysis allows us to prove that $0.3393$ is indeed the tight bound for the TS algorithm by providing a bimatrix game instance. 
Subsequently, we solve the open problem regarding the well-followed TS algorithm. 
Furthermore, we characterize all game instances that are able to attain the tight bound.  
Based on this characterization, we identify a region of payoff matrices that the game instances generated are precisely tight instances. 
This identification allows us to propose a generator of tight instances and conduct various experiments on the generated instances.

Despite the tightness of $0.3393$ for the TS algorithm, our extensive experiments show that it is rather difficult to reach a $0.3393$ bound on generated instances in practice by brute-force enumerations. 
Such results imply that the experimental bound is \emph{usually} inconsistent with the theoretical bound. 
To figure out the reasons for such a gap, we further explore the stability of generated instances. 
Our empirical studies show that most generated instances of large sizes are \emph{unstable}. 
The word ``unstable'' means that a small perturbation near a $0.3393$ solution would make the TS algorithm find another faraway solution with a much better approximation ratio. 
With the game size growing large, the probability of finding a stable tight instance plunges and even vanishes. 
These results help to understand the gap: Even if a tight instance is met, the TS algorithm usually escapes the $0.3393$ solution and reaches a far better approximation ratio. 
Based on these results, we give a time-saving and effective suggestion on the practical usage of the TS algorithm in \Cref{sec:experi}.

We also use the generated game instances to measure the performances of the Czumaj et al.'s algorithm~\cite{czumaj2019distributed}, the regret-matching algorithm in online learning~\cite{greenwald2006bounds}, and the \FP algorithm~\cite{brown1951iterative}. 
The regret-matching algorithm and the \FP algorithm perform well on these instances. Interestingly, the algorithm of Czumaj et al. always reaches an empirical approximation ratio of $0.3393$ on generated game instances, implying that the tight instance generator for the TS algorithm also makes a totally different algorithm perform poorly. 
Such results show that our instances generated against the TS algorithm serve as a necessary benchmark in the design and analysis of approximate \NE algorithms.

\textbf{Subsequent Work.} After the conference version \cite{DBLP:conf/sagt/ChenDHLL21} of the presented paper, very recently, the work by Deligkas, Fasoulakis, and Markakis \cite{DBLP:journals/corr/abs-2204-11525} provides a polynomial-time algorithm\footnote{We call it DFM algorithm for short.} computing a $1/3$-approximate Nash equilibrium. 
Their algorithm is also based on the same descent procedure but makes a more delicate adjustment.
We show the generality of our study to provide a proof using similar techniques that the approximation ratio $1/3$ of their algorithm is tight.

This paper is organized as follows. 
In \Cref{sec:prelim}, we introduce the basic definitions and notations that we use throughout the paper. 
In \Cref{sec:algo}, we restate the TS algorithm~\cite{tsaknakis2008optimization} and propose two other auxiliary methods which help to analyze the original algorithm. 
In \Cref{sec:TI}, we dive into the worst-case analysis, show what a tight instance looks like, and then prove by a game instance that $0.3393$ is indeed the tight bound for the TS algorithm.
Further, we characterize all tight instances and present a generator that outputs tight game instances in \Cref{sec:gen-TI}. 
With similar techniques, we show that the upper bound $1/3$ of the DFM algorithm is indeed tight by presenting matching-bound instances in \Cref{sec:1/3-tight}. 
We conduct extensive experiments to reveal the properties of the stationary points and compare the descent methods with other approximate \NE algorithms in \Cref{sec:experi}. 
At last, we present several open problems raised from our study in \Cref{sec:disc}.

\section{Definitions and Notations}\label{sec:prelim}

We focus on finding an approximate \NE in normal-form games with two players. 
We use $R_{m\times n}$ and $C_{m\times n}$ to denote the payoff matrices of the row player and column player, where the row player and the column player have $m$ and $n$ strategies, respectively. 
Furthermore, we suppose that both $R$ and $C$ are normalized so that all their entries belong to $[0, 1]$. 
In fact, concerning Nash equilibria, any game is equivalent to a normalized game with appropriate shifting and scaling of both payoff matrices.

For two vectors $u$ and $v$ with the same length, we say $u \geq v$ if each entry of $u$ is greater than or equal to the corresponding entry of $v$. 
Meanwhile, let us denote by $e_k$ a $k$-dimension vector with all entries equal to $1$. 
We use a probability vector to define either player's behavior, which describes the probability that a player chooses any pure strategy to play. 
Specifically, the row player's strategy and the column player's strategy lie in $\Delta_m$ and $\Delta_n$, respectively, where
\begin{align*}
    \Delta_m &= \{x\in\R^m:x\geq 0, \T{x}e_m = 1\}, \\
    \Delta_n &= \{y\in\R^n:y\geq 0, \T{y}e_n = 1\}.
\end{align*}

For a strategy pair $(x, y) \in \Delta_m \times \Delta^n$, we call it an \emph{$\epsilon$-approximate \NEn}, if for any $x'\in\Delta_m$, $y'\in\Delta_n$, the following inequalities hold:
\begin{align*}
    \T{x'}Ry\leq \T{x}Ry+\epsilon,\\
    \T{x}Cy'\leq \T{x}Cy+\epsilon.
\end{align*}

Therefore, a \NE is an $\epsilon$-approximate \NE with $\epsilon = 0$.

To simplify our further discussion, for any probability vector $u$, we use
\[\supp(u)=\{i:u_i>0\},\]
to denote the support of $u$, and
\begin{align*}
    \suppmax(u) &= \{i:\forall j,\ u_i\geq u_j\},\\
    \suppmin(u) &= \{i:\forall j,\ u_i\leq u_j\},
\end{align*}
to denote the index set of all entries equal to the maximum/minimum entry of vector $u$.

At last, we use $\max(u)$ to denote the value of the maximal entry of vector $u$, and $\max\limits_S(u)$ to denote the value of the maximal entry of vector $u$ confined in the index set $S$.

\section{Algorithms}\label{sec:algo}

In this section, we first restate the TS algorithm~\cite{tsaknakis2008optimization}, and then propose two auxiliary adjusting methods, which help to analyze the bound of the TS algorithm. 

The TS algorithm formulates the approximate \NE problem into an optimization problem. Specifically, we define the following functions:
\begin{align*}
    f_R(x,y) &:= \max(Ry)-\T{x}Ry, \\
    f_C(x,y) &:= \max(\T{C}x)-\T{x}Cy, \\
    f(x, y) &:= \max\left\{f_R(x, y), f_C(x, y)\right\}.
\end{align*}
The goal is to minimize $f(x,y)$ over $\Delta_m\times\Delta_n$. 

The relationship between the above function $f$ and approximate \NE is as follows. Given strategy pair $(x,y) \in \Delta_m\times\Delta_n$, $f_R(x,y)$ and $f_C(x,y)$ are the respective maximum deviations of row player and column player. By definition, $(x,y)$ is an $\epsilon$-approximate \NE if and only if $f(x,y)\leq\epsilon$. In other words, as long as we obtain a point with $f$ value no greater than $\epsilon$, an $\epsilon$-approximate \NE is reached.

The idea of the TS algorithm is to find a \SP (to be defined in \Cref{def:sp}) of the objective function $f$ by a descent procedure and make a further adjustment from the \SPn  \footnote{We will see in \Cref{remark:1/2-sp} that finding a \SP is not enough to reach a good approximation ratio; therefore the adjustment step is necessary.}. 
To give the formal definition of \SPn s, we need to define the Dini directional derivative of $f$ as follows:

\begin{definition}\label{def:dir-deri}
Given $(x, y), (x',y')\in\Delta_m\times\Delta_n$, the \emph{Dini directional derivative} \cite{Demyanov2010} of $(x, y)$ in direction $(x'-x,y'-y)$ is
\[Df(x,y,x',y'):=\lim_{\theta\to 0+}\frac{1}{\theta}\left(f(x+\theta(x'-x),y+\theta(y'-y))-f(x,y)\right).\]  
$Df_R(x,y,x',y')$ and $Df_C(x,y,x',y')$ are defined similarly with respect to $f_R$ and $f_C$.
\end{definition}

\begin{remark}
Note that the notion of $Df(x,y,x',y')$ in \Cref{def:dir-deri}  is not the directional derivative we usually consider. 
The latter should be defined as 
\[\lim_{\theta\to 0+}\frac{1}{\theta}\left(f(x+\theta\frac{x'-x}{\Vert x'-x\Vert},y+\theta\frac{y'-y}{\Vert y-y'\Vert})-f(x,y)\right).\]

We give an example to show the difference. Let $(x'',y'')=(x,y)+1/2(x'-x,y'-y)$. It is clear that $(x''-x,y''-y)$ represents the same direction as $(x'-x,y'-y)$. However, 
\begin{align*}
Df(x,y,x'',y'')&=\lim_{\theta\to 0+}\frac{1}{\theta}\left(f(x+\theta(x''-x),y+\theta(y''-y))-f(x,y)\right)\\
&=\frac{1}{2}\lim_{\theta\to 0+}\frac{1}{\theta/2}\left(f(x+\theta/2(x'-x),y+\theta/2(y'-y))-f(x,y)\right)\\
&=\frac{1}{2}Df(x,y,x',y').
\end{align*}

So even when we consider points in the same direction, their $Df$ values will be different by a multiple. Why do we not normalize the direction vectors but keep using Dini derivatives? We will see soon that this definition shares good properties: It is easier to analyze and compute the steepest direction.
\end{remark}

Now we present the definition of \SPn s. 
\begin{definition}\label{def:sp}
$(x,y)\in\Delta_m\times\Delta_n$ is a \emph{\SPn} if and only if for any $(x',y')\in\Delta_m\times\Delta_n$,
\[Df(x,y,x',y')\geq 0.\]
\end{definition}

We use a descent procedure to find a \SPn. The descent procedure is presented in \Cref{app:DP}. Due to time and precision limit, we cannot expect the descent procedure always finds an exact stationary point. Instead, we seek a $\delta$-stationary point as in the following definition.
\begin{definition}
Given $\delta\geq 0$, $(x,y)\in\Delta_m\times\Delta_n$ is a $\delta$-\emph{\SPn} if and only if 
\[f_R(x,y)=f_C(x,y)\]
and for any $(x',y')\in\Delta_m\times\Delta_n$,
\[Df(x,y,x',y')\geq -\delta.\]
\end{definition}

It has already been proved that the procedure runs in the polynomial-time of precision $\delta$ to find a $\delta$-\SPn~\cite{tsaknakis2008performance}.

To better deal with $Df(x, y, x', y')$, we give an explicit form of $Df$. Detailed calculations are presented in \Cref{app:MissingCalculations}. In \cite{tsaknakis2008optimization}, they have provided alternative characterizations of $Df$. Below we provide a similar development, but emphasize its min-max structure, which is utilized in a series of proofs later.  For now we only care about cases when $f_R(x,y)=f_C(x,y)$ (which is a necessary condition for a stationary point to be proved in \Cref{prop:construct-sp}). Let $S_C(x):=\suppmax(\T{C}x)$, $S_R(y):=\suppmax(Ry)$. Then the formula is 
\begin{align*}
    Df(x,y,x',y')&=\max\{Df_R(x,y,x',y'),Df_C(x,y,x',y')\}\\
    &=\max\{T_1(x,y,x',y'),T_2(x,y,x',y')\}-f(x,y),
\end{align*}
where
\begin{align*}
T_1(x,y,x',y')&=\max_{S_R(y)}(Ry')-\T{x'}Ry-\T{x}Ry'+\T{x}Ry,\\ T_2(x,y,x',y')&=\max_{S_C(x)}(\T{C}x')-\T{x'}Cy-\T{x}Cy'+\T{x}Cy.
\end{align*}

A key component of $Df$ is $\max\{T_1,T_2\}$, for which several maximum operators are applied. To smoothen these maximum operations, we introduce linear convex combinations via $\rho, w$ and $z$:
\begin{align*}
T(x,y,x',y',\rho,w,z):=&\rho(\T{w}Ry'-\T{x}Ry'-\T{x'}Ry+\T{x}Ry)\\
         &\quad+(1-\rho)(\T{x'}Cz-\T{x}Cy'-\T{x'}Cy+\T{x}Cy),
\end{align*}
where $\rho\in[0, 1]$, $w\in\Delta_m$, $\supp(w)\subseteq S_R(y)$, $z\in\Delta_n$, $\supp(z)\subseteq S_C(x)$.\footnote{Throughout the paper,  we require that $(x, y), (x',y')\in\Delta_m\times\Delta_n$, and $\rho\in[0, 1]$, $w\in\Delta_m$, $\supp(w)\subseteq S_R(y)$, $z\in\Delta_n$, $\supp(z)\subseteq S_C(x)$. These restrictions are omitted afterward for fluency of presentation.} When $f_R(x,y)=f_C(x,y)$, we have the following identities
\begin{align*}
    &\max_{\rho, w, z}T(x,y,x',y',\rho,w,z)\\
    =&\max_\rho\max_{w,z} T(x,y,x',y',\rho,w,z)\\
    =&\max_\rho (\rho\max_w( \T{w}Ry'-\T{x}Ry'-\T{x'}Ry+\T{x}Ry)\\
         \phantom{=}&\quad+(1-\rho)\max_z(\T{x'}Cz-\T{x}Cy'-\T{x'}Cy+\T{x}Cy))\\
    =&\max_\rho\left(\rho T_1(x,y,x',y')+(1-\rho)T_2(x,y,x',y')\right)\\
    =&\max\left\{T_1(x,y,x',y'), T_2(x,y,x',y')\right\}.
\end{align*}
Thus
\[Df(x, y, x', y') = \max_{\rho, w, z}T(x,y,x',y',\rho,w,z) - f(x,y).\]

Now define
\[V(x, y) := \min_{x',y'}\max_{\rho,w,z} T(x,y,x',y',\rho,w,z).\]
By \Cref{def:sp}, $(x, y)$ is a \SP if and only if $V(x, y) \geq f(x, y)$. Further, by substituting $x', y'$ with $x,y$, we have $V(x, y) \leq \max_{\rho,w,z} T(x, y, x, y, \rho, w, z)$, which is the same as
$f(x, y)$ because $T_1(x,y,x,y)=f_R(x,y)$ and $T_2(x,y,x,y)=f_C(x,y)$ by their definitions.

Therefore, we have the following proposition.

\begin{proposition}\label{prop:sp-equi}
$(x, y)$ is a stationary point if and only if 
\[V(x, y) = f_R(x, y) = f_C(x, y).\]
\end{proposition}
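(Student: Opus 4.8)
The plan is to extract the claim from the discussion already carried out in the text and to organize it into a clean two-direction equivalence, the only genuinely new ingredient being the verification that $V(x,y)=f_R(x,y)=f_C(x,y)$ \emph{forces} $f_R(x,y)=f_C(x,y)$ rather than merely using it. From the excerpt we already have two facts in hand: first, that $(x,y)$ is a \SP if and only if $Df(x,y,x',y')\ge 0$ for all $(x',y')$ (\Cref{def:sp}); second, whenever $f_R(x,y)=f_C(x,y)$ we have the explicit identity $Df(x,y,x',y')=\max_{\rho,w,z}T(x,y,x',y',\rho,w,z)-f(x,y)$, hence minimizing over $(x',y')$ shows that the \SP condition is equivalent to $V(x,y)\ge f(x,y)$, and the substitution $x'=x,y'=y$ always gives $V(x,y)\le f(x,y)$; together these yield $V(x,y)=f(x,y)$.

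\textbf{Direction 1 (stationary $\Rightarrow$ the displayed equality).} First I would show that $f_R(x,y)=f_C(x,y)$ is a necessary condition for stationarity; this is exactly the content flagged as \Cref{prop:construct-sp} in the text, so I would either cite it or reprove it in one line: if, say, $f_R(x,y)>f_C(x,y)$, then $f=f_R$ in a neighborhood, and one can decrease $f_R$ strictly by moving $x$ toward a best response to $y$ (making $\T{x}Ry$ larger while keeping $\max(Ry)$ fixed and $f_C$ still below), contradicting $Df\ge 0$. Once $f_R(x,y)=f_C(x,y)$ is known, the explicit-form computation in the excerpt applies verbatim: stationarity gives $V(x,y)\ge f(x,y)$, the trivial substitution gives $V(x,y)\le f(x,y)$, so $V(x,y)=f(x,y)=f_R(x,y)=f_C(x,y)$.

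\textbf{Direction 2 (the displayed equality $\Rightarrow$ stationary).} Assume $V(x,y)=f_R(x,y)=f_C(x,y)$. The equality $f_R(x,y)=f_C(x,y)$ is precisely the hypothesis under which the identity $Df(x,y,x',y')=\max_{\rho,w,z}T(x,y,x',y',\rho,w,z)-f(x,y)$ holds, and $f(x,y)=\max\{f_R,f_C\}$ equals the common value $V(x,y)$. Therefore for every $(x',y')$, $Df(x,y,x',y')=\max_{\rho,w,z}T(x,y,x',y',\rho,w,z)-V(x,y)\ge \min_{x',y'}\max_{\rho,w,z}T(x,y,x',y',\rho,w,z)-V(x,y)=V(x,y)-V(x,y)=0$, so by \Cref{def:sp} the point is stationary.

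\textbf{Main obstacle.} The computations reducing $Df$ to $\max_{\rho,w,z}T-f$ and the minimax interchange are routine (they are done in the excerpt and the appendix), so the only nontrivial step is Direction~1's claim that stationarity implies $f_R=f_C$; that is the crux, and if \Cref{prop:construct-sp} is not yet available at this point in the paper I would need to prove it here via the best-response perturbation argument sketched above, checking carefully that the perturbation keeps us inside $\Delta_m\times\Delta_n$ and that the non-active function stays strictly dominated for small step sizes so that $f$ genuinely decreases.
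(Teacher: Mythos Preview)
Your proposal is correct and follows essentially the same approach as the paper: the paper presents Proposition~\ref{prop:sp-equi} as an immediate consequence of the preceding discussion (the identity $Df=\max_{\rho,w,z}T-f$ under $f_R=f_C$, the trivial bound $V\le f$ from substituting $(x',y')=(x,y)$, and the equivalence of stationarity with $V\ge f$), together with a forward reference to Proposition~\ref{prop:construct-sp} for the necessity of $f_R=f_C$. You have correctly identified that the only substantive step is this necessity argument, and your best-response perturbation sketch for it matches exactly what the paper later carries out in the proof of Proposition~\ref{prop:construct-sp}.
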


By making some variations on $T$ (see \Cref{app:MissingCalculations} for detail calculations), we can show that $T$ is a bilinear form in $(\rho w,(1-\rho)z)$ and $(x',y')$, i.e., $T$ is equal to
\[(\rho \T{w}, (1-\rho)\T{z})\,G(x,y)\begin{pmatrix}y'\\x'\end{pmatrix}\]
for some $(m+n)\times(m+n)$ matrix $G(x,y)$. Thus by applying von Neumann's minimax theorem~\cite{neumann1928theorie}, we have
\begin{proposition}\label{prop:dual}
\[V(x,y) = \max_{\rho,w,z}\min_{x',y'} T(x,y,x',y',\rho,w,z), \]
and there exist $\rho_0,w_0,z_0$ such that
\[V(x,y) = \min_{x',y'} T(x,y,x',y',\rho_0,w_0,z_0).\]
\end{proposition}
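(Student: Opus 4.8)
The plan is to reduce Proposition~\ref{prop:dual} to a minimax theorem for bilinear forms on compact convex polytopes, exploiting the bilinear representation of $T$ recorded just above the statement. The key move is the change of variables $a:=\rho w$ and $b:=(1-\rho)z$. As $\rho$ ranges over $[0,1]$, $w$ over $\{w\in\Delta_m:\supp(w)\subseteq S_R(y)\}$, and $z$ over $\{z\in\Delta_n:\supp(z)\subseteq S_C(x)\}$, the pair $(a,b)$ ranges over exactly
\[P(x,y):=\left\{(a,b): a\geq 0,\ \supp(a)\subseteq S_R(y),\ b\geq 0,\ \supp(b)\subseteq S_C(x),\ \T{a}e_m+\T{b}e_n=1\right\},\]
which is a nonempty simplex, hence compact and convex. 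Under this substitution $T(x,y,x',y',\rho,w,z)$ becomes $\widetilde T(x,y,x',y',a,b)=(\T a,\T b)\,G(x,y)\binom{y'}{x'}$, which is linear in $(a,b)$ for fixed $(x',y')$ and linear in $(x',y')$ for fixed $(a,b)$; its other argument $(x',y')$ ranges over the compact convex set $\Delta_m\times\Delta_n$.

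Next I would apply the minimax theorem for bilinear forms on compact convex polytopes (von Neumann's theorem, extended in the routine way to the product domain $\Delta_m\times\Delta_n$ — equivalently Sion's minimax theorem, or strong LP duality) to $\widetilde T$ viewed as a function of $(a,b)\in P(x,y)$ and $(x',y')\in\Delta_m\times\Delta_n$. This gives
\[\max_{(a,b)\in P(x,y)}\ \min_{x',y'}\ \widetilde T\ =\ \min_{x',y'}\ \max_{(a,b)\in P(x,y)}\ \widetilde T,\]
and, since $(a,b)\mapsto\min_{x',y'}\widetilde T$ is continuous (a pointwise minimum over the compact set $\Delta_m\times\Delta_n$ of the jointly continuous $\widetilde T$), it attains its maximum on $P(x,y)$ at some $(a_0,b_0)$. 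To translate back, note that for each fixed $(x',y')$ the reparametrization is onto $P(x,y)$ and $T=\widetilde T$ under it, so the two value sets coincide; hence $\max_{(a,b)}\widetilde T=\max_{\rho,w,z}T$ and also $\max_{\rho,w,z}\min_{x',y'}T=\max_{(a,b)}\min_{x',y'}\widetilde T$, while $\min_{x',y'}\max_{(a,b)}\widetilde T=\min_{x',y'}\max_{\rho,w,z}T=V(x,y)$ by the definition of $V$. Combining these with the minimax equality yields the first claim $V(x,y)=\max_{\rho,w,z}\min_{x',y'}T$. For the second claim, recover $(\rho_0,w_0,z_0)$ from $(a_0,b_0)$ by setting $\rho_0:=\T{a_0}e_m$, with $w_0:=a_0/\rho_0$ if $\rho_0>0$ (and $w_0$ any distribution supported on $S_R(y)$ otherwise) and $z_0:=b_0/(1-\rho_0)$ if $\rho_0<1$ (any distribution supported on $S_C(x)$ otherwise); then $\rho_0 w_0=a_0$, $(1-\rho_0)z_0=b_0$, so $\min_{x',y'}T(x,y,x',y',\rho_0,w_0,z_0)=\min_{x',y'}\widetilde T(x,y,x',y',a_0,b_0)=V(x,y)$.

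I do not expect a genuine obstacle: the statement is essentially strong LP duality dressed up as a minimax, and the bilinear representation of $T$ has already been verified in \Cref{app:MissingCalculations}. The only points meriting a line of care are (i) checking that the map $(\rho,w,z)\mapsto(\rho w,(1-\rho)z)$ really does cover all of $P(x,y)$, including the degenerate boundary cases $\rho\in\{0,1\}$ in which one of $w,z$ becomes irrelevant; and (ii) noting that, although von Neumann's theorem is usually stated for a pair of simplices while $\Delta_m\times\Delta_n$ is a product of two simplices, this is immaterial because $\widetilde T$ is affine in $(x',y')$ — in fact a sum of a term depending only on $x'$ and a term depending only on $y'$ — so one may invoke Sion's theorem directly or split the inner minimization over $\Delta_m$ and $\Delta_n$ separately. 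Neither point affects the substance of the argument.
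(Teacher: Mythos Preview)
Your proposal is correct and essentially identical to the paper's own argument: the paper rewrites $T$ as a bilinear form in $(\rho w,(1-\rho)z)$ and $(x',y')$ via the matrix $G(x,y)$ and then invokes von Neumann's minimax theorem, which is precisely what you do with the change of variables $(a,b)=(\rho w,(1-\rho)z)$. Your extra care with the degenerate cases $\rho_0\in\{0,1\}$ and the product-of-simplices domain only makes the argument more explicit.
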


We call the tuple $(\rho_0,w_0,z_0)$ a \emph{dual solution} as it can be calculated by dual linear programming. See \Cref{app:MissingCalculations} for the calculations in detail. 

In the following context, we fix $(x^*, y^*)$ to denote a \SP and use $(\rho^*,w^*,z^*)$ to denote the corresponding dual solution about $G(x^*,y^*)$. 

As we will see in \Cref{remark:1/2-sp}, a \SP may only achieve an approximation ratio of $1/2$ in the worst case. To find a better solution, we adjust the \SP to another point lying in the following square:
\[\Lambda := \{(\alpha w^* + (1 - \alpha)x^*, \beta z^* + (1 - \beta)y^*): \alpha, \beta \in [0, 1]\}.\]

Different adjustments on $\Lambda$ derive different algorithms to find an approximate \NEn. 
We present three of these methods below, of which the first one is the solution by the TS algorithm, and the other two are for the sake of analysis in \Cref{sec:TI}. For simplicity of the presentation, we define the following two subsets of the boundary of $\Lambda$:
\begin{align*}
    \Gamma_1 &:= \{(\alpha x^*+(1-\alpha)w^*,y^*):\alpha\in[0,1]\}\cup\{(x^*,\beta y^*+(1-\beta)z^*:\beta\in[0,1]\}, \\
    \Gamma_2 &:= \{(\alpha x^*+(1-\alpha)w^*,z^*):\alpha\in[0,1]\}\cup\{(w^*,\beta y^*+(1-\beta)z^*:\beta\in[0,1]\}.
\end{align*}

\begin{enumerate}[fullwidth, label=\textbf{Method \arabic*}. ]
    \item \begin{interposition}[Method in the original TS algorithm~\cite{tsaknakis2008optimization}]
The first method is the original adjustment given by~\cite{tsaknakis2008optimization} (known as \emph{the TS algorithm} in literature). Define the quantities
\begin{align*}
    \lambda&:=\min_{y':\supp(y')\subseteq S_C(x^*)}\{\T{(w^*-x^*)}Ry'\},\\
    \mu&:=\min_{x':\supp(x')\subseteq S_R(y^*)}\{\T{x'}C(z^*-y^*)\}.
\end{align*}
The adjusted strategy pair is
\begin{align*}
\AdjFi:=\begin{cases}\left(\frac{1}{1+\lambda-\mu}w^*+\frac{\lambda-\mu}{1+\lambda-\mu}x^*,z^*\right),&\lambda\geq\mu,\\
\left(w^*,\frac{1}{1+\mu-\lambda}z^*+\frac{\mu-\lambda}{1+\mu-\lambda}y^*\right),&\lambda<\mu.\end{cases}
\end{align*}
\end{interposition}
\item \begin{interposition}[Minimum point on $\Gamma_2$]
For the second method, define
\begin{align*}
    \alpha^*&:=\argmin_{\alpha\in[0,1]} f(\alpha w^*+(1-\alpha)x^*,z^*),\\
    \beta^*&:=\argmin_{\beta\in[0,1]} f(w^*,\beta z^*+(1-\beta)y^*).
\end{align*}
From a geometric view, our goal is to find the minimum point of $f$ on $\Gamma_2$. The strategy pair given by the second method is 
\begin{align*}
  \AdjSe:=\begin{cases}\left(\alpha^*w^*+(1-\alpha^*)x^*,z^*\right),& f_C(w^*,z^*)\geq f_R(w^*,z^*),\\
\left(w^*,\beta^*z^*+(1-\beta^*)y^*\right),& f_C(w^*,z^*)<f_R(w^*,z^*).
\end{cases}  
\end{align*}
When we write the strategy into such two different cases, it is not trivial to see $\AdjSe$ is indeed the minimum point of $f$ on $\Gamma_2$. We will prove this fact in \Cref{lemma:best-on-boundary}.
\end{interposition}
\item \begin{interposition}[Intersection point of linear bound of $f_R$ and $f_C$ on $\Gamma_2$]
As we will see later, $\AdjSe$ always behaves no worse than $\AdjFi$ theoretically. However, it is rather hard to quantitatively analyze the exact approximation ratio of $\AdjSe$ given in the second method. Therefore, we propose a third adjustment method.  It is not hard to see directly from definitions that $f_R(x,y)$, $f_C(x,y)$ and $f(x,y)$ are all convex and linear-piecewise functions with either $x$ or $y$ fixed. Therefore, on the boundary of $\Lambda$, they can be bounded by linear functions. Formally, for $0 \leq p, q \leq 1$, we have
\begin{align}
f_R(p w^*+(1-p)x^*,z^*)&= (f_R(w^*,z^*)-f_R(x^*,z^*))p+f_R(x^*,z^*),\label{derive1}\\
f_C(p w^*+(1-p)x^*,z^*)&\leq f_C(w^*,z^*)p;\label{derive2}\\
f_C(w^*,q z^*+(1-q) y^*)&= (f_C(w^*,x^*)-f_C(w^*,y^*))q+f_C(w^*,y^*),\label{derive3}\\
f_R(w^*,q z^*+(1-q) y^*)&\leq f_R(w^*,z^*)q.\label{derive4}
\end{align}
Taking the minimum of terms on the right hand sides of \cref{derive1} and \cref{derive2}, \cref{derive3} and \cref{derive4} respectively, i.e.,
\[p^*\in\argmin_{p\in\{0,1\}}\min\left\{(f_R(w^*,z^*)-f_R(x^*,z^*))p+f_R(x^*,z^*), f_C(w^*,z^*)p\right\},\]
\[q^*\in\argmin_{q\in\{0,1\}}\min\left\{(f_C(w^*,x^*)-f_C(w^*,y^*))q+f_C(w^*,y^*), f_R(w^*,z^*)q\right\},\]
we derive the following quantities\footnote{The denominator of $p^*$ or $q^*$ may be zero. In this case, we simply define $p^*$ or $q^*$ to be $0$.}
\begin{align*}
p^*&:=\frac{f_R(x^*,z^*)}{f_R(x^*,z^*)+f_C(w^*,z^*)-f_R(w^*,z^*)},\\
q^*&:=\frac{f_C(w^*,y^*)}{f_C(w^*,y^*)+f_R(w^*,z^*)-f_C(w^*,z^*)}.
\end{align*}
The adjusted strategy pair is now defined as
\begin{align*}
  \AdjTh:=\begin{cases}\left(p^*w^*+(1-p^*)x^*,z^*\right),& f_C(w^*,z^*)\geq f_R(w^*,z^*),\\
\left(w^*,q^*z^*+(1-q^*)y^*\right),& f_C(w^*,z^*)<f_R(w^*,z^*).
\end{cases}  
\end{align*}

In \Cref{sec:TI}, we will see that
it is easy to analyze this strategy pair quantitatively, and it is a key auxiliary structure that brings about a thorough worst-case analysis of the TS algorithm.
\end{interposition}
\end{enumerate}

We remark that the outcome of all these three methods can be calculated in polynomial-time of $m$ and $n$.

\section{A Tight Instance for All Three Methods}\label{sec:TI}

We now show the tight bound of the TS algorithm that we presented in the previous section, with the help of two auxiliary adjustment methods proposed in \Cref{sec:algo}. \cite{tsaknakis2008optimization} has shown that the TS algorithm gives an approximation ratio of no greater than $b\approx 0.3393$. 
In this section, we construct a game on which the TS algorithm attains the tight bound $b \approx0.3393$. In detail, the payoff matrices of the game are presented in \cref{ex:tight-bound}, where $b\approx 0.3393$ is the tight bound, $\lambda_0 \approx 0.582523$ and $\mu_0 \approx 0.812815$ are the real numbers to be derived in \Cref{lemma:ineq-for-est}. The game attains the tight bound $b\approx 0.3393$ at the \SP $x^*=y^*=\T{(1,0,0)}$ with dual solution $\rho^*=\mu_0/(\lambda_0+\mu_0)$, $w^*=z^*=\T{(0,0,1)}$. Additionally, the bound stays $b\approx 0.3393$ for this game even when we try to find the minimum point of $f$ on the entire space of square $\Lambda$.

\begin{align}\label{ex:tight-bound}
    R=\begin{pmatrix}0.1&0&0\\0.1+b&1&1\\0.1+b&\lambda_0&\lambda_0\end{pmatrix},\qquad
C=\begin{pmatrix}0.1&0.1+b&0.1+b\\0&1&\mu_0\\0&1&\mu_0\end{pmatrix}.
\end{align}
The formal statement of this result is presented in the following \Cref{thm:worst-case-exist}.
\begin{theorem}[Tightness of the generalized TS algorithm]\label{thm:worst-case-exist}
There exists a game such that for some stationary point $(x^*,y^*)$ with dual solution $(\rho^*,w^*,z^*)$, 
\[b=f(x^*,y^*)=f\AdjTh = f\AdjSe\leq f(\alpha w^*+(1-\alpha)x^*,\beta z^*+(1-\beta)y^*)\]
holds for any $\alpha,\beta\in[0,1]$.
\end{theorem}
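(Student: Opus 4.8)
The plan is to verify the theorem by direct computation at the claimed stationary point and then argue that nothing on $\Lambda$ beats the value $b$. First I would confirm that $(x^*,y^*)=(\T{(1,0,0)},\T{(1,0,0)})$ is indeed a stationary point of $f$ for the game in \cref{ex:tight-bound}. By \Cref{prop:sp-equi} this reduces to checking $V(x^*,y^*)=f_R(x^*,y^*)=f_C(x^*,y^*)$. The value $f_R$ and $f_C$ at $(x^*,y^*)$ are easy to read off: $Ry^*$ is the first column of $R$, $\T{C}x^*$ is the first row of $C$, and by the structure of the matrices one gets $\max(Ry^*)-\T{x^*}Ry^*=(0.1+b)-0.1=b$ and likewise $f_C(x^*,y^*)=b$. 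The harder half is $V(x^*,y^*)=b$: using \Cref{prop:dual} it suffices to exhibit the proposed dual solution $(\rho^*,w^*,z^*)$ with $\rho^*=\mu_0/(\lambda_0+\mu_0)$, $w^*=z^*=\T{(0,0,1)}$ and show $\min_{x',y'}T(x^*,y^*,x',y',\rho^*,w^*,z^*)=b$, which is a linear program in $(x',y')$ over the simplices; the index sets are $S_R(y^*)=\suppmax(Ry^*)=\{2,3\}$ and $S_C(x^*)=\suppmax(\T{C}x^*)=\{2,3\}$, consistent with $\supp(w^*),\supp(z^*)\subseteq\{2,3\}$ — wait, $w^*=z^*=\T{(0,0,1)}$ has support $\{3\}\subseteq\{2,3\}$, so this is admissible. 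I would plug in and check the minimum over $(x',y')$ equals $b$; the specific values $\lambda_0\approx0.582523$, $\mu_0\approx0.812815$ are presumably chosen exactly so that this LP is balanced, and \Cref{lemma:ineq-for-est} should supply the defining equations for $\lambda_0,\mu_0$.

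Next I would compute the three adjusted strategy pairs. Because $w^*=z^*=\T{(0,0,1)}$, the square $\Lambda$ is spanned by the first and third coordinates in each player. I would evaluate $\lambda=\min_{\supp(y')\subseteq\{2,3\}}\T{(w^*-x^*)}Ry'$ and $\mu=\min_{\supp(x')\subseteq\{2,3\}}\T{x'}C(z^*-y^*)$; the matrices are rigged so that these minima equal $\lambda_0$ and $\mu_0$ respectively (this is the content I'd expect from \Cref{lemma:ineq-for-est}). Since $\mu_0>\lambda_0$, the TS rule gives $\AdjFi=(w^*,\frac{1}{1+\mu_0-\lambda_0}z^*+\frac{\mu_0-\lambda_0}{1+\mu_0-\lambda_0}y^*)$, and one computes $f$ at this point and checks it equals $b$. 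For $\AdjTh$ I would compute $f_R(x^*,z^*),f_C(w^*,z^*),f_R(w^*,z^*),f_C(w^*,y^*)$ from the matrices, determine which branch of the case split applies, form $p^*$ or $q^*$, and verify $f\AdjTh=b$. For $\AdjSe$ I would use \Cref{lemma:best-on-boundary} (that $\AdjSe$ is the true minimizer of $f$ on $\Gamma_2$) together with the ordering $f\AdjSe\le f\AdjTh$ (stated in the excerpt as "$\AdjSe$ always behaves no worse than $\AdjTh$"... actually it says no worse than $\AdjFi$) to sandwich $f\AdjSe$ between $b$ from above and $f(x^*,y^*)=b$ from below; since on $\Gamma_2$ one endpoint is $(x^*,z^*)$ whose $f$ value I'd compute, and since a stationary point is a local min of $f$ along every feasible direction, $f$ cannot dip below $b$ on any segment emanating from $(x^*,y^*)$.

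The final and most substantive step is the global claim: $f(\alpha w^*+(1-\alpha)x^*,\beta z^*+(1-\beta)y^*)\ge b$ for all $\alpha,\beta\in[0,1]$, i.e., $b$ is the minimum of $f$ over the whole square $\Lambda$, not just on its boundary $\Gamma_2$. The approach is to write $f$ explicitly on $\Lambda$ as a function of $(\alpha,\beta)$. Since each of $Ry$, $\T{C}x$, $\T{x}Ry$, $\T{x}Cy$ is affine (bilinear) in $(\alpha,\beta)$ with the $3\times3$ matrices substituted, $f_R$ and $f_C$ become maxima of finitely many bilinear functions of $(\alpha,\beta)$ minus a bilinear term, hence piecewise-quadratic; I would identify the pieces, reduce to checking $f\ge b$ on each polygonal cell (a 2D optimization with quadratic objective, handled by checking vertices, edges, and interior critical points), and confirm the minimum $b$ is attained only at $(\alpha,\beta)=(0,0)$ (plus possibly the TS-adjustment point). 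I expect this is where the carefully chosen constants $b,\lambda_0,\mu_0$ and the precise pattern of entries $0.1$, $0.1+b$, $1$, $\lambda_0$, $\mu_0$ do all the work, so the obstacle is less conceptual than bookkeeping: organizing the case analysis over the cells of $\Lambda$ so that every cell's minimum is provably $\ge b$, and pinning down that the two "active constraint" structures (which column attains $\max(Ry)$, which row attains $\max(\T{C}x)$) switch exactly at the stationary point. An alternative cleaner route for this last step — which I'd try first — is to exploit that $(x^*,y^*)$ being a stationary point already forces $Df(x^*,y^*,x',y')\ge0$ in all directions, in particular toward every corner of $\Lambda$, and then leverage convexity of $f$ in each coordinate separately (stated in the excerpt) to propagate the lower bound $b$ from the corner $(x^*,y^*)$ across the whole square; the catch is that $f$ is only separately convex, not jointly convex, so this argument needs the explicit two-variable structure anyway and likely still bottoms out in the piecewise analysis above.
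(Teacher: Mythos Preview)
Your overall plan (verify stationarity, compute the adjusted points, then bound $f$ on all of $\Lambda$) matches the paper's, but you miss the structural simplification that makes the last two steps clean, and there are a couple of slips.

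First, a small one: you have the ordering backwards. The paper's \Cref{lemma:ineq-for-est} gives $\lambda_0\approx 0.8128$ and $\mu_0\approx 0.5825$, so $\lambda_0>\mu_0$ (hence also $\lambda>\mu$ in Method~1, and $f_C(w^*,z^*)>f_R(w^*,z^*)$).

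Second, your sandwich argument for $f\AdjSe$ does not work as written. You invoke ``a stationary point is a local min of $f$ along every feasible direction, so $f$ cannot dip below $b$ on any segment emanating from $(x^*,y^*)$.'' But $(x^*,y^*)\notin\Gamma_2$ (every point of $\Gamma_2$ has second coordinate $z^*$ or first coordinate $w^*$), and in any case stationarity only gives a nonnegative \emph{directional derivative} at $(x^*,y^*)$, not a global lower bound along a segment. The paper instead checks the combinatorial condition $S_C(x^*)\cap S_C(w^*)\neq\varnothing$; by \Cref{lemma:linearize} this makes $F_C(\alpha,\beta)$ \emph{linear} in $\alpha$, and then \Cref{prop:stgy-comp} gives $f\AdjSe=f\AdjTh$ outright. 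The value $f\AdjTh=b$ then comes from checking $\lambda^*=\lambda_0$, $\mu^*=\mu_0$, $f_R(x^*,z^*)=f_C(w^*,y^*)=1$ and invoking \Cref{lemma:est-for-stgy} together with \Cref{lemma:ineq-for-est}.

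Third, and this is the main point, your plan for the global bound on $\Lambda$ (piecewise-quadratic cells, check vertices/edges/interior) is heavier than needed because you overlook what the instance was built to satisfy. The paper also verifies $S_R(y^*)\cap S_R(z^*)\neq\varnothing$, so by \Cref{lemma:linearize} $F_R(\alpha,\beta)$ is linear in $\beta$. Combined with the previous paragraph, \emph{both} $F_R$ and $F_C$ are genuinely linear (not merely piecewise linear) in each of $\alpha,\beta$ over all of $\Lambda$. Consequently, for each fixed $\beta$ the minimum of $f=\max\{F_R,F_C\}$ over $\alpha$ is attained at the unique intersection of two affine functions, giving an explicit $\alpha=x^o(\beta)=\beta/(1+(\lambda_0-\mu_0)\beta)$ and a closed-form $g(\beta)=f(x^o(\beta),y_\beta)$. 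Showing $g(\beta)\ge b$ reduces to a single quadratic inequality $h(\beta)=(1-2b)\beta+(b(1+\mu_0-\lambda_0)-\mu_0)\beta^2\ge 0$, which follows from concavity of $h$ and $h(0)=0$, $h(1)>0$. No cell decomposition is required. Your alternative route via separate convexity, as you note yourself, cannot close the gap since $f$ is not jointly convex; the linearity conditions of \Cref{lemma:linearize} are what replace joint convexity here.
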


The proof of \Cref{thm:worst-case-exist} is done by verifying the tight instance \cref{ex:tight-bound} above. However, it is not direct to verify this fact. More importantly, it is far from triviality how this tight instance comes about. Below, we present the thread of our idea by a series of lemmas and propositions that help us find the tight instance \cref{ex:tight-bound}.


We sketch our preparation into three steps. First, we give an equivalent condition of the \SP in \Cref{prop:construct-sp}, which makes it easier to construct payoff matrices with a given \SP and its corresponding dual solution. Second, we draw figures of functions $f_R$ and $f_C$ on $\Lambda$ and subsequently reveal the relationship among the three adjusting strategy pairs 
presented in \Cref{sec:algo}. Finally, we present some estimations over $f$ and show when these estimations are exactly tight. 

During the preparations (or more precisely, attempts), we have found more accurate constraints for tight instances. A parameterization method arises naturally, and thus the tight instances were found by trying very few cases.\footnote{Such a procedure can be written into an algorithm, as we will show in \Cref{sec:gen-TI}.}

We have seen that stationary points are closely  linked to von Neumann minimax theorem. We here utilize it again but in a more delicate way. Specifically, the following proposition shows how to construct payoff matrices with a given stationary point $(x^*,y^*)$ and its dual solution $(\rho^*, w^*, z^*)$.
\begin{proposition}\label{prop:construct-sp}
Let
\begin{align*}
    A(\rho,y,z)&:=-\rho Ry+(1-\rho)C(z - y),\\
    B(\rho,x,w)&:=\rho \T{R}(w - x)-(1-\rho)\T{C}x.
\end{align*}
Then $(x^*,y^*)$ is a stationary point if and only if $f_R(x^*,y^*)=f_C(x^*,y^*)$ and there exist $\rho^*, w^*, z^*$ such that
\begin{align}
\supp(x^*)&\subset\suppmin(A(\rho^*,y^*,z^*)),\label{cond:sp-1}\\
\supp(y^*)&\subset\suppmin(B(\rho^*,x^*,w^*)).\label{cond:sp-2}
\end{align}
\end{proposition}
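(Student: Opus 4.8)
The plan is to reduce \Cref{prop:construct-sp} to the value-based characterization of stationary points already set up above, namely \Cref{prop:sp-equi} (that $(x,y)$ is stationary iff $V(x,y)=f_R(x,y)=f_C(x,y)$) together with the minimax identity and dual-solution existence of \Cref{prop:dual}. The observation that makes this work is that, with $(x^*,y^*)$ held fixed, $T(x^*,y^*,x',y',\rho,w,z)$ is an affine function of $(x',y')\in\Delta_m\times\Delta_n$, so the inner minimization defining $V(x^*,y^*)$ is a linear program over a product of simplices; conditions \cref{cond:sp-1} and \cref{cond:sp-2} are then precisely the statement that $(x^*,y^*)$ solves this linear program for a suitable dual choice $(\rho^*,w^*,z^*)$ at value $f(x^*,y^*)$.

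First I would prove that $f_R(x^*,y^*)=f_C(x^*,y^*)$ is necessary for a stationary point (this is also the regime in which the formula for $Df$, and hence \Cref{prop:sp-equi} and \Cref{prop:dual}, were derived). Suppose for contradiction that, say, $f_R(x^*,y^*)>f_C(x^*,y^*)$; then $f_R(x^*,y^*)>0$, i.e.\ $\max(Ry^*)>\T{x^*}Ry^*$, so $\supp(x^*)$ cannot be contained in $\suppmax(Ry^*)$ (else $\T{x^*}Ry^*=\max(Ry^*)$), hence there is $j\in\supp(x^*)\setminus\suppmax(Ry^*)$; also pick any $i\in\suppmax(Ry^*)$. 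Moving an infinitesimal amount of probability mass from coordinate $j$ to coordinate $i$ of $x^*$ keeps the point in $\Delta_m$ for small step sizes, and since $f_R(\cdot,y^*)$ and $f_C(\cdot,y^*)$ are continuous with $f_R(x^*,y^*)>f_C(x^*,y^*)$ we have $f=f_R$ along this move near $(x^*,y^*)$; the scaled directional derivative of $f$ is then $-\bigl((Ry^*)_i-(Ry^*)_j\bigr)$ times a positive scalar, hence negative, contradicting \Cref{def:sp}. The case $f_C(x^*,y^*)>f_R(x^*,y^*)$ is symmetric in the column player.

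Next I would use the rewriting of $T$ recorded just before \Cref{prop:construct-sp}: as a function of $(x',y')$, one has $T(x^*,y^*,x',y',\rho,w,z)=\T{x'}A(\rho,y^*,z)+\T{y'}B(\rho,x^*,w)+c$, where $c$ does not depend on $(x',y')$. Because the feasible region is a product of simplices and the objective separates, $\min_{x',y'}T$ equals the least coordinate of $A(\rho,y^*,z)$ plus the least coordinate of $B(\rho,x^*,w)$ plus $c$, and a feasible $(x',y')$ attains this joint minimum if and only if $\supp(x')\subseteq\suppmin(A(\rho,y^*,z))$ and $\supp(y')\subseteq\suppmin(B(\rho,x^*,w))$ — the elementary fact that a probability vector minimizes a linear functional over a simplex exactly when it is supported on the coordinates of least weight. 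Moreover, when $f_R(x^*,y^*)=f_C(x^*,y^*)$, for every admissible $(\rho,w,z)$ one has $T(x^*,y^*,x^*,y^*,\rho,w,z)=\rho f_R(x^*,y^*)+(1-\rho)f_C(x^*,y^*)=f(x^*,y^*)$, using $\supp(w)\subseteq S_R(y^*)$ and $\supp(z)\subseteq S_C(x^*)$, as already noted above.

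It then remains to assemble the pieces. For the ``if'' direction, given $f_R(x^*,y^*)=f_C(x^*,y^*)$ and admissible $(\rho^*,w^*,z^*)$ meeting \cref{cond:sp-1} and \cref{cond:sp-2}, the previous paragraph shows $(x^*,y^*)$ attains $\min_{x',y'}T(x^*,y^*,\cdot,\cdot,\rho^*,w^*,z^*)$, so this minimum equals $T(x^*,y^*,x^*,y^*,\rho^*,w^*,z^*)=f(x^*,y^*)$; by \Cref{prop:dual} this gives $V(x^*,y^*)\ge f(x^*,y^*)$, while substituting $(x',y')=(x^*,y^*)$ in the definition of $V$ gives $V(x^*,y^*)\le\max_{\rho,w,z}T(x^*,y^*,x^*,y^*,\rho,w,z)=f(x^*,y^*)$, so $V(x^*,y^*)=f(x^*,y^*)$, which together with $f_R(x^*,y^*)=f_C(x^*,y^*)$ lets \Cref{prop:sp-equi} conclude stationarity. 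For the ``only if'' direction, a stationary point has $f_R(x^*,y^*)=f_C(x^*,y^*)$ (first step) and $V(x^*,y^*)=f(x^*,y^*)$ (by \Cref{prop:sp-equi}); picking a dual solution $(\rho^*,w^*,z^*)$ from \Cref{prop:dual} gives $\min_{x',y'}T(x^*,y^*,\cdot,\cdot,\rho^*,w^*,z^*)=f(x^*,y^*)=T(x^*,y^*,x^*,y^*,\rho^*,w^*,z^*)$, so $(x^*,y^*)$ attains that minimum and the support characterization of the previous paragraph gives \cref{cond:sp-1} and \cref{cond:sp-2}. I expect the main obstacle to be the first step — pinning down that a coordinate $j\in\supp(x^*)\setminus\suppmax(Ry^*)$ really exists and that the perturbed strategy stays in $\Delta_m$ while $f$ still coincides with $f_R$ there; everything after that is routine linear-programming bookkeeping on top of \Cref{prop:sp-equi} and \Cref{prop:dual}.
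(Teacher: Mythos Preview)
Your proposal is correct and follows essentially the same approach as the paper's proof: first show $f_R(x^*,y^*)=f_C(x^*,y^*)$ is necessary by the same mass-shifting perturbation, then rewrite $T$ as $\T{x'}A+\T{y'}B+\text{const}$ and use the elementary support characterization of linear minimizers over a simplex, combined with \Cref{prop:sp-equi} and \Cref{prop:dual}, to get both directions. The only minor caveat is that the affine decomposition $T=\T{x'}A+\T{y'}B+c$ is not actually recorded before \Cref{prop:construct-sp} in the paper (it appears inside the proof itself), so you should derive it rather than cite it.
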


\begin{proof}
First, we show that $f_R(x^*,y^*)=f_C(x^*,y^*)$ is the necessary condition for $(x^*,y^*)$ to be a stationary point. We prove the contraposition. Suppose that $f_R(x^*,y^*)>f_C(x^*,y^*)$, then we have $f_R(x^*,y^*)>0$, which implies that $\max(Ry^*)>\T{x^*}Ry^*$. Therefore $\supp(x^*) \nsubseteq \suppmax(Ry^*)$.

Suppose without loss of generality that $1\in \suppmax(Ry^*)$, $2\notin\suppmax(Ry^*)$ and $2\in\supp(x^*)$. Let $E := \T{(1,-1,0,\cdots,0)}\in\R^m$. For sufficiently small $\theta_0>0$, we have $(x^*+\theta_0 E,y^*)\in \Delta_m\times\Delta_n$ and $f_R(x^*+\theta_0 E,y^*)>f_C(x^*+\theta_0 E,y^*)$. One can verify that
\begin{align*}
    Df(x^*,y^*,x^*+\theta_0 E,y^*)&=Df_R(x^*,y^*,x^*+\theta_0 E,y^*)\\
    &=-\theta_0 \T{E}Ry^*<0.
\end{align*}
Therefore $(x^*, y^*)$ is not a stationary point. The case of $f_C(x^*,y^*)>f_R(x^*,y^*)$ is similar.

Next, we prove that under the condition that $f_R(x^*,y^*)=f_C(x^*,y^*)$, $V(x^*,y^*)=f(x^*,y^*)$ if and only if \cref{cond:sp-1} and \cref{cond:sp-2} hold for some $\rho^*,w^*,z^*$.

Suppose $f(x^*,y^*)=V(x^*,y^*)$, by \Cref{prop:dual}, there exist $\rho^*,w^*,z^*$ such that
\[f(x^*,y^*) = \min_{x',y'} T(x^*,y^*,x',y',\rho^*,w^*,z^*).\]
Rewrite $T$ as
\begin{align*}
    T(x^*,y^*,x',y',\rho^*,w^*,z^*)&=\T{x'}A(\rho^*,y^*,z^*)+\T{B(\rho^*,x^*,w^*)}y' \\ 
    &+ \rho^* \T{x^*}Ry^* + (1 - \rho^*)\T{x^*}Cy^*.
\end{align*}
Notice that 
\begin{align*}
T(x^*,y^*,x^*,y^*,\rho^*,w^*,z^*) &=f(x^*,y^*)\\
&=\min_{x',y'} T(x^*,y^*,x',y',\rho^*,w^*,z^*).
\end{align*}
Therefore, 
\begin{align*}
    \supp(x^*) \subseteq \suppmin A(\rho^*,y^*,z^*), \\
    \supp(y^*) \subseteq \suppmin B(\rho^*,x^*,w^*)
\end{align*}
must hold.

Now suppose \cref{cond:sp-1} and \cref{cond:sp-2} hold. Similarly, we have
\begin{align*}
    V(x^*, y^*) &= \min_{x',y'} T(x^*,y^*,x',y',\rho^*,w^*,z^*) \\
    &= T(x^*,y^*,x^*,y^*,\rho^*,w^*,z^*) = f(x^*,y^*).
\end{align*}
\end{proof}

Note that given stationary point $(x^*,y^*)$ and dual solution $(\rho^*,w^*,z^*)$, we can restrict $R$ and $C$ by simple linear constraints: Finding $(R,C)$ becomes a problem of solving linear equations and linear inequalities.


Now we turn to the second step, i.e., plotting the figure of $f_R$ and $f_C$ on the rectangle $\Lambda$ in general cases.
To avoid burden notations, we define
\[F_I(\alpha,\beta):=f_I(\alpha w^*+(1-\alpha)x^*,\beta z^*+(1-\beta)y^*), I\in\{R, C\}, \alpha, \beta\in[0, 1].\]
Alternatively, we will show the figures of $F_R(\alpha,\beta)$ and $F_C(\alpha,\beta)$.
An instance is presented in \Cref{fig:origin} and \Cref{fig:full_f}.

\begin{figure}[ht]
    \centering
    \includegraphics[scale=0.65]{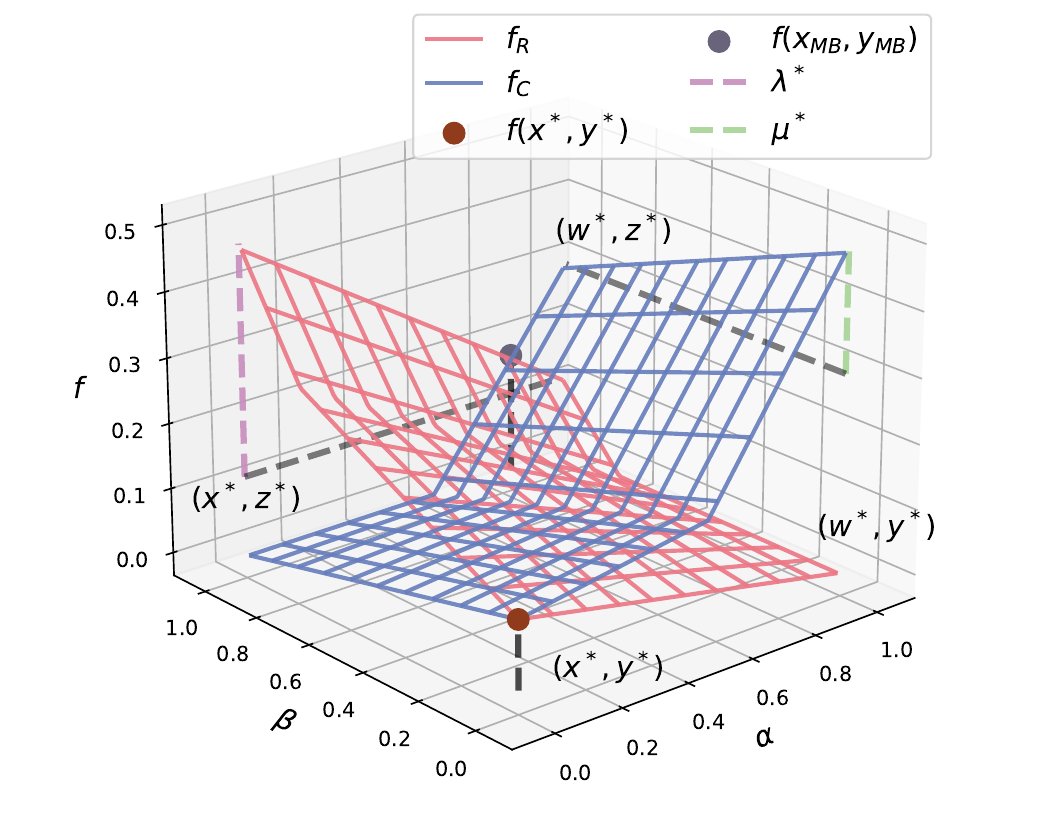}
    \caption{$f_R$ and $f_C$ on rectangle $\Lambda$. We also tag the critical points $(x^*,y^*)$ and $\AdjSe$ and the height differences $\lambda^*$ and $\mu^*$. Text $(x^*,y^*)$ in the figure means $\alpha=0, \beta=0$. Similarly, $(x^*,z^*)$ means $\alpha=0, \beta=1$; $(w^*,y^*)$ means $\alpha=1, \beta=0$; $(w^*,z^*)$ means $\alpha=1, \beta=1$.}
    \label{fig:origin}
\end{figure}

\begin{figure}[ht]
    \centering
    \includegraphics[scale=0.65]{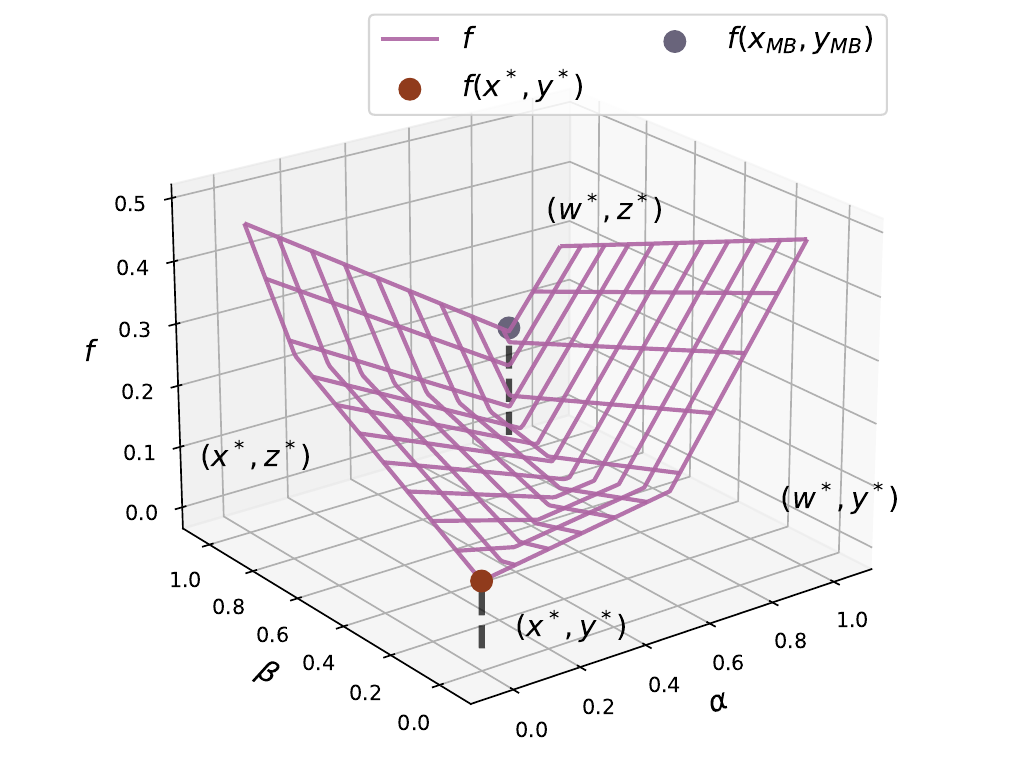}
    \caption{$f$ on rectangle $\Lambda$. We also tag the critical points $(x^*,y^*)$ and $\AdjSe$.}
    \label{fig:full_f}
\end{figure}

To understand why $f_R$ and $f_C$ have such a shape, we first define quantities $\lambda^*$ and $\mu^*$ as follows, which have both geometric and algebraic meanings. Let
\begin{align*}
\lambda^* &:= \T{(w^*-x^*)}Rz^*=f_R(x^*,z^*)-f_R(w^*,z^*)=F_R(0,1)-F_R(1,1),\\
  \mu^* &:=\T{w^*}C(z^*-y^*)=f_C(w^*,y^*)-f_C(w^*,z^*)=F_C(1,0)-F_C(1,1).
\end{align*}

The vertical dashed colored lines in \Cref{fig:origin} show the geometric meaning of these quantities: they are height differences. The following lemma shows that $\lambda^*$ and $\mu^*$ are always nonnegative height differences shown in \Cref{fig:origin}.
\begin{lemma}\label{lemma:lambda*-mu*}
If $\rho^*\in(0,1)$, then $\lambda^*,\mu^*\in[0,1]$. And if $\rho^*\in\{0,1\}$, \SP $(x^*,y^*)$ is a Nash equilibrium. 
\end{lemma}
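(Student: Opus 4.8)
The plan is to prove the two bounds $\lambda^*,\mu^*\le 1$ and $\lambda^*,\mu^*\ge 0$ separately, and then dispose of the degenerate cases $\rho^*\in\{0,1\}$ via \Cref{prop:construct-sp}. The upper bounds need only the normalization of the payoff matrices: writing $\lambda^*=\T{(w^*-x^*)}Rz^*=\T{w^*}Rz^*-\T{x^*}Rz^*$, the term $\T{w^*}Rz^*$ is a convex combination of entries of $R$ and hence lies in $[0,1]$, while $\T{x^*}Rz^*\ge 0$, so $\lambda^*\le 1$; the bound $\mu^*\le 1$ is symmetric. These hold for every $\rho^*$.

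For the lower bounds, under the hypothesis $\rho^*\in(0,1)$, I would use that a stationary point satisfies $V(x^*,y^*)=f(x^*,y^*)$ (\Cref{prop:sp-equi}) together with $f(x^*,y^*)\ge 0$ (since $f_R$ and $f_C$ are each a maximum entry minus a weighted average, hence nonnegative). Because $V(x^*,y^*)=\min_{x',y'}T(x^*,y^*,x',y',\rho^*,w^*,z^*)$, substituting the feasible pair $(x',y')=(x^*,z^*)$ gives $V(x^*,y^*)\le T(x^*,y^*,x^*,z^*,\rho^*,w^*,z^*)$; expanding $T$ at this point, all terms cancel except $\rho^*\,\T{(w^*-x^*)}Rz^*=\rho^*\lambda^*$, so $0\le f(x^*,y^*)=V(x^*,y^*)\le\rho^*\lambda^*$ and dividing by $\rho^*>0$ yields $\lambda^*\ge 0$. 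The symmetric substitution $(x',y')=(w^*,y^*)$ makes $T$ collapse to $(1-\rho^*)\mu^*$, and dividing by $1-\rho^*>0$ yields $\mu^*\ge 0$.

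For the final assertion, suppose $\rho^*=1$. The dual solution $(\rho^*,w^*,z^*)$ is the tuple furnished by \Cref{prop:dual}, and the proof of \Cref{prop:construct-sp} shows it satisfies \cref{cond:sp-1}: $\supp(x^*)\subseteq\suppmin A(\rho^*,y^*,z^*)$. With $\rho^*=1$ one has $A(1,y^*,z^*)=-Ry^*$, so $\suppmin A(1,y^*,z^*)=\suppmax(Ry^*)$, forcing $\T{x^*}Ry^*=\max(Ry^*)$, i.e. $f_R(x^*,y^*)=0$; since $f_R(x^*,y^*)=f_C(x^*,y^*)$ at a stationary point, $f(x^*,y^*)=0$ and $(x^*,y^*)$ is a Nash equilibrium. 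The case $\rho^*=0$ is symmetric, using \cref{cond:sp-2} with $B(0,x^*,w^*)=-\T{C}x^*$ to get $\supp(y^*)\subseteq\suppmax(\T{C}x^*)$ and hence $f_C(x^*,y^*)=0$. I do not foresee a substantial obstacle: the only care needed is the sign bookkeeping in expanding $T$ at $(x^*,z^*)$ and $(w^*,y^*)$, and being explicit that the fixed dual solution really does satisfy the support conditions \cref{cond:sp-1} and \cref{cond:sp-2}, which is precisely what the proof of \Cref{prop:construct-sp} provides.
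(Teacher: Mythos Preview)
Your proposal is correct and matches the paper's argument for the main case $\rho^*\in(0,1)$: both prove $\lambda^*,\mu^*\ge 0$ by evaluating $T$ at the feasible pairs $(x^*,z^*)$ and $(w^*,y^*)$ and using $0\le f(x^*,y^*)=V(x^*,y^*)\le T$.

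For the degenerate case $\rho^*\in\{0,1\}$ you take a slightly longer route than the paper. The paper simply observes that the two inequalities $0\le f(x^*,y^*)\le\rho^*\lambda^*$ and $0\le f(x^*,y^*)\le(1-\rho^*)\mu^*$ hold without any assumption on $\rho^*$ (the evaluation of $T$ at these two points does not use $\rho^*\in(0,1)$), so when $\rho^*\in\{0,1\}$ one of the right-hand sides is $0$, forcing $f(x^*,y^*)=0$ directly. Your argument via \Cref{prop:construct-sp} and the explicit form of $A(1,y^*,z^*)$, $B(0,x^*,w^*)$ also works, but it is a detour: you could have reused the very inequalities you just derived.
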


\begin{proof}
We have $\lambda^*,\mu^*\leq 1$ as all entries of $R$ and $C$ belong to $[0, 1]$. Suppose $\rho^*\in(0,1)$. By \Cref{prop:sp-equi}, $0\leq f(x^*,y^*)\leq T(x^*,y^*,x^*,z^*,\rho^*,w^*,z^*)=\rho^*\lambda^*$, therefore $\lambda^*\geq 0$. Similarly, $0\leq f(x^*,y^*)\leq T(x^*,y^*,w^*,y^*,\rho^*,w^*,z^*)=(1-\rho^*)\mu^*$, therefore $\mu^*\geq 0$.

Suppose $\rho^*\in\{0,1\}$. By the previous inequalities, $0\leq f(x^*,y^*)\leq\min\{\rho^*\lambda^*, (1-\rho^*)\mu^*\}=0$. Therefore $(x^*,y^*)$ is a Nash equilibrium.
\end{proof}

Below we always assume $\rho^*\in(0,1)$, since otherwise a \NE $(x^*,y^*)$ is found, which does not match our goal of finding a tight instance.

The following lemma shows how $F_R$ and $F_C$ look like in the section when either $\alpha$ or $\beta$ is fixed. The colored solid lines in \Cref{fig:origin} present the image of this lemma.
\begin{lemma}\label{lemma:monotone-convex}
The following two statements hold:
\begin{enumerate}
\item Given $\beta$, $F_C(\alpha,\beta)$ is an increasing, convex and piecewise-linear function of $\alpha$; $F_R(\alpha,\beta)$ is a decreasing and linear function of $\alpha$.
\item Given $\alpha$, $F_R(\alpha,\beta)$ is an increasing and  convex, piecewise-linear function of $\beta$; $F_C(\alpha,\beta)$ is a decreasing and linear function of $\beta$.
\end{enumerate}
\end{lemma}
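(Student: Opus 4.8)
The plan is to prove statement 1 and note that statement 2 follows by a symmetric argument (swapping the roles of row/column, $R/C$, $\alpha/\beta$, $w^*/z^*$, and $S_R/S_C$). Throughout, fix $\beta$ and write $x_\alpha := \alpha w^* + (1-\alpha)x^*$ and $y_\beta := \beta z^* + (1-\beta)y^*$.

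First I would handle $F_C(\alpha,\beta)$. Convexity and piecewise-linearity in $\alpha$ are immediate: $f_C(x,y) = \max(\T{C}x) - \T{x}Cy$ is, for fixed $y$, a maximum of finitely many affine functions of $x$ (the term $\max(\T{C}x)$) minus a linear function of $x$, hence convex and piecewise-linear; restricting to the segment $\alpha \mapsto x_\alpha$ preserves this. For monotonicity I would use the stationary point property. Since $F_C(\cdot,\beta)$ is convex, it suffices to show its right derivative at $\alpha = 0$ is nonnegative, i.e. $Df_C(x^*, y_\beta, w^*, y_\beta) \geq 0$. By \Cref{def:sp} and the computation that at a stationary point $Df(x^*,y^*,x',y') = \max\{T_1, T_2\} - f(x^*,y^*)$, applied with direction $(w^*, y^*)$, one gets $Df_C(x^*, y^*, w^*, y^*) \geq 0$; explicitly this reads $\max_{S_C(x^*)}(\T{C}w^*) - \max(\T{C}x^*) + \T{(x^*-w^*)}Cy^* \geq 0$. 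Separately, since $f_C(x^*, z^*) = 0$ (because $\supp(z^*) \subseteq S_C(x^*)$, so $\T{x^*}Cz^* = \max(\T{C}x^*)$... more carefully, $f_C(x^*,z^*) = \max(\T{C}x^*) - \T{x^*}Cz^*$ and one checks this equals $0$; in any case $f_C \geq 0$ everywhere with a zero at $(x^*,z^*)$) the same kind of directional-derivative inequality gives $Df_C(x^*, z^*, w^*, z^*) \geq 0$. Now the key observation is that $Df_C(x^*, y_\beta, w^*, y_\beta)$ is affine in $y_\beta$ — indeed the $\max_{S_C(x^*)}(\T{C}w^*)$ and $\max(\T{C}x^*)$ terms don't depend on the second argument, and the remaining term $\T{(x^*-w^*)}Cy$ is linear in $y$ — so
\[Df_C(x^*,y_\beta,w^*,y_\beta) = \beta\, Df_C(x^*,z^*,w^*,z^*) + (1-\beta)\, Df_C(x^*,y^*,w^*,y^*) \geq 0.\]
This gives monotonicity.

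Next I would show $F_R(\alpha,\beta)$ is decreasing and linear in $\alpha$. Linearity is the cleaner half: $f_R(x,y) = \max(Ry) - \T{x}Ry$ is, for fixed $y$, an affine function of $x$ (the $\max(Ry)$ term is a constant in $x$), so $F_R(\cdot,\beta)$ is exactly affine, namely $F_R(\alpha,\beta) = f_R(x^*,y_\beta) + \alpha\big(f_R(w^*,y_\beta) - f_R(x^*,y_\beta)\big)$. For the decreasing direction I need $f_R(w^*,y_\beta) - f_R(x^*,y_\beta) \leq 0$. Compute $f_R(w^*,y_\beta) - f_R(x^*,y_\beta) = \T{(x^*-w^*)}Ry_\beta$, and expand $y_\beta$: since $\supp(w^*) \subseteq S_R(y^*)$ we have $\T{w^*}Ry^* = \max(Ry^*) = \T{x^*}Ry^*$ (using $f_R(x^*,y^*) = f_C(x^*,y^*)$ and — wait, more directly — $(x^*,y^*)$ stationary forces $\supp(x^*) \subseteq S_R(y^*)$ by \Cref{prop:construct-sp}'s necessary condition chain, so $\T{x^*}Ry^* = \max(Ry^*)$, hence $\T{(x^*-w^*)}Ry^* = 0$). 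Therefore $\T{(x^*-w^*)}Ry_\beta = \beta\,\T{(x^*-w^*)}Rz^* = -\beta\lambda^* \leq 0$ by \Cref{lemma:lambda*-mu*} (here $\rho^* \in (0,1)$). This completes statement 1, and statement 2 is verbatim symmetric with $\mu^*$ in place of $\lambda^*$.

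The main obstacle — though it is more of a bookkeeping subtlety than a deep difficulty — is justifying the two "anchor" directional-derivative inequalities $Df_C(x^*,y^*,w^*,y^*) \geq 0$ and $Df_C(x^*,z^*,w^*,z^*) \geq 0$ cleanly and making sure the affine-interpolation identity for $Df_C(x^*,y_\beta,w^*,y_\beta)$ is valid for all $\beta$ (it is, since the only $y$-dependent piece is linear, but one should state this rather than wave at it). The first inequality needs the stationary point hypothesis in the form $Df(x^*,y^*,w^*,y^*) \geq 0$ together with the identity $Df = \max\{T_1,T_2\} - f$ and the fact that the $w^*$-direction activates the $T_2$ (column) branch; the second needs only $f_C \geq 0$ and $f_C(x^*,z^*) = 0$ so that $(x^*,z^*)$ is a minimizer of $f_C$ along the segment, forcing the right derivative toward $w^*$ to be nonnegative. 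Everything else is routine linearity and an application of convexity to upgrade a derivative sign at an endpoint to global monotonicity.
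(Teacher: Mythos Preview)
Your approach is essentially the same as the paper's: convexity/linearity by inspection of the formulas, monotonicity of $F_C$ via two anchor inequalities $Df_C(x^*,y^*,w^*,y^*)\geq 0$ and $Df_C(x^*,z^*,w^*,z^*)\geq 0$ plus affine interpolation in $\beta$, and monotonicity of $F_R$ by the sign of $\T{(x^*-w^*)}Ry_\beta$.

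There is one genuine error in your $F_R$ argument. You claim that stationarity forces $\supp(x^*)\subseteq S_R(y^*)$ and hence $\T{x^*}Ry^* = \max(Ry^*)$, i.e.\ $\T{(x^*-w^*)}Ry^* = 0$. This is false: $\supp(x^*)\subseteq S_R(y^*)$ would mean $f_R(x^*,y^*)=0$, which contradicts $f(x^*,y^*)>0$ at any non-NE stationary point (and in particular at the paper's tight example, where $\T{(x^*-w^*)}Ry^* = -b$). \Cref{prop:construct-sp} gives $\supp(x^*)\subset\suppmin(A(\rho^*,y^*,z^*))$, which is a statement about the mixed vector $-\rho^* Ry^* + (1-\rho^*)C(z^*-y^*)$, not about $Ry^*$ alone. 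The fix is trivial and is what the paper does: you only need the \emph{inequality} $\T{(x^*-w^*)}Ry^*\leq 0$, which follows immediately from $\supp(w^*)\subseteq S_R(y^*)$ (so $\T{w^*}Ry^* = \max(Ry^*)\geq \T{x^*}Ry^*$). Then
\[
\T{(x^*-w^*)}Ry_\beta = (1-\beta)\underbrace{\T{(x^*-w^*)}Ry^*}_{\leq 0} + \beta\underbrace{\T{(x^*-w^*)}Rz^*}_{=-\lambda^*\leq 0} \leq 0,
\]
and your conclusion stands.
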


\begin{proof}
We only prove the first statement here and the second one is symmetric. Let $x_\alpha := \alpha w^*+(1-\alpha)x^*$, $y_\beta := \beta z^*+(1-\beta)y^*$.

Notice that 
\[F_C(\alpha,\beta) = \max(\T{C}x_\alpha) - \T{x_\alpha}Cy_\beta, \]
therefore is 
convex and piecewise-linear in $\alpha$ with fixed $\beta$. A similar argument holds for $F_R(\alpha,\beta)$. We then show the increasing property for $F_R$. In fact, 
\[F_R(\alpha, \beta) = \max(Ry_\beta) - \T{x_\alpha}Ry_\beta, \]
therefore is linear in $\alpha$ with fixed $\beta$. Further by $\supp(w^*) \subseteq S_R(y^*)$ and \Cref{lemma:lambda*-mu*}, we have
\begin{align*}
    F_R(0, \beta) - F_R(1, \beta) &= (1 - \beta)\T{(w^* - x^*)}Ry^* + \beta\T{(w^* - x^*)}Rz^* \\
    &\geq \beta\lambda^* \geq 0,
\end{align*}
which shows that $F_R(\alpha, \beta)$ is decreasing with fixed $\beta$.

At last, to prove that $F_C(\alpha,\beta)$ is increasing in $\alpha$, by convexity, it suffices to show that $Df_C(x^*,y_\beta,w^*,y_\beta)\geq 0$. Note that $Df_R(x^*,y^*,w^*,y^*) \leq 0$. By the definition of \SPn, we must have
\begin{align}
    0&\leq Df(x^*,y^*,w^*,y^*)=Df_C(x^*,y^*,w^*,y^*)\notag\\
    &=\max_{S_C(x^*)}(\T{C}w^*)-\max(\T{C}x^*)+\T{(x^*-w^*)}Cy^*.\label{prove:Df1geq0}
\end{align}
Notice that $f_C(x^*,z^*)=0$ and $f_C(x,z^*)\geq 0$ for all valid $x$, so
\begin{align}
    0&\leq Df_C(x^*,z^*,w^*,z^*)\notag\\
    &=\max_{S_C(x^*)}(\T{C}w^*)-\max(\T{C}x^*)+\T{(x^*-w^*)}Cz^*.\label{prove:Df2geq0}
\end{align}
Combining \cref{prove:Df1geq0} and \cref{prove:Df2geq0}, we have
\begin{align*}
    Df_C(x^*,y_\beta,w^*,y_\beta)&=\max_{S_C(x^*)}(\T{C}w^*)-\max(\T{C}x^*)+\T{(x^*-w^*)}Cy_\beta\\
    &=\beta Df_C(x^*,z^*,w^*,z^*)+(1-\beta)Df_C(x^*,y^*,w^*,z^*)\geq 0.
\end{align*}
\end{proof}

Recall that the second adjustment method yields the strategy pair $\AdjSe$. We have the following lemma indicating that $(x^*,y^*)$ and $\AdjSe$ are the minimum points on the boundary of $\Lambda$. They are the colored dots in \Cref{fig:origin}.

\begin{lemma}\label{lemma:best-on-boundary}
The following two statements hold:
\begin{enumerate}
    \item $(x^*,y^*)$ is the minimum point of $f$ on $\Gamma_1=\{(\alpha x^*+(1-\alpha)w^*,y^*):\alpha\in[0,1]\}\cup\{(x^*,\beta y^*+(1-\beta)z^*:\beta\in[0,1]\}$.
    \item $\AdjSe$ is the minimum point of $f$ on $\Gamma_2=\{(\alpha x^*+(1-\alpha)w^*,z^*):\alpha\in[0,1]\}\cup\{(w^*,\beta y^*+(1-\beta)z^*:\beta\in[0,1]\}$.
\end{enumerate}
\end{lemma}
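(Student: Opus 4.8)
The plan is to leverage \Cref{lemma:monotone-convex}, which already tells us exactly how $f_R$ and $f_C$ behave along each edge of the rectangle $\Lambda$, together with the fact (from \Cref{prop:sp-equi}) that $f_R(x^*,y^*)=f_C(x^*,y^*)=f(x^*,y^*)$. Write $x_\alpha:=\alpha w^*+(1-\alpha)x^*$ and $y_\beta:=\beta z^*+(1-\beta)y^*$ as before. The boundary sets $\Gamma_1$ and $\Gamma_2$ each consist of two line segments, so in each case it suffices to analyze $f$ restricted to two one-parameter families and then compare.

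For the first statement I would argue as follows. On the segment $\{(x_\alpha,y^*):\alpha\in[0,1]\}$, \Cref{lemma:monotone-convex}(1) says $F_C(\alpha,0)$ is increasing in $\alpha$ and $F_R(\alpha,0)$ is decreasing in $\alpha$; hence at $\alpha=0$ we have $F_R(0,0)=F_C(0,0)=f(x^*,y^*)$, and for $\alpha>0$ the value $f(x_\alpha,y^*)=\max\{F_R(\alpha,0),F_C(\alpha,0)\}\geq F_C(\alpha,0)\geq F_C(0,0)=f(x^*,y^*)$. Symmetrically, on $\{(x^*,y_\beta):\beta\in[0,1]\}$, \Cref{lemma:monotone-convex}(2) gives that $F_R(0,\beta)$ is increasing in $\beta$, so $f(x^*,y_\beta)\geq F_R(0,\beta)\geq F_R(0,0)=f(x^*,y^*)$. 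Combining the two segments, $(x^*,y^*)$ minimizes $f$ over $\Gamma_1$.

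For the second statement I split on the two cases appearing in the definition of $\AdjSe$; say $f_C(w^*,z^*)\geq f_R(w^*,z^*)$ (the other case is symmetric by swapping the roles of the players). Then $\AdjSe=(x_{\alpha^*},z^*)$ where $\alpha^*$ minimizes $f(x_\alpha,z^*)$ over $\alpha\in[0,1]$, so by definition $\AdjSe$ is the minimum of $f$ on the segment $\{(x_\alpha,z^*):\alpha\in[0,1]\}$. It remains to show this value is no larger than $f$ on the other segment $\{(w^*,y_\beta):\beta\in[0,1]\}$ of $\Gamma_2$. On that segment, \Cref{lemma:monotone-convex}(2) says $F_C(1,\beta)$ is decreasing in $\beta$, so $f(w^*,y_\beta)\geq F_C(1,\beta)\geq F_C(1,1)=f_C(w^*,z^*)$; and since $f_C(w^*,z^*)\geq f_R(w^*,z^*)$ we get $f(w^*,z^*)=f_C(w^*,z^*)$, which is $f$ evaluated at $\alpha=1,\beta=1$, i.e. the endpoint $(x_1,z^*)$ of the first segment, and hence $\geq f(x_{\alpha^*},z^*)$. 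Therefore the minimum over $\Gamma_2$ is attained at $\AdjSe$.

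The only subtlety — and the main thing to be careful about rather than a genuine obstacle — is the bookkeeping of orientations: the segment in $\Gamma_1$ is written as $\alpha x^*+(1-\alpha)w^*$ whereas $x_\alpha=\alpha w^*+(1-\alpha)x^*$, so one must reparametrize $\alpha\mapsto 1-\alpha$ consistently, and similarly keep track of which endpoint of $\Gamma_2$'s second segment corresponds to $(w^*,z^*)$ versus $(w^*,y^*)$. One should also note that $f(x^*,z^*)=\max\{f_R(x^*,z^*),0\}=f_R(x^*,z^*)\geq 0$ since $f_C(x^*,z^*)=0$, which is what makes the endpoint comparisons clean. Once the parametrizations are aligned, the argument is just monotonicity of the relevant linear pieces plus the equality $F_R=F_C$ at the stationary point, so no hard estimate is needed.
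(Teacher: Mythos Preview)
Your proposal is correct and follows essentially the same route as the paper's proof: use \Cref{prop:sp-equi} to get $f_R(x^*,y^*)=f_C(x^*,y^*)$, then apply the monotonicity from \Cref{lemma:monotone-convex} on each segment and compare the two segments at their common corner $(w^*,z^*)$. The only cosmetic difference is that the paper invokes the equality $f(w^*,y_\beta)=f_C(w^*,y_\beta)$ on the second segment whereas you use the weaker (and sufficient) inequality $f\ge f_C$; your remark about reparametrizing $\alpha\mapsto1-\alpha$ is a valid bookkeeping point that the paper glosses over.
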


\begin{proof}
Let $x_\alpha:=\alpha w^*+(1-\alpha)x^*$, $y_\beta:=\beta z^*+(1-\beta)y^*$. 
For the first part, by \Cref{prop:sp-equi}, $f_R(x^*,y^*)=f_C(x^*,y^*)=f(x^*,y^*)$. Meanwhile, \Cref{lemma:monotone-convex} shows that $f_C(x_\alpha,y^*)$ is an increasing function of $\alpha$, and $f_R(x_\alpha,y^*)$ is an decreasing function of $\alpha$, therefore $f(x_\alpha,y^*)=f_C(x_\alpha,y^*)\geq f_C(x^*,y^*)=f(x^*,y^*)$. Similarly, $f(x^*,y_\beta)=f_R(x^*,y_\beta)\geq f(x^*,y^*)$. As a result, $(x^*,y^*)$ is the minimum point of $f$ on $\Gamma_1$.

For the second part, suppose $f_C(w^*,z^*)\geq f_R(w^*,z^*)$. Again, by \Cref{lemma:monotone-convex} and a similar argument, $f(w^*,y_{\beta})=f_C(w^*,y_{\beta})\geq f_C(w^*,z^*)=f(w^*,z^*)\geq f(x_{\alpha^*},z^*)$. Therefore $\AdjSe=(x_{\alpha^*},z^*)$ is the minimum point on $\Gamma_2$. A similar argument holds for the case $f_R(w^*,z^*) > f_C(w^*,z^*)$.
\end{proof}

From the above lemmas, it is clear how \Cref{fig:origin} comes about. Now we turn to find tight instances. To take a further step, it is not enough only to plot a sketch. We need a quantitative analysis, in other words, calculating the exact heights of $F_R$ and $F_C$ in \Cref{fig:origin}. We then try to make the lowest point of the figure as high as possible, which may lead to a tight instance. We first do such a process on the boundary of $\Lambda$, denoted by $\partial\Lambda$. Then we show that it naturally leads to the worst-case analysis on the whole square $\Lambda$ as well.

As \Cref{lemma:best-on-boundary} suggests, either $(x^*,y^*)$ or $\AdjSe$ is a minimum point on $\partial\Lambda$, depending on their $f$ values. $(x^*,y^*)$ is a \SPn; thus it owns many properties owing to its dual LP structure, which can be utilized to estimate $f(x^*,y^*)$. The main barrier, however, is to analyze $\AdjSe$, whose position in \Cref{fig:origin} seems random, let alone its $f$ value. 
We give an estimable upper bound of $f\AdjSe$ by developing a shifted point along the boundary of square $\Lambda$ as follows.

Recall that \Cref{lemma:monotone-convex} shows when $\alpha$ or $\beta$ is fixed, the figure of $F_C$ and $F_R$ is convex. For every section, the figure of $F_R$ and $F_C$ becomes a convex curve. Fixing the two endpoints of the curve, we stretch the convex curve into a line, which gives an upper bound of $F_R$ or $F_C$. After stretching every section, the figure is stretched to a smooth surface with every section linear. Since we only care about $\AdjSe$, by the definition of $\AdjSe$, we only need to stretch $F_R$ or $F_C$ so that $\AdjSe$ is lifted. Such a procedure is shown in \Cref{fig:tostretch} and the result is shown in \Cref{fig:stretched}.

\begin{figure}[ht]
    \centering
    \includegraphics[scale=0.55]{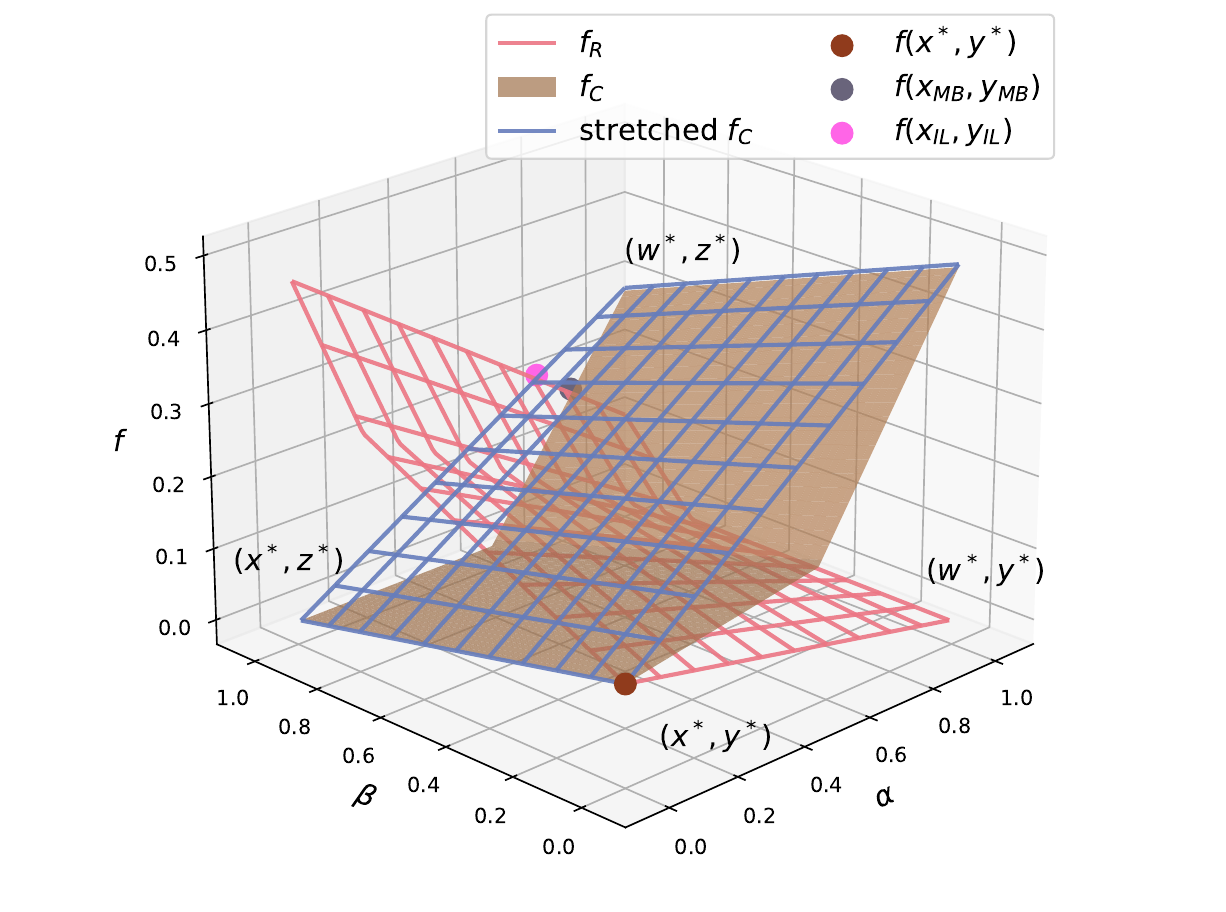}
    \caption{The procedure of stretching. Every section of $F_C$ when $\beta$ is fixed is stretched to a segment. Such a procedure lifts $\AdjSe$ to $\AdjTh$.}
    \label{fig:tostretch}
\end{figure}

\begin{figure}[ht]
    \centering
    \includegraphics[scale=0.55]{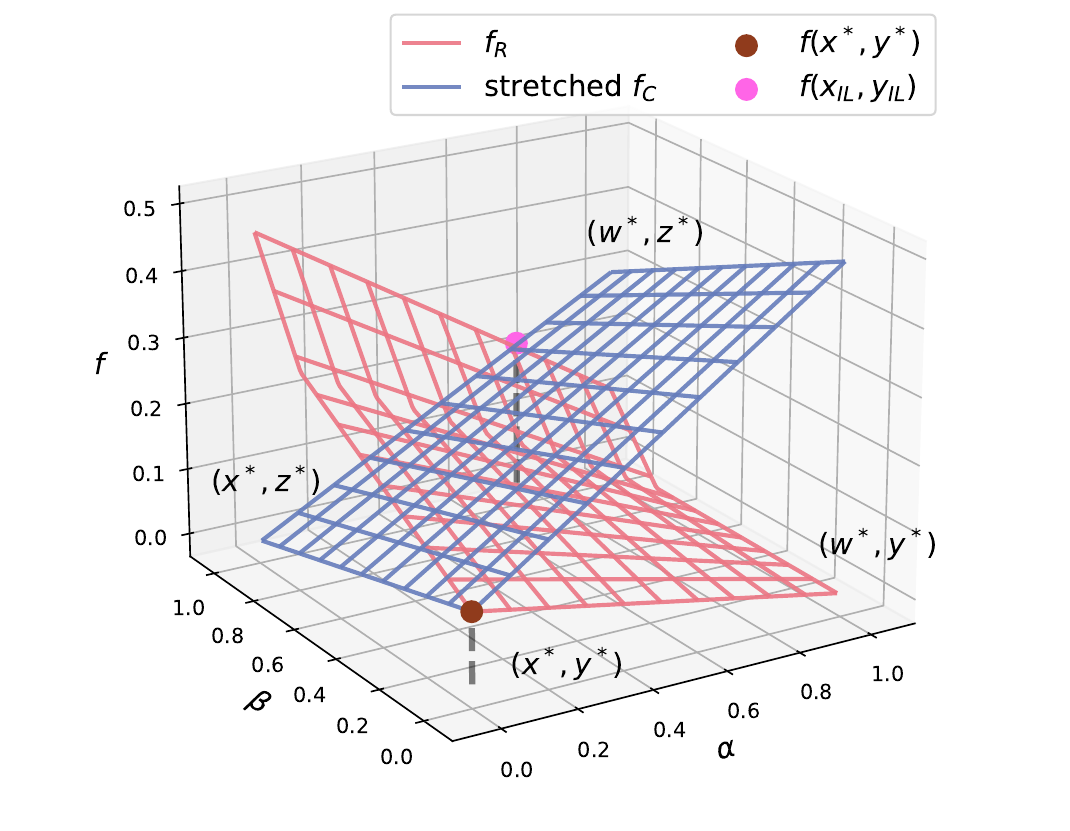}
    \caption{The result of stretching. Note that $\AdjSe$ is now lifted to $\AdjTh$.}
    \label{fig:stretched}
\end{figure}

Such an upper bound can be expressed in inequalities as well (we have presented them in \Cref{sec:algo} to define $\AdjTh$):
\begin{align*}
f_C(p w^*+(1-p)x^*,z^*)&\leq f_C(w^*,z^*)p,\\
f_R(w^*,q z^*+(1-q) y^*)&\leq f_R(w^*,z^*)q.
\end{align*}

After stretching, the original intersection point $\AdjSe$ shifts to a new point, which is exactly the definition of $\AdjTh$ in \Cref{sec:algo}. $\AdjTh$ appears easier to calculate and estimate. 

Let us leave the process of estimations for a while. Remember that we are doing \emph{worst-case} analysis, so an upper bound is not enough if it is never tight. Luckily, we have the following lemma and proposition that point out when such an upper bound becomes tight. From a geometric view, they present the equivalent condition that stretched figures and original figures are identical and that $\AdjSe$ coincides with $\AdjTh$.

\begin{lemma}\label{lemma:linearize}
The following two statements hold:
\begin{enumerate}
    \item $F_C(\alpha,\beta)=f_C(\alpha w^*+(1-\alpha)x^*,\beta z^*+(1-\beta)y^*)$ is a linear function of $\alpha$ if and only if
    \begin{align}\label{cond:linearize1}
        S_C(x^*)\cap S_C(w^*)\neq\varnothing.
    \end{align}
    \item $F_R(\alpha,\beta)=f_R(\alpha w^*+(1-\alpha)x^*,\beta z^*+(1-\beta)y^*)$ is a linear function of $\beta$ if and only if
    \begin{align}\label{cond:linearize2}
        S_R(y^*)\cap S_R(z^*)\neq\varnothing.
    \end{align}
\end{enumerate}
\end{lemma}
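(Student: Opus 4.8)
The plan is to prove the first statement only; the second is symmetric under interchanging the two players and swapping the roles of $\alpha$ and $\beta$. Fix $\beta\in[0,1]$, write $y_\beta:=\beta z^*+(1-\beta)y^*$ and $g(\alpha):=F_C(\alpha,\beta)=f_C(\alpha w^*+(1-\alpha)x^*,y_\beta)$. By \Cref{lemma:monotone-convex}, $g$ is convex and piecewise-linear on $[0,1]$, so its one-sided derivative is nondecreasing; consequently $g$ is affine if and only if its right derivative at $\alpha=0$ equals its left derivative at $\alpha=1$. By \Cref{def:dir-deri}, the right derivative of $g$ at $0$ is $Df_C(x^*,y_\beta,w^*,y_\beta)$, while the left derivative of $g$ at $1$ is $-Df_C(w^*,y_\beta,x^*,y_\beta)$, the minus sign arising because decreasing $\alpha$ from $1$ moves $x_\alpha$ from $w^*$ towards $x^*$. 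Hence $g$ is affine if and only if $Df_C(x^*,y_\beta,w^*,y_\beta)+Df_C(w^*,y_\beta,x^*,y_\beta)=0$.

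Next I would write out the two scaled directional derivatives explicitly, expanding $f_C(x,y)=\max(\T{C}x)-\T{x}Cy$ along directions in which the $y$-coordinate is frozen — the same computation already carried out in the proof of \Cref{lemma:monotone-convex}. This gives $Df_C(x^*,y_\beta,w^*,y_\beta)=\max_{S_C(x^*)}(\T{C}w^*)-\max(\T{C}x^*)+\T{(x^*-w^*)}Cy_\beta$ and $Df_C(w^*,y_\beta,x^*,y_\beta)=\max_{S_C(w^*)}(\T{C}x^*)-\max(\T{C}w^*)+\T{(w^*-x^*)}Cy_\beta$. Adding them, the $\beta$-dependent terms $\pm\T{(x^*-w^*)}Cy_\beta$ cancel, leaving the $\beta$-independent quantity $\bigl(\max_{S_C(x^*)}(\T{C}w^*)-\max(\T{C}w^*)\bigr)+\bigl(\max_{S_C(w^*)}(\T{C}x^*)-\max(\T{C}x^*)\bigr)$. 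Since restricting a maximum to a subset of coordinates can only decrease its value, each parenthesized term is $\leq 0$, so the sum is $\leq 0$; this re-derives the convexity inequality implicit above.

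Finally I would read off the equivalence: the sum vanishes if and only if both parenthesized terms vanish, i.e. $\max_{S_C(x^*)}(\T{C}w^*)=\max(\T{C}w^*)$ and $\max_{S_C(w^*)}(\T{C}x^*)=\max(\T{C}x^*)$. The first equality says that some index of $S_C(x^*)=\suppmax(\T{C}x^*)$ attains the maximum of $\T{C}w^*$, hence lies in $S_C(w^*)=\suppmax(\T{C}w^*)$, so $S_C(x^*)\cap S_C(w^*)\neq\varnothing$; conversely any common index of $S_C(x^*)$ and $S_C(w^*)$ witnesses both equalities at once. Since ``$g$ is affine'' is thereby equivalent to the $\beta$-independent condition \eqref{cond:linearize1}, it holds for one $\beta$ exactly when it holds for all, which is the asserted statement.

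I expect the only delicate point to be the first reduction: justifying cleanly — via convexity and piecewise-linearity from \Cref{lemma:monotone-convex} — that affineness of $g$ on $[0,1]$ is equivalent to the agreement of the two boundary one-sided derivatives, and then matching those derivatives with the scaled directional derivatives of \Cref{def:dir-deri} with the correct signs. Everything afterward is the routine $Df_C$ bookkeeping already present in \Cref{lemma:monotone-convex}, together with the elementary observation that a maximum over a subset equals the maximum over the whole index set precisely when the subset meets the argmax set.
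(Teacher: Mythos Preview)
Your proposal is correct and follows essentially the same approach as the paper: reduce linearity of the convex function $g(\alpha)=F_C(\alpha,\beta)$ to the vanishing of $Df_C(x^*,y_\beta,w^*,y_\beta)+Df_C(w^*,y_\beta,x^*,y_\beta)$, expand both scaled directional derivatives, observe that the $y_\beta$-terms cancel, and analyze when the remaining sum of two nonpositive terms vanishes. Your write-up is in fact a bit more careful than the paper's in justifying the convexity-to-endpoint-derivatives reduction and in spelling out both directions of the final equivalence.
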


\begin{proof}
We only prove the first statement, and the second one is similar. Let $y_\beta=\beta z^*+(1-\beta)y^*$. Since $F_C(\alpha,\beta)$ is a convex function of $\alpha$, it suffices to prove that $Df_C(w^*,y_\beta,x^*,y_\beta)=-Df_C(x^*,y_\beta,w^*,y_\beta)$ if and only if \cref{cond:linearize1} holds. One can verify that
\begin{align*}
    Df_C(w^*,y_\beta,x^*,y_\beta)&=\max_{S_C(w^*)}(\T{C}x^*)-\T{x^*}Ry_\beta-\max(\T{C}w^*)+\T{w^*}Cy_\beta,\\
    Df_C(x^*,y_\beta,w^*,y_\beta)&=\max_{S_C(x^*)}(\T{C}w^*)-\T{w^*}Ry_\beta-\max(\T{C}x^*)+\T{x^*}Cy_\beta.
\end{align*}
Sum up these two equations and we have
\begin{align*}
    &Df_C(w^*,y_\beta,x^*,y_\beta)+Df_C(x^*,y_\beta,w^*,y_\beta)\\
    =&\max_{S_C(x^*)}(\T{C}w^*)-\max(\T{C}w^*)+\max_{S_C(w^*)}(\T{C}x^*)-\max(\T{C}x^*)\leq 0.
\end{align*}
and the equality holds if and only if $S_C(x^*)\cap S_C(w^*)\neq\varnothing$.
\end{proof}

\begin{proposition}\label{prop:stgy-comp}
$f\AdjFi\geq f\AdjSe$ and $f\AdjTh\geq f\AdjSe$ always hold. Meanwhile, $f\AdjSe=f\AdjTh$ holds if and only if
\begin{align*}
\begin{cases}
S_C(x^*)\cap S_C(w^*)\neq\varnothing,& \text{if } f_C(w^*,z^*)> f_R(w^*,z^*),\\
S_R(y^*)\cap S_R(z^*)\neq\varnothing,& \text{if } f_C(w^*,z^*)<f_R(w^*,z^*),\\
f_R(w^*,z^*)=f_C(w^*,z^*).
\end{cases}
\end{align*}
\end{proposition}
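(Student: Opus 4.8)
The two inequalities I would obtain directly from \Cref{lemma:best-on-boundary}. The first step is to observe that $\AdjFi$, $\AdjSe$, $\AdjTh$ all lie on $\Gamma_2$: each is either of the form $(\alpha w^*+(1-\alpha)x^*,z^*)$ or $(w^*,\beta z^*+(1-\beta)y^*)$ with $\alpha,\beta\in[0,1]$. For $\AdjFi$ this is because $\lambda\ge\mu$ makes $\tfrac{1}{1+\lambda-\mu},\tfrac{\lambda-\mu}{1+\lambda-\mu}$ a convex pair (and symmetrically when $\lambda<\mu$); for $\AdjTh$ it is because $p^*,q^*\in[0,1]$, which follows from the sign of the denominator under the corresponding case hypothesis, the zero-denominator situation being covered by the convention $p^*:=0$ (resp.\ $q^*:=0$). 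Since $\AdjSe$ minimizes $f$ on $\Gamma_2$ by \Cref{lemma:best-on-boundary}, we get $f\AdjFi\ge f\AdjSe$ and $f\AdjTh\ge f\AdjSe$ at once.

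For the equivalence, the plan is to split on the sign of $f_C(w^*,z^*)-f_R(w^*,z^*)$, matching the three cases in the statement; by the $R\leftrightarrow C$ symmetry it suffices to treat the branch $f_C(w^*,z^*)=f_R(w^*,z^*)$ and the branch $f_C(w^*,z^*)>f_R(w^*,z^*)$. (I would first discard the degenerate subcases in which $\lambda^*$, $\mu^*$, or $f_R(x^*,z^*)$ vanishes: there everything collapses to value $0$ and the claim is trivial, so below $\lambda^*,\mu^*>0$ and $f_R(x^*,z^*)>0$.) In the equality branch, $p^*=1$, hence $\AdjTh=(w^*,z^*)$; I would then apply \Cref{lemma:monotone-convex} on each of the two edges of $\Gamma_2$, using also $f_R(w^*,y^*)=f_C(x^*,z^*)=0$, to show that on each edge $f=\max\{f_R,f_C\}$ is nonincreasing toward $(w^*,z^*)$, so $f$ is minimized on $\Gamma_2$ at $(w^*,z^*)$ with value $f_R(w^*,z^*)=f_C(w^*,z^*)$; thus $f\AdjSe=f(w^*,z^*)=f\AdjTh$, which is exactly what the third listed condition asserts.

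In the branch $f_C(w^*,z^*)>f_R(w^*,z^*)$ both $\AdjSe=(x_{\alpha^*},z^*)$ and $\AdjTh=(p^*w^*+(1-p^*)x^*,z^*)$ lie on the edge $\Phi:=\{(x_\alpha,z^*):\alpha\in[0,1]\}$, where $x_\alpha:=\alpha w^*+(1-\alpha)x^*$. On $\Phi$, \Cref{lemma:monotone-convex} makes $f_R$ affine and strictly decreasing (slope $-\lambda^*<0$) from $f_R(x^*,z^*)>0$ down to $f_R(w^*,z^*)$, and $f_C$ convex and increasing from $f_C(x^*,z^*)=0$ up to $f_C(w^*,z^*)$. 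Since $f_R>f_C$ at $\alpha=0$ and $f_R<f_C$ at $\alpha=1$, the intermediate value theorem yields a unique interior point of $\Phi$ where $f_R=f_C$, and that point is exactly where $\max\{f_R,f_C\}$ attains its minimum; hence $\AdjSe$ is this intersection point of $f_R$ and $f_C$. On the other hand, $\AdjTh$ is by construction the intersection of $f_R$ with the affine upper bound $p\mapsto f_C(w^*,z^*)\,p$ of $f_C$ on $\Phi$ furnished by \cref{derive2}; since that line lies weakly above $f_C$, it meets the decreasing $f_R$ no later, so $p^*\le\alpha^*$ and $f\AdjTh=f_R(x_{p^*},z^*)\ge f_R(x_{\alpha^*},z^*)=f\AdjSe$, with equality iff $p^*=\alpha^*$. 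The final step is to note that $p^*=\alpha^*$ holds iff the convex function $\alpha\mapsto f_C(x_\alpha,z^*)$ meets its own secant $p\mapsto f_C(w^*,z^*)\,p$ at the interior point $\alpha^*$, which by a short convexity argument forces that function to be affine on all of $[0,1]$; by \Cref{lemma:linearize} this is equivalent to $S_C(x^*)\cap S_C(w^*)\neq\varnothing$, the first listed condition. The branch $f_C(w^*,z^*)<f_R(w^*,z^*)$ is handled symmetrically and yields $S_R(y^*)\cap S_R(z^*)\neq\varnothing$, which completes the equivalence.

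I expect the main obstacle to be this last step: cleanly identifying $\AdjSe$ as the intersection of $f_R$ and $f_C$ (via the intermediate value theorem and the min-of-max structure), $\AdjTh$ as the intersection of $f_R$ with the secant of $f_C$, and then proving these coincide precisely when the convex curve equals its secant, and translating that into the support-intersection conditions through \Cref{lemma:linearize}. The rest --- the monotonicity signs, deciding which edge of $\Gamma_2$ carries the minimum, and the bookkeeping for the degenerate subcases (where the iff should be read under the proviso $\lambda^*,\mu^*>0$) --- is routine but is where the write-up needs the most care.
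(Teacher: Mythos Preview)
Your proposal is correct and follows essentially the same route as the paper: deduce both inequalities from \Cref{lemma:best-on-boundary}, then split on the sign of $f_C(w^*,z^*)-f_R(w^*,z^*)$, identify $\AdjSe$ as the intersection of $f_R$ and $f_C$ on $\Phi$ via the intermediate value theorem and \Cref{lemma:monotone-convex}, and finish with \Cref{lemma:linearize} to translate the coincidence $\AdjSe=\AdjTh$ into the support-intersection condition. Your write-up is in fact a bit more careful than the paper's (you explicitly check the three points lie on $\Gamma_2$, handle the degenerate $\lambda^*=0$ subcases, and spell out the secant/convexity step), but the structure and key lemmas invoked are the same.
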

\begin{proof}
$f\AdjFi\geq f\AdjSe$ and $f\AdjTh\geq f\AdjSe$ are directly deducted by \Cref{lemma:best-on-boundary}. We now prove the second part. 

If $f_R(w^*,z^*)=f_C(w^*,z^*)$, then by \Cref{lemma:monotone-convex} and \Cref{lemma:best-on-boundary}, we obtain that $\AdjSe=\AdjTh=(w^*,z^*)$.

Suppose now $f_C(w^*,z^*)>f_R(w^*,z^*)$. Let 
\[x_\alpha:=\alpha w^*+(1-\alpha)x^*,\]
so $\AdjSe=(x_{\alpha^*},z^*)$. Notice that $f_R(x^*,z^*)\geq f_C(x^*,z^*)=0$, by intermediate value theorem and \Cref{lemma:monotone-convex}, the unique minimum point of $f$ on $\Gamma_2$, $\AdjSe$, lying on $\Phi=\{(x_\alpha,z^*):\alpha\in[0,1]\}$, is the intersection of $f_C$ and $f_R$. Again, by \Cref{lemma:monotone-convex}, $f_R$ is linear on $\Phi$ and $f_C$ is piecewise-linear on $\Phi$, therefore $\AdjSe$ coincides with $\AdjTh$ if and only if both $f_C$ is also linear on $\Phi$. By \Cref{lemma:linearize}, $f_C(x_\alpha,z^*)$ is linear on $\Phi$ if and only if $S_C(x^*)\cap S_C(w^*)\neq\varnothing$, which completes the proof of the case.

The case $f_R(w^*,z^*)>f_C(w^*,z^*)$ is symmetric, which we omit.
\end{proof}

We note that $\AdjFi$ is automatically included in \Cref{prop:stgy-comp}. Thus our analysis involves the adjustment in the original TS algorithm as well.

Now we turn back to estimations. We present the following estimations and inequalities for $f(x^*,y^*)$ and $f\AdjSe$ and show when the equality holds.

\begin{lemma}\label{lemma:est-for-stgy}
The following two estimations hold:
\begin{enumerate}
\item If $f_C(w^*,z^*)>f_R(w^*,z^*)$, then
\[
    f\AdjTh=\frac{f_R(x^*,z^*)(f_C(w^*,y^*)-\mu^*)}{f_C(w^*,y^*)+\lambda^*-\mu^*}\leq\frac{1-\mu^*}{1+\lambda^*-\mu^*}.
\]
And symmetrically, when $f_R(w^*,z^*)>f_C(w^*,z^*)$, we have
\[
    f\AdjTh=\frac{f_C(w^*,y^*)(f_R(x^*,z^*)-\lambda^*)}{f_R(x^*,z^*)+\mu^*-\lambda^*}\leq\frac{1-\lambda^*}{1+\mu^*-\lambda^*}.
\]
Furthermore, if $(x^*, y^*)$ is not a \NEn, the equality holds if and only if $f_C(w^*,y^*)=f_R(x^*,z^*)=1$.
\item $f(x^*,y^*)\leq \min\{\rho^*\lambda^*, (1 - \rho^*)\mu^*\}\leq\frac{\lambda^*\mu^*}{\lambda^*+\mu^*}$.
\end{enumerate}
\end{lemma}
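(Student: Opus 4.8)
The plan is to prove the two estimates separately, establishing Part~2 first, because the bound it gives on $f(x^*,y^*)$ is what later rules out degeneracies in the equality analysis of Part~1.

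For Part~2 I would invoke \Cref{prop:sp-equi} together with \Cref{prop:dual}, which give $f(x^*,y^*)=V(x^*,y^*)=\min_{x',y'}T(x^*,y^*,x',y',\rho^*,w^*,z^*)$. Evaluating the right-hand side at the feasible points $(x',y')=(x^*,z^*)$ and $(x',y')=(w^*,y^*)$, the cross terms in $T$ cancel pairwise and what survives is exactly $\rho^*\cdot\T{(w^*-x^*)}Rz^*=\rho^*\lambda^*$ in the first case and $(1-\rho^*)\cdot\T{w^*}C(z^*-y^*)=(1-\rho^*)\mu^*$ in the second; hence $f(x^*,y^*)\le\min\{\rho^*\lambda^*,(1-\rho^*)\mu^*\}$. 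Finally, for fixed nonnegative $\lambda^*,\mu^*$ the map $\rho\mapsto\min\{\rho\lambda^*,(1-\rho)\mu^*\}$ on $[0,1]$ attains its maximum $\lambda^*\mu^*/(\lambda^*+\mu^*)$ at the crossing point $\rho=\mu^*/(\lambda^*+\mu^*)$, which gives the last inequality; in the degenerate case $\lambda^*=\mu^*=0$ one simply gets $f(x^*,y^*)=0$, so $(x^*,y^*)$ is a \NEn.

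For Part~1, by the symmetry between the two players it suffices to treat $f_C(w^*,z^*)>f_R(w^*,z^*)$, where $\AdjTh=(x_{p^*},z^*)$ with $x_{p^*}:=p^*w^*+(1-p^*)x^*$. First I would check $p^*\in[0,1]$: the numerator $f_R(x^*,z^*)$ is nonnegative, the denominator $f_R(x^*,z^*)+f_C(w^*,z^*)-f_R(w^*,z^*)$ is strictly positive because $f_C(w^*,z^*)>f_R(w^*,z^*)\ge0$, and the numerator does not exceed the denominator. Next, the crucial step is to identify $f\AdjTh$ with the common value of the two linear bounds at $p=p^*$: by \cref{derive1} the identity $f_R(x_{p^*},z^*)=(f_R(w^*,z^*)-f_R(x^*,z^*))p^*+f_R(x^*,z^*)$ holds \emph{exactly}; a one-line algebraic check shows the definition of $p^*$ makes this equal to $f_C(w^*,z^*)p^*$; and \cref{derive2} gives $f_C(x_{p^*},z^*)\le f_C(w^*,z^*)p^*$. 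Therefore $f_C(x_{p^*},z^*)\le f_R(x_{p^*},z^*)$, so the maximum defining $f$ collapses, and $f\AdjTh=f_R(x_{p^*},z^*)=f_C(w^*,z^*)p^*$. Substituting the formula for $p^*$ and rewriting $f_C(w^*,z^*)=f_C(w^*,y^*)-\mu^*$ and $f_R(w^*,z^*)=f_R(x^*,z^*)-\lambda^*$ (both from the defining identities for $\mu^*$ and $\lambda^*$) turns this into exactly $\tfrac{f_R(x^*,z^*)(f_C(w^*,y^*)-\mu^*)}{f_C(w^*,y^*)+\lambda^*-\mu^*}$.

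For the inequality and the equality condition, set $a:=f_R(x^*,z^*)$ and $b:=f_C(w^*,y^*)$; both lie in $[0,1]$ because all payoff entries are in $[0,1]$, and $b-\mu^*=f_C(w^*,z^*)>0$ while $\lambda^*,\mu^*\ge0$, so (using $b\le1$) every denominator appearing is positive. The quantity $\tfrac{a(b-\mu^*)}{b+\lambda^*-\mu^*}$ is nondecreasing in $a$, hence at most $\tfrac{b-\mu^*}{b+\lambda^*-\mu^*}=1-\tfrac{\lambda^*}{b+\lambda^*-\mu^*}$, which is nondecreasing in $b$ (as $\lambda^*\ge0$), hence at most $1-\tfrac{\lambda^*}{1+\lambda^*-\mu^*}=\tfrac{1-\mu^*}{1+\lambda^*-\mu^*}$. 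For the equality characterization I would use Part~2: if $(x^*,y^*)$ is not a \NE then $f(x^*,y^*)>0$, which (with $\rho^*\in(0,1)$) forces $\lambda^*>0$ and $\mu^*>0$; then both monotonicities above become strict, so equality can hold only when $a=1$ and $b=1$, i.e.\ $f_R(x^*,z^*)=f_C(w^*,y^*)=1$, and conversely those values give equality. The case $f_R(w^*,z^*)>f_C(w^*,z^*)$ is identical after interchanging the roles of $(R,x^*,z^*,\lambda^*)$ and $(C,w^*,y^*,\mu^*)$.

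The main obstacle is the step identifying $f\AdjTh$ exactly: one must simultaneously use that $\AdjTh$ sits at the crossing of the two linear bounds (which couples the definition of $p^*$ with \cref{derive1,derive2}) and that at this point the value of $f=\max\{f_R,f_C\}$ is governed by $f_R$ — this last fact, which makes the max collapse, relies on the exactness of the $f_R$-bound together with the merely one-sided $f_C$-bound. Once $f\AdjTh$ is pinned down, everything else reduces to the elementary monotonicity bookkeeping above and to the reductions already established in \Cref{sec:algo} and in \Cref{lemma:monotone-convex,lemma:lambda*-mu*}.
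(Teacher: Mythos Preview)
Your proposal is correct and follows essentially the same approach as the paper: Part~2 via evaluating $T$ at $(x^*,z^*)$ and $(w^*,y^*)$, and Part~1 via computing $f\AdjTh$ from the linear bounds and then bounding by monotonicity in $f_R(x^*,z^*)$ and $f_C(w^*,y^*)$. If anything, you are more careful than the paper, which simply asserts that the formula for $f\AdjTh$ is ``immediate by definition'' without explaining why the $\max$ in $f$ collapses to $f_R$ at $p^*$.
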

\begin{proof}
The value of $f\AdjTh$ is obtained immediately by definition. We now show the inequality holds. We only prove the case when $f_C(w^*,z^*)>f_R(w^*,z^*)$ and the other case is symmetric. Notice that 
\begin{align*}
    f\AdjTh &= \frac{f_R(x^*,z^*)(f_C(w^*,y^*)-\mu^*)}{f_C(w^*,y^*)+\lambda^*-\mu^*} \\
    &\leq \frac{(f_C(w^*,y^*)-\mu^*)}{f_C(w^*,y^*)+\lambda^*-\mu^*} \\
    &\leq \frac{1 - \mu^*}{1 + \lambda^* - \mu^*}.
\end{align*}
The second line holds as $f\AdjTh \geq 0$ and $f_R(x^*, z^*)\leq 1$, and the third line holds as
\[G(t) = \frac{t}{t + \lambda^*}\]
is increasing on $(0, 1 - \mu^*]$. Moreover, by the proof of \Cref{lemma:lambda*-mu*}, $\lambda^* > 0$ as $f(x^*, y^*) > 0$. As a result, the equality holds if and only if $f_C(w^*, y^*) = f_R(x^*, z^*) = 1$.

For the second part, notice that 
\[f(x^*, y^*) = \min_{x', y'}T(x^*, y^*, x', y', \rho^*, w^*, z^*).\]
Therefore,
\begin{align*}
    f(x^*,y^*)&\leq T(x^*,y^*,x^*,z^*, \rho^*,w^*,z^*)=\rho^*\lambda^*,\\
    f(x^*,y^*)&\leq T(x^*,y^*,w^*,y^*,\rho^*,w^*,z^*)=(1-\rho^*)\mu^*,
\end{align*}
which immediately derives that $f(x^*,y^*)\leq \min\{\rho^*\lambda^*,(1-\rho^*)\mu^*\}\leq\lambda^*\mu^*/(\lambda^*+\mu^*)$.
\end{proof}

\begin{remark}\label{remark:1/2-sp}
\Cref{lemma:est-for-stgy} tells us that at worst a \SP could reach an approximation ratio of $1/2$. In fact, by the average value inequality, $f(x^*,y^*)\leq\lambda^*\mu^*/(\lambda^*+\mu^*)\leq(\lambda^*+\mu^*)/4\leq 1/2$. We now give the following game to demonstrate this. Consider the payoff matrices:
\begin{align*}
      R=\begin{pmatrix}0.5&0\\1&1\end{pmatrix},\qquad
      C=\begin{pmatrix}0.5&1\\0&1\end{pmatrix}.
\end{align*}
One can verify by \Cref{prop:construct-sp} that $(\T{(1, 0)},\T{(1, 0)})$ is a \SP with dual solution $\rho^* = 1/2, w^* = z^* = \T{(0, 1)}$ and  $f(x^*,y^*)=1/2$. Therefore, merely a \SP itself cannot beat a straightforward algorithm given by \cite{daskalakis2009note}, which always finds a solution with an approximation ratio no greater than $1/2$.
\end{remark}

The following lemma gives a numerical bound of the estimations in \Cref{lemma:ineq-for-est} and the equivalent condition that the quality holds.
\begin{lemma}[\cite{tsaknakis2008optimization}]\label{lemma:ineq-for-est}
Let
\begin{align*}
    b=\max_{s,t\in[0,1]}\min\left\{\frac{st}{s + t},\frac{1-s}{1+t-s}\right\},
\end{align*}
Then $b\approx 0.339321$, which is attained exactly at $s=\mu_0\approx 0.582523$ and $t=\lambda_0\approx 0.812815$.
\end{lemma}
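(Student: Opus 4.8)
The plan is to compute this two-variable max--min explicitly, reducing it first (via an equalizer argument on the inner variable) to a one-variable optimization, and then to the location of a root of a quartic. Write $g_1(s,t):=\frac{st}{s+t}$ and $g_2(s,t):=\frac{1-s}{1+t-s}$. First I would record the elementary monotonicity: on $[0,1]^2$ (with $s<1$), $g_1$ is strictly increasing in each of $s$ and $t$, while $g_2$ is strictly decreasing in each of $s$ and $t$. These follow by inspection (for $g_2$, write it as $\frac{u}{t+u}$ with $u=1-s$).

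Next I would fix $s$ and optimize $\min\{g_1(s,t),g_2(s,t)\}$ over $t\in[0,1]$. Since $g_1(s,0)=0<1=g_2(s,0)$, and $g_1(s,\cdot)$ is strictly increasing while $g_2(s,\cdot)$ is strictly decreasing, the maximum of the inner minimum over $t$ is attained at the unique crossing point $g_1=g_2$ if that point lies in $[0,1]$, and at $t=1$ otherwise. A one-line computation gives $g_1(s,1)\le g_2(s,1)\iff s\le \tfrac12$, so for $s\le\tfrac12$ the inner optimum equals $g_1(s,1)=\tfrac{s}{s+1}\le\tfrac13$; since $\Delta(\tfrac13)>0$ forces $\tfrac13<b$ (see below), only $s\in(\tfrac12,1)$ matters for the outer maximum, and there the inner optimum is the common value $\phi(s)$ on the curve $g_1=g_2$. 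Eliminating $t$ from $g_1(s,t)=g_2(s,t)=v$ yields the single relation $v^2 s=(s-v)(1-s)(1-v)$, equivalently $P_s(v):=(2s-1)v^2+(1-s^2)v-s(1-s)=0$; for fixed $s\in(\tfrac12,1)$ this is an upward parabola in $v$ with $P_s(0)<0$, so $\phi(s)$ is its unique positive root and $\phi(s)\ge v\iff P_s(v)\le 0$.

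Now I would pass to the outer problem. We have $b=\max_{s\in(1/2,1)}\phi(s)=\sup\{v:\exists\,s\in(\tfrac12,1),\ P_s(v)\le 0\}$, so I would re-read $P_s(v)=(1-v)s^2+(2v^2-1)s+v(1-v)$ as an upward parabola in $s$, with vertex $s^\ast(v)=\frac{1-2v^2}{2(1-v)}$. A short check shows $s^\ast(v)\in(\tfrac12,1)$ exactly when $v\in(0,\tfrac12)$, whereas for $v\ge\tfrac12$ one has $P_s(v)>0$ for all $s\in(\tfrac12,1)$ (indeed $P_{1/2}(v)=\frac{3v-1}{4}>0$ there), which already gives $b<\tfrac12$. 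For $v\in(0,\tfrac12)$, the existence of $s$ with $P_s(v)\le 0$ is equivalent to $\min_s P_s(v)\le 0$, i.e. to $\Delta(v):=(1-2v^2)^2-4v(1-v)^2=4v^4-4v^3+4v^2-4v+1\ge 0$. Since $\Delta'(v)=4(4v^3-3v^2+2v-1)$ and the cubic factor is strictly increasing (its derivative $12v^2-6v+2$ has negative discriminant) and negative at $v=\tfrac12$, we get $\Delta'<0$ on $[0,\tfrac12]$; together with $\Delta(0)=1>0$ and $\Delta(\tfrac12)=-\tfrac14<0$ this gives a unique root $b\in(0,\tfrac12)$ with $\Delta(v)\ge 0\iff v\le b$ on that interval. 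Isolating this root (e.g. by exhibiting sign changes of $\Delta$ on short intervals around $0.339321$) gives $b\approx 0.339321$; substituting back gives the \emph{unique} optimizing pair $\mu_0=s^\ast(b)=\frac{1-2b^2}{2(1-b)}\approx0.582523$ and $\lambda_0=\frac{(1-\mu_0)(1-b)}{b}\approx0.812815$ (the latter being the crossing value $t=\frac{b\,\mu_0}{\mu_0-b}$).

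The main obstacle here is bookkeeping rather than depth: one must carefully discard the boundary and degenerate regimes ($s=1$, $t\in\{0,1\}$, $s\le\tfrac12$) and verify that at the optimum the crossing point $t$ and the vertex $s^\ast(v)$ both lie in the open interval $(0,1)$, so that every algebraic identity invoked (in particular the formula $t=\frac{vs}{s-v}$) is valid; and one must pin down the root of $\Delta$ and the resulting $\mu_0,\lambda_0$ to the claimed precision. Since this is exactly the scalar optimization already performed in \cite{tsaknakis2008optimization}, the lemma may alternatively be quoted verbatim from there.
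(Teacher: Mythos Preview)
Your argument is correct: the equalizer reduction in $t$, the quadratic $P_s(v)=(2s-1)v^2+(1-s^2)v-s(1-s)$, its rewriting as a quadratic in $s$, and the resulting quartic $\Delta(v)=4v^4-4v^3+4v^2-4v+1$ all check out, as do the boundary exclusions and the numerical identifications of $b,\mu_0,\lambda_0$. One small point worth making explicit in the $v\ge\tfrac12$ step: you assert $P_s(v)>0$ for all $s\in(\tfrac12,1)$ but only evaluate $P_{1/2}(v)$; the argument is completed by your own vertex computation, since $s^\ast(v)\le\tfrac12$ there forces $s\mapsto P_s(v)$ to be increasing on $(\tfrac12,1)$.

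As for comparison: the paper gives no proof of this lemma at all---it is simply quoted from \cite{tsaknakis2008optimization} as stated. So your write-up is not an alternative route but rather a self-contained derivation where the paper chose to cite. Your closing remark that the lemma ``may alternatively be quoted verbatim'' is exactly what the paper does.
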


For now, all the preparations are finished. All conditions that lead to a tight instance are given.\footnote{Precisely, these conditions guarantee that if we only make adjustments on boundary $\partial\Lambda$, we will attain a tight bound $0.3393$. But it suffices for the original TS algorithm. }  After several trials, one can find a tight instance.

At last, we prove \Cref{thm:worst-case-exist} by verifying the tight instance \cref{ex:tight-bound} with \SP $x^*=y^*=\T{(1,0,0)}$ and dual solution $\rho^*=\mu_0/(\lambda_0+\mu_0)$, $w^*=z^*=\T{(0,0,1)}$. Note that the theorem also guarantees the bound $0.3393$ when we try to adjust on the rectangle $\Lambda$, not only on the boundary $\partial\Lambda$.

\begin{proof}[Proof of \Cref{thm:worst-case-exist}]
We prove the theorem by verifying game \cref{ex:tight-bound} with stationary point $x^*=y^*=\T{(1,0,0)}$ and dual solution $\rho^*=\mu_0/(\lambda_0+\mu_0)$, $w^*=z^*=\T{(0,0,1)}$. Let $x_\alpha=\alpha w^*+(1-\alpha)x^*$, $y_\beta=\beta z^*+(1-\beta)y^*$.
\begin{enumerate}[label=\textit{Step \arabic*.}, fullwidth, listparindent=\parindent]
\item  Verify that $(x^*,y^*)$ is a \SPn. We have $f_R(x^*,y^*)=f_C(x^*,y^*)=b$. 
\[A(\rho^*,y^*,z^*)=(-0.1\rho^*+(1-\rho^*)b)e_3,\] 
therefore $\{1\}=\supp(x^*)\subset\{1,2,3\}=\suppmin(A(\rho^*,y^*,z^*))$. Condition \cref{cond:sp-1} holds. Similarly, condition \cref{cond:sp-2} holds, and the former statement is proved \Cref{prop:construct-sp}.
\item Verify that $S_C(x^*)\cap S_C(w^*)\neq\varnothing$ and $f_C(w^*,z^*)>f_R(w^*,z^*)$. The latter can be checked by direction calculation. One can calculate that $S_C(x^*)=\{2,3\}$ and $S_C(w^*)=\{2\}$, therefore their intersection is $\{2\}\neq\varnothing$. Consequently, by \Cref{prop:stgy-comp}, $f\AdjTh=f\AdjSe$ and by \Cref{lemma:linearize}, $f_C(x_\alpha,y_\beta)$ is a linear function of $\alpha$.
\item Verify that $\lambda^*=\lambda_0$, $\mu^*=\mu_0$, $f_R(x^*,z^*)=f_C(w^*,y^*)=1$, and $f(x^*,y^*)=b$. One can check these claims by calculation. Then by \Cref{lemma:est-for-stgy} and \Cref{lemma:ineq-for-est}, $f\AdjTh = f\AdjSe = b$.
\item Verify that $b\leq f(\alpha w^*+(1-\alpha)x^*,\beta z^*+(1-\beta)y^*)$ for any $\alpha,\beta\in[0,1]$.

First, we do a verification similar to step 2: $S_R(y^*)=\{2,3\}$ and $S_R(w^*)=\{2\}$, therefore $S_R(y^*)\cap S_R(z^*)=\{2\}\neq\varnothing$, and $f_R(x_\alpha,y_\beta)$ is a linear function of $\beta$.

Since $f_I(x_\alpha,y_\beta)\ (I\in\{R, C\})$ is a linear function of $\alpha$ or $\beta$, we can calculate the minimum point $(x^o(\beta),y_\beta)$ of $f$ given specific $\beta$.
\begin{align*}
    f_R(x_\alpha,y_\beta)&=b+(1-b)\beta-(b + (\lambda_0 - b)\beta)\alpha,\\
    f_C(x_\alpha,y_\beta)&=b - b\beta+(1 - b + (b - \mu_0)\beta)\alpha.
\end{align*}
and $x^o(\beta)$ satisfies
\begin{align*}
    &f_R(x^o(\beta),y_\beta)=f_C(x^o(\beta),y_\beta)\\
    \Longleftrightarrow\ & x^o(\beta)= \frac{\beta}{1+(\lambda_0-\mu_0)\beta}w^* + \left(1 - \frac{\beta}{1+(\lambda_0-\mu_0)\beta}\right)y^*.
\end{align*}

Now let
\[
    g(\beta):=f(x^o(\beta),y_\beta)=b + (1-b)\beta-\frac{(b + (\lambda_0 - b)\beta)\beta}{1+(\lambda_0-\mu_0)\beta}.
\]
As $\lambda_0 > \mu_0$, to prove that $\min_\beta g(\beta)=b$, it is sufficient to show
\begin{align*}
    (1-b)\beta(1+(\lambda_0-\mu_0)\beta)-(b + (\lambda_0 - b)\beta)\beta\geq 0.
\end{align*}
Or equivalently,
\begin{align*}
    h(\beta):=(1-2b)\beta+(b(1+\mu_0-\lambda_0)-\mu_0)\beta^2\geq 0.
\end{align*}

Notice that $h(\beta)$ has a negative coefficient on the square term, therefore $h(\beta)$ is a concave function. Further, we have $h(0)=0$ and $h(1)=1-2b+b(1+\mu_0-\lambda)-\mu_0>0$. By concavity, $h(\beta)\geq \beta h(1)+(1-\beta)h(0)\geq 0$.
\end{enumerate}

Now we complete the proof.
\end{proof}

From the proof of \Cref{thm:worst-case-exist}, we obtain the following useful corollaries.
\begin{corollary}\label{cor:boundary-bad}
Suppose $f(x^*,y^*)=f\AdjTh=b$. If either of the following two statements holds:
\begin{enumerate}
\item $S_C(x^*)\cap S_C(w^*)\neq\varnothing$ and $f_C(w^*,z^*)>f_R(w^*,z^*)$,
\item $S_R(y^*)\cap S_R(z^*)\neq\varnothing$ and $f_R(w^*,z^*)>f_C(w^*,z^*)$,
\end{enumerate}
then for any $(x,y)$ on the boundary of $\Lambda$, $f(x,y)\geq b$.
\end{corollary}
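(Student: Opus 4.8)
The plan is to reduce the claim to the two boundary pieces $\Gamma_1$ and $\Gamma_2$ already analyzed in \Cref{lemma:best-on-boundary} and \Cref{prop:stgy-comp}, after noting that $\partial\Lambda=\Gamma_1\cup\Gamma_2$: the edges $\{\beta=0\}$ and $\{\alpha=0\}$ of the parameterized rectangle $\Lambda$ together form $\Gamma_1$, and the edges $\{\beta=1\}$ and $\{\alpha=1\}$ together form $\Gamma_2$, so the two sets exhaust all four edges. Throughout we keep the standing assumption $\rho^*\in(0,1)$, since otherwise $(x^*,y^*)$ is already a \NEn and \Cref{lemma:best-on-boundary} and \Cref{prop:stgy-comp} do not apply (and the statement is vacuous for our purposes).

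On $\Gamma_1$, \Cref{lemma:best-on-boundary} says $(x^*,y^*)$ is the minimum point of $f$, so $f(x,y)\geq f(x^*,y^*)=b$ for all $(x,y)\in\Gamma_1$, using the hypothesis $f(x^*,y^*)=b$. On $\Gamma_2$, \Cref{lemma:best-on-boundary} says $\AdjSe$ is the minimum point of $f$, so it suffices to show $f\AdjSe\geq b$. This is where the two hypotheses of the corollary enter: hypothesis~1 --- namely $f_C(w^*,z^*)>f_R(w^*,z^*)$ together with $S_C(x^*)\cap S_C(w^*)\neq\varnothing$ --- is exactly the first branch of the case distinction in \Cref{prop:stgy-comp}, and hypothesis~2 is the symmetric second branch; in either case \Cref{prop:stgy-comp} gives $f\AdjSe=f\AdjTh$. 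Combining this with the hypothesis $f\AdjTh=b$ yields $f\AdjSe=b$, hence $f(x,y)\geq b$ on all of $\Gamma_2$. Putting the two pieces together proves $f(x,y)\geq b$ for every $(x,y)\in\partial\Lambda=\Gamma_1\cup\Gamma_2$.

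I do not anticipate a genuine obstacle: the corollary is essentially an assembly of results already in hand. The only points deserving a line of care are (i) confirming that $\Gamma_1\cup\Gamma_2$ is indeed all of $\partial\Lambda$, and (ii) observing that each of the two hypotheses forces the strict-inequality branch of \Cref{prop:stgy-comp}, so that the degenerate case $f_R(w^*,z^*)=f_C(w^*,z^*)$ --- in which $\AdjSe$ and $\AdjTh$ both collapse to $(w^*,z^*)$ --- never needs to be treated separately. One could alternatively obtain the same conclusion by literally restricting Steps 1--3 of the proof of \Cref{thm:worst-case-exist} to $\partial\Lambda$, but invoking \Cref{lemma:best-on-boundary} and \Cref{prop:stgy-comp} directly is cleaner.
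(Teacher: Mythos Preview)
Your proof is correct and matches the paper's intended argument: the paper does not spell out a separate proof for this corollary but simply notes it follows from the proof of \Cref{thm:worst-case-exist}, and your decomposition via $\partial\Lambda=\Gamma_1\cup\Gamma_2$ together with \Cref{lemma:best-on-boundary} and \Cref{prop:stgy-comp} is precisely the general mechanism underlying Steps~1--3 of that proof.
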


\begin{corollary}\label{cor:convex-bad}
Suppose $f(x^*,y^*)=f\AdjTh=b$, $S_C(x^*)\cap S_C(w^*)\neq\varnothing$ and $S_R(y^*)\cap S_R(z^*)\neq\varnothing$. Then for any $\alpha,\beta\in[0,1]$, $f(\alpha w^*+(1-\alpha)x^*,\beta z^*+(1-\beta)y^*)\geq b$.
\end{corollary}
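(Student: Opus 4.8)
Throughout I write $x_\alpha:=\alpha w^*+(1-\alpha)x^*$ and $y_\beta:=\beta z^*+(1-\beta)y^*$. The plan is to reproduce Steps~2--4 of the proof of \Cref{thm:worst-case-exist} at the level of generality afforded by the three hypotheses, so that the explicit numeric game used there is replaced by the abstract consequences of the assumption $f(x^*,y^*)=f\AdjTh=b$. Since $f(x^*,y^*)=b>0$, the \SP $(x^*,y^*)$ is not a \NEn, so $\rho^*\in(0,1)$ by \Cref{lemma:lambda*-mu*} and every earlier lemma applies; the inequalities in the proof of \Cref{lemma:lambda*-mu*} also give $b\le\rho^*\lambda^*$ and $b\le(1-\rho^*)\mu^*$, hence $\lambda^*,\mu^*>0$. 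I would then pin down the four quantities $\lambda^*,\mu^*,f_R(x^*,z^*),f_C(w^*,y^*)$. By \Cref{lemma:est-for-stgy}, $b=f(x^*,y^*)\le\lambda^*\mu^*/(\lambda^*+\mu^*)$. The degenerate case $f_R(w^*,z^*)=f_C(w^*,z^*)$ is excluded: there \Cref{prop:stgy-comp} gives $\AdjTh=(w^*,z^*)$, hence $b=f\AdjTh=f_R(w^*,z^*)=f_C(w^*,z^*)$, so $\lambda^*=f_R(x^*,z^*)-f_R(w^*,z^*)\le1-b$ and $\mu^*\le1-b$, whence $b\le\lambda^*\mu^*/(\lambda^*+\mu^*)\le(1-b)/2$, contradicting $b\approx0.3393>1/3$. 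Thus, after possibly swapping the row and column players (which exchanges $F_R\leftrightarrow F_C$, $\lambda^*\leftrightarrow\mu^*$, $\alpha\leftrightarrow\beta$ and leaves $\Lambda$ invariant), I may assume $f_C(w^*,z^*)>f_R(w^*,z^*)$; then \Cref{lemma:est-for-stgy} gives $b=f\AdjTh\le(1-\mu^*)/(1+\lambda^*-\mu^*)$, so with \Cref{lemma:ineq-for-est} applied at $s=\mu^*$ and $t=\lambda^*$ we get $b\le\min\{\lambda^*\mu^*/(\lambda^*+\mu^*),\ (1-\mu^*)/(1+\lambda^*-\mu^*)\}\le b$, and uniqueness of the maximizer in \Cref{lemma:ineq-for-est} forces $\lambda^*=\lambda_0$, $\mu^*=\mu_0$, after which the equality condition of \Cref{lemma:est-for-stgy} forces $f_R(x^*,z^*)=f_C(w^*,y^*)=1$.

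Next I would show that $F_R$ and $F_C$ are bilinear on $\Lambda$ and hence determined by their four corner values. \Cref{lemma:monotone-convex} already makes $F_R(\cdot,\beta)$ affine (decreasing) and $F_C(\alpha,\cdot)$ affine (decreasing); the hypotheses $S_C(x^*)\cap S_C(w^*)\neq\varnothing$ and $S_R(y^*)\cap S_R(z^*)\neq\varnothing$ together with \Cref{lemma:linearize} make $F_C(\cdot,\beta)$ and $F_R(\alpha,\cdot)$ affine as well, and a function affine in each of its two arguments separately is bilinear. The corners are now known from the previous step: $F_R(0,0)=F_C(0,0)=b$ (as $(x^*,y^*)$ is a \SPn); $F_R(1,0)=f_R(w^*,y^*)=0$ and $F_C(0,1)=f_C(x^*,z^*)=0$ (as $\supp(w^*)\subseteq S_R(y^*)$ and $\supp(z^*)\subseteq S_C(x^*)$); $F_R(0,1)=f_R(x^*,z^*)=1$, $F_C(1,0)=f_C(w^*,y^*)=1$; and $F_R(1,1)=f_R(x^*,z^*)-\lambda^*=1-\lambda_0$, $F_C(1,1)=f_C(w^*,y^*)-\mu^*=1-\mu_0$. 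Bilinear interpolation from these corners reproduces exactly the expressions for $f_R(x_\alpha,y_\beta)$ and $f_C(x_\alpha,y_\beta)$ appearing in Step~4 of the proof of \Cref{thm:worst-case-exist}.

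Finally, the conclusion follows from the one-variable minimization of that Step~4. For fixed $\beta\in[0,1]$, $F_R(\cdot,\beta)$ is strictly decreasing and $F_C(\cdot,\beta)$ is strictly increasing, and the corner values together with $\lambda_0>\mu_0$ give $F_R(0,\beta)\ge F_C(0,\beta)$ and $F_C(1,\beta)\ge F_R(1,\beta)$; hence $\min_{\alpha\in[0,1]}f(x_\alpha,y_\beta)=\min_\alpha\max\{F_R(\alpha,\beta),F_C(\alpha,\beta)\}$ is attained at the unique crossing point $\alpha=x^o(\beta)\in[0,1]$. Setting $g(\beta):=F_R(x^o(\beta),y_\beta)$ as in Step~4, the bound $g(\beta)\ge b$ reduces to $h(\beta):=(1-2b)\beta+(b(1+\mu_0-\lambda_0)-\mu_0)\beta^2\ge0$, which holds because $h$ is concave with $h(0)=0$ and $h(1)\ge0$, so $h(\beta)\ge\beta h(1)+(1-\beta)h(0)=0$. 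Therefore $f(\alpha w^*+(1-\alpha)x^*,\beta z^*+(1-\beta)y^*)\ge g(\beta)\ge b$ for all $\alpha,\beta\in[0,1]$.

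I expect the main obstacle, beyond simply invoking Step~4 of \Cref{thm:worst-case-exist}, to be the first step above: converting the qualitative hypothesis $f(x^*,y^*)=f\AdjTh=b$ into the exact identities $\lambda^*=\lambda_0$, $\mu^*=\mu_0$, $f_R(x^*,z^*)=f_C(w^*,y^*)=1$. This hinges on excluding the degenerate regime $f_R(w^*,z^*)=f_C(w^*,z^*)$ --- which uses the numeric fact $b>1/3$ --- and on the uniqueness of the maximizer in \Cref{lemma:ineq-for-est}, which together force the chain of inequalities from \Cref{lemma:est-for-stgy} to be equalities.
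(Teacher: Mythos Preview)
Your proof is correct and follows the route the paper intends: the paper states this corollary as following ``from the proof of \Cref{thm:worst-case-exist}'' without spelling out details, and you reconstruct precisely why the computation in Step~4 there depends only on the abstract hypotheses. The main substance you add --- deriving $\lambda^*=\lambda_0$, $\mu^*=\mu_0$, $f_R(x^*,z^*)=f_C(w^*,y^*)=1$ from $f(x^*,y^*)=f\AdjTh=b$ via \Cref{lemma:est-for-stgy}, \Cref{lemma:ineq-for-est}, and the exclusion of $f_R(w^*,z^*)=f_C(w^*,z^*)$ using $b>1/3$ --- is exactly what is needed to bridge the specific verification in \Cref{thm:worst-case-exist} to the general statement, and the paper leaves it implicit.

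Two small remarks. First, a notational slip: you write $g(\beta):=F_R(x^o(\beta),y_\beta)$, but $F_R$ takes scalar arguments; you mean $f_R(x^o(\beta),y_\beta)$ or $F_R(\alpha^o(\beta),\beta)$. Second, in fact $h(1)=0$ exactly (it is equivalent to the identity $b(1+\lambda_0-\mu_0)=1-\mu_0$ defining $b$), so your ``$h(1)\ge0$'' is sharp; the paper's ``$h(1)>0$'' is a minor imprecision, and your weak inequality is the right thing to write.
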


It is worth noting that the game with payoff matrices \cref{ex:tight-bound} has a pure \NE with $x = y = \T{(0, 1, 0)}$, and the \SP 
\[(x^*,y^*) = (\T{(1, 0, 0)}, \T{(1, 0, 0)})\]
is a strictly-dominated strategy pair. However, the supports of strategies forming a \NE never include strictly-dominated pure strategies! We can also construct lots of games that are able to attain the tight bound but own distinct characteristics. For instance, we can give a game with no dominant strategies but attain the tight bound. Some examples are listed in \Cref{app:MTI}. Such results suggest that \SPn s may not be an optimal concept (in theory) for a better calculation of approximate \NEn.
\section{Generating Tight Instances}\label{sec:gen-TI}

In \Cref{sec:TI}, we prove the existence of tight game instances. 
Furthermore, as our preparations suggest, we can mathematically profile \emph{all} games that are able to attain the tight bound.
In this section, we gather properties in the previous sections and present a generator for such games.
Using the generator, we can dig into the previous three approximate \NE algorithms and reveal the behavior of these algorithms and also the features of \SPn s.
\Cref{algo:gen} is the generator of tight instances, in which the inputs are arbitrary $(x^*, y^*), (w^*, z^*) \in \Delta_m \times \Delta_n$. The algorithm outputs games such that $(x^*, y^*)$ is a \SP and $(\rho^* = \lambda_0 / (\lambda_0 + \mu_0), w^*, z^*)$ is a corresponding dual solution, or outputs ``NO'' if there is no such game.

The main idea of the algorithm is as follows.
\Cref{prop:construct-sp} shows an easy-to-verify equivalent condition of the \SPn; and all additional equivalence conditions required by a tight instance are stated in \Cref{prop:stgy-comp}, \Cref{lemma:est-for-stgy} and \Cref{lemma:ineq-for-est}. All of these conditions form convex linear restrictions over $(R,C)$.
Therefore, if we enumerate all pairs of possible pure strategies in $S_R(z^*)$ and $S_C(w^*)$ respectively, whether there exists a tight instance solution becomes a linear programming problem.
\vspace{0.5em}

\begin{breakablealgorithm}
\renewcommand{\algorithmicrequire}{\textbf{Input}}
\newcommand{\OUTPUT}[1]{\textbf{Output} #1}
\renewcommand{\algorithmiccomment}[1]{// \textit{#1}}
\caption{Tight Instance Generator}
\label{algo:gen}
\begin{algorithmic}[1]
\REQUIRE $(x^*, y^*),(w^*, z^*) \in\Delta_m\times\Delta_n$.
\bigskip
\IF{$\supp(x^*)=\{1,2,\ldots,m\}$ \OR $\supp(y^*)=\{1,2,\ldots,n\}$}
\STATE{\OUTPUT{``NO"}}
\ENDIF
\STATE $\rho^* \leftarrow \mu_0/(\lambda_0+\mu_0)$.
\bigskip
\STATE \COMMENT{Enumerate $k\in S_R(z^*)$ and $l\in S_C(w^*)$.}
\FOR{$k\in\{1,\ldots,m\}\setminus\supp(x^*), l\in\{1,\ldots,n\}\setminus\supp(y^*)$}
\medskip
\STATE Solve a feasible $R=(r_{ij})_{m\times n},C=(c_{ij})_{m\times n}$ from the following LP with no objective function:
\STATE \qquad\qquad\COMMENT{basic requirements.}
\STATE \qquad\qquad$0\leq r_{ij},c_{ij}\leq 1$ for $i\in\{1,\ldots,m\},j\in\{1,\ldots,n\}$,
\STATE \qquad\qquad$\supp(w^*)\subset S_R(y^*)$, $\supp(z^*)\subset S_C(x^*)$,
\STATE \qquad\qquad$k\in S_R(z^*)$, $l\in S_C(w^*)$,
\medskip
\STATE \qquad\qquad\COMMENT{ensure $(x^*,y^*)$ is a \SPn.}
\STATE \qquad\qquad$\supp(x^*)\subset\suppmin(-\rho^* Ry^*+(1-\rho^*)(Cz^*-Cy^*))$, \label{algline:SP1}
\STATE \qquad\qquad$\supp(y^*)\subset\suppmin(\rho^*(\T{R}w^*-\T{R}x^*)-(1-\rho^*)\T{R}x^*)$,\label{algline:SP2}
\medskip
\STATE \qquad\qquad\COMMENT{ensure $f(x^*,y^*) = b$.}
\STATE \qquad\qquad$\T{(w^*-x^*)}Ry^*=\T{x^*}C(z^*-y^*)=b$,\label{algline:f-b}
\medskip
\STATE \qquad\qquad\COMMENT{ensure $f\AdjTh = b$.}
\STATE \qquad\qquad$\T{x^*}Rz^*=\T{w^*}Cy^*=0$,\label{algline:br}
\STATE \qquad\qquad$r_{kj}=1$ for $j\in\supp(z^*)$, $c_{il}=1$ for $i\in\supp(w^*)$,\label{algline:f1}
\STATE \qquad\qquad$\T{w^*}Rz^*=\lambda_0$, $\T{w^*}Cz^*=\mu_0$,\label{algline:lambda-mu}
\medskip
\STATE \qquad\qquad\COMMENT{ensure $f\AdjSe=f\AdjTh$.}
\STATE \qquad\qquad$l\in S_C(x^*)$.\label{algline:2-3}
\medskip
\IF{LP is feasible}
\STATE \OUTPUT{feasible solutions}
\ENDIF
\ENDFOR
\bigskip
\IF{LP is infeasible in any round}
\STATE \OUTPUT{``No''}
\ENDIF
\end{algorithmic}
\end{breakablealgorithm}

\begin{proposition}\label{prop:generator}
Given $(x^*, y^*), (w^*, z^*)\in\Delta_m\times\Delta_n$, all the feasible solutions of the LP in \Cref{algo:gen} are all the games $(R, C)$ satisfying
\begin{enumerate}
\item $(x^*,y^*)$ is a stationary point,
\item tuple $(\rho^*=\mu_0/(\lambda_0+\mu_0), w^*, z^*)$ is the dual solution\footnote{One can verify that the value of $\rho^*$ in the dual solution of any tight \SP has to be $\mu_0/(\lambda_0+\mu_0)$, by the second part of \Cref{lemma:est-for-stgy}.},
\item $f_C(w^*,z^*)>f_R(w^*,z^*)$, and
\item $f(x,y)\geq b$ for all $(x,y)$ on the boundary of $\Lambda$.
\end{enumerate}
if such a game exists, and the output is ``NO'' if no such game exists.
\end{proposition}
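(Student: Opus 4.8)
The plan is to establish two inclusions: ``$(R,C)$ feasible for the $(k,l)$-th LP for some $k,l$'' $\Rightarrow$ ``$(R,C)$ satisfies (1)--(4)'' (soundness), and ``$(R,C)$ satisfies (1)--(4)'' $\Rightarrow$ ``$(R,C)$ feasible for the $(k,l)$-th LP for a suitable $k,l$'' (completeness). The ``NO'' claim is then the contrapositive, and the early ``NO'' branch triggered by $\supp(x^*)=\{1,\dots,m\}$ or $\supp(y^*)=\{1,\dots,n\}$ is justified because the completeness analysis shows that any game satisfying (1)--(4) has a row outside $\supp(x^*)$ and a column outside $\supp(y^*)$.

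For soundness, fix $k,l$ and a feasible $(R,C)$. Since $\supp(w^*)\subseteq S_R(y^*)$ and $\supp(z^*)\subseteq S_C(x^*)$, the constraint at \cref{algline:f-b} reads $f_R(x^*,y^*)=f_C(x^*,y^*)=b$; combined with the constraints at \cref{algline:SP1} and \cref{algline:SP2} (which are \cref{cond:sp-1} and \cref{cond:sp-2} at $\rho=\rho^*$), \Cref{prop:construct-sp} gives (1), the argument in its proof (via \Cref{prop:dual}) gives that $(\rho^*,w^*,z^*)$ is a dual solution, i.e.\ (2), and \Cref{prop:sp-equi} gives $f(x^*,y^*)=b$. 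Next, \cref{algline:br} and \cref{algline:f1} together with $k\in S_R(z^*)$, $l\in S_C(w^*)$ force $\max(Rz^*)=\max(\T{C}w^*)=1$ and $\T{x^*}Rz^*=\T{w^*}Cy^*=0$, hence $f_R(x^*,z^*)=f_C(w^*,y^*)=1$; adding \cref{algline:lambda-mu} gives $\lambda^*=\lambda_0$, $\mu^*=\mu_0$, so $f_C(w^*,z^*)=1-\mu_0>1-\lambda_0=f_R(w^*,z^*)$ (using $\lambda_0>\mu_0$), which is (3). Substituting these values into the first statement of \Cref{lemma:est-for-stgy} and using \Cref{lemma:ineq-for-est} yields $f\AdjTh=b$, while \cref{algline:2-3} together with $l\in S_C(w^*)$ gives $l\in S_C(x^*)\cap S_C(w^*)\neq\varnothing$. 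With $f(x^*,y^*)=f\AdjTh=b$ and (3) in hand, the first case of \Cref{cor:boundary-bad} yields (4).

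For completeness, let $(R,C)$ satisfy (1)--(4). By \Cref{lemma:best-on-boundary}, $\min_{\partial\Lambda}f=\min\{f(x^*,y^*),f\AdjSe\}$, so (4) gives $f(x^*,y^*)\geq b$ and $f\AdjSe\geq b$. The bound $f(x^*,y^*)\geq b$ with the second statement of \Cref{lemma:est-for-stgy} and $\rho^*=\mu_0/(\lambda_0+\mu_0)$ forces $\lambda^*\geq\lambda_0$ and $\mu^*\geq\mu_0$; by (3), $f_C(w^*,z^*)>0$, so $\mu^*=f_C(w^*,y^*)-f_C(w^*,z^*)<1$. Using (3) and the first statement of \Cref{lemma:est-for-stgy}, $f\AdjTh\leq\frac{1-\mu^*}{1+\lambda^*-\mu^*}$, an expression strictly decreasing in $\lambda^*$ and in $\mu^*$ on the range $\lambda^*\geq\lambda_0$, $\mu_0\leq\mu^*<1$, hence $\leq\frac{1-\mu_0}{1+\lambda_0-\mu_0}=b$ by \Cref{lemma:ineq-for-est}; since $f\AdjSe\leq f\AdjTh$ by \Cref{prop:stgy-comp} and $f\AdjSe\geq b$, we get $f\AdjSe=f\AdjTh=b$. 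Strict monotonicity then forces $\lambda^*=\lambda_0$, $\mu^*=\mu_0$, and the equality clause in the first statement of \Cref{lemma:est-for-stgy} forces $f_R(x^*,z^*)=f_C(w^*,y^*)=1$; plugging $\lambda^*=\lambda_0,\mu^*=\mu_0$ back into its second statement gives $f(x^*,y^*)\leq\frac{\lambda_0\mu_0}{\lambda_0+\mu_0}=b$, so $f(x^*,y^*)=b$. Now $f_R(x^*,z^*)=1$ with $R\in[0,1]^{m\times n}$ gives $\max(Rz^*)=1$ and $\T{x^*}Rz^*=0$, so any $k\in S_R(z^*)$ has $(Rz^*)_k=1$, whence $r_{kj}=1$ for $j\in\supp(z^*)$ and $k\notin\supp(x^*)$; symmetrically $f_C(w^*,y^*)=1$ gives $\max(\T{C}w^*)=1$, $\T{w^*}Cy^*=0$, and from $f\AdjSe=f\AdjTh$ and (3), \Cref{prop:stgy-comp} gives $S_C(x^*)\cap S_C(w^*)\neq\varnothing$; picking $l$ in this intersection yields $c_{il}=1$ for $i\in\supp(w^*)$ and $l\notin\supp(y^*)$. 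It remains to check that $(R,C)$ satisfies every constraint of the $(k,l)$-th LP, which is routine: the support and \cref{algline:SP1}--\cref{algline:SP2} constraints come from (1)--(2) and \Cref{prop:construct-sp}; \cref{algline:f-b} is $f_R(x^*,y^*)=f_C(x^*,y^*)=b$; \cref{algline:br} and \cref{algline:f1} were just derived; \cref{algline:lambda-mu} is $\T{w^*}Rz^*=\lambda^*+\T{x^*}Rz^*=\lambda_0$ together with the symmetric identity; and $k\in S_R(z^*)$, $l\in S_C(w^*)\cap S_C(x^*)$ hold by construction.

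I expect the main obstacle to be the completeness direction, concretely the squeeze argument extracting $\lambda^*=\lambda_0$, $\mu^*=\mu_0$, $f_R(x^*,z^*)=f_C(w^*,y^*)=1$ and $f(x^*,y^*)=b$ from (1)--(4): one must apply the bounds of \Cref{lemma:est-for-stgy} only in the regime where their monotonicity and equality clauses are valid (in particular one needs $\mu^*<1$, which follows from (3), for $\frac{1-\mu}{1+\lambda-\mu}$ to be genuinely strictly monotone, and $(x^*,y^*)$ not a Nash equilibrium, which follows from $f(x^*,y^*)=b>0$), and one must verify that the $k,l$ extracted from a feasible game land in the ranges over which \Cref{algo:gen} enumerates, namely $k\notin\supp(x^*)$ and $l\notin\supp(y^*)$. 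The soundness direction, by contrast, is bookkeeping that invokes \Cref{prop:construct-sp}, \Cref{prop:sp-equi}, \Cref{lemma:est-for-stgy}, \Cref{lemma:ineq-for-est} and \Cref{cor:boundary-bad} as black boxes.
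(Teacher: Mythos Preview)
Your proposal is correct and follows essentially the same approach as the paper's proof, though you organize it more carefully by separating soundness and completeness and by invoking \Cref{cor:boundary-bad} explicitly for the forward direction. The paper's own proof is terser---it asserts that each block of LP constraints is an ``equivalent condition'' for the corresponding equality ($f(x^*,y^*)=b$, $f_R(x^*,z^*)=f_C(w^*,y^*)=1$, $\lambda^*=\lambda_0$, $\mu^*=\mu_0$, $S_C(x^*)\cap S_C(w^*)\neq\varnothing$) and appeals to \Cref{lemma:best-on-boundary}, \Cref{lemma:est-for-stgy}, \Cref{lemma:ineq-for-est}, and \Cref{prop:stgy-comp} for the reductions---so your squeeze argument extracting $\lambda^*=\lambda_0$, $\mu^*=\mu_0$ from (1)--(4), and your verification that the recovered $k,l$ lie in the enumeration range, make explicit what the paper leaves implicit.
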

\begin{proof}
By \Cref{prop:construct-sp}, line \ref{algline:SP1} and line \ref{algline:SP2} together form an equivalent condition of the first two statements that $(x^*,y^*)$ is a \SP and $(\rho^*,w^*,z^*)$ is the corresponding dual solution.

Now we prove the last two statements. By \Cref{lemma:best-on-boundary}, it suffices to prove that the algorithm outputs all games such that $f_C(w^*,z^*)>f_R(w^*,z^*)$, $f(x^*,y^*)\geq b$ and $f\AdjSe\geq b$. By \Cref{lemma:est-for-stgy} and \Cref{lemma:ineq-for-est}, we already have 
\[\min\{f(x^*,y^*),f\AdjTh\}\leq b,\]
and the equality holds if and only if $f(x^*,y^*)=f\AdjTh=b$. By \Cref{prop:stgy-comp}, $f\AdjTh\geq f\AdjSe$, so it suffices to show that $f(x^*,y^*)=f\AdjSe=f\AdjTh=b$ and $f_C(w^*,z^*)>f_R(w^*,z^*)$.

Line \ref{algline:f-b} ensures that $f(x^*,y^*)=b$. Line \ref{algline:br} and line \ref{algline:f1} together ensure that
\[f_R(x^*,z^*)=f_C(w^*,y^*)=1.\] 
Line \ref{algline:br} and line \ref{algline:lambda-mu} together ensure that $\lambda^*=\lambda_0$ and $\mu^*=\mu_0$. By \Cref{lemma:est-for-stgy} and \Cref{lemma:ineq-for-est}, these are equivalent conditions such that $f\AdjTh=b$, and it naturally leads to $f_C(w^*,z^*)>f_R(w^*,z^*)$ by \Cref{lemma:est-for-stgy}.

At last, by \Cref{prop:stgy-comp}, line \ref{algline:2-3} is the equivalent condition such that \[f\AdjTh=f\AdjSe.\]
\end{proof}


For the sake of experiments, there are three main concerns of the generator we take into account.

First, sometimes we want to generate games such that the minimum value of $f$ on the entire $\Lambda$ is also $b\approx 0.3393$. By \Cref{cor:convex-bad}, it suffices to add a constraint $S_R(y^*)\cap S_R(z^*)\neq\varnothing$ to the LP in \Cref{algo:gen}. This is not a necessary condition though.

Second, the dual solution of the LP is usually not unique, and we cannot expect which dual solution the LP algorithm yields. \cite{mangasarian1978uniqueness} gives some methods to guarantee that the dual solution is unique. In practice, we simply make sure that $w^*$ and $z^*$ are pure strategies. The reason is that even if the dual solution is not unique, the simplex algorithm will end up with some optimal dual solution on a vertex, in which cases, $w^*$ and $z^*$ are often both pure strategies.

Third, all feasible LP solutions form a convex polyhedron, which indicates that the cardinality of solutions is a continuum. Hence we need a sampling method to generate tight instances. A simple approach is to set a random object function, and the LP algorithm will find different vertices of the convex polyhedron. Make convex combinations of these vertices, and the results are samples of tight instances.

\section{Tightness of the Deligkas-Fasoulakis-Markakis Algorithm}\label{sec:1/3-tight}

Very recently, the work by Deligkas, Fasoulakis, and Markakis \cite{DBLP:journals/corr/abs-2204-11525} provides a polynomial-time algorithm computing a $1/3$-approximate Nash equilibrium. The DFM algorithm is also based on the same descent procedure but equipped with a more complicated adjustment method by using convex combinations with additional best response strategies beyond square $\Lambda$. 
They prove that such an adjustment method yields an upper bound approximation ratio of $1/3$. In this section, based on techniques developed in \Cref{sec:algo} and \Cref{sec:TI}, we show that $1/3$ is also the lower bound of the DFM algorithm.

We first introduce the adjustment method of the DFM algorithm. Suppose that $(x^*,y^*)$ is a stationary point and the corresponding dual solution is $(\rho^*, w^*,z^*)$. Recall that in \Cref{sec:TI}, we define $\lambda^*=\T{(w^*-x^*)}Rz^*$ and $\mu^*=\T{w^*}C(z^*-y^*)$. The adjustment presented in \Cref{algo:1/3-adj} is divided into four cases. In the case that $1/2<\lambda^*\leq 2/3<\mu^*$ and its symmetric case, the adjustment is delicate.

\begin{breakablealgorithm}
\renewcommand{\algorithmicrequire}{\textbf{Input}}
\newcommand{\OUTPUT}[1]{\textbf{Output} #1}
\renewcommand{\algorithmiccomment}[1]{// \textit{#1}}
\caption{Adjustment Method in the DFM algorithm.}
\label{algo:1/3-adj}
\begin{algorithmic}[1]
\REQUIRE $(x^*, y^*),(w^*, z^*) \in\Delta_m\times\Delta_n$, $\lambda^*,\mu^*\in[0,1]$.
\bigskip
\IF{$\min\{\lambda^*,\mu^*\}\leq 1/2$ \OR $\max\{\lambda^*,\mu^*\}\leq 2/3$}\label{line:case1start}
\STATE{\OUTPUT{$(x^*,y^*)$}}
\ENDIF
\IF{$\min\{\lambda^*,\mu^*\}\geq 2/3$}
\STATE{\OUTPUT{$(w^*,z^*)$}}
\ENDIF\label{line:case2end}
\IF{$1/2<\lambda^*\leq 2/3<\mu^*$}\label{line:case3start}
\STATE{$\hat{y}\leftarrow (y^*+z^*)/2$.}
\STATE{Pick $\hat{w}\in\Delta_m$ such that $\supp(\hat{w})\subseteq\suppmax(R\hat{y})$. }
\STATE{$t_r\leftarrow \T{\hat{w}}R\hat{y}-\T{w^*}R\hat{y}$, $v_r\leftarrow \T{w^*}Ry^*-\T{\hat{w}}Ry^*$, $\hat{\mu}\leftarrow \T{\hat{w}}Cz^*-\T{\hat{w}}Cy^*$.}
\IF{$v_r+t_r\geq(\mu^*-\lambda^*)/2$ \AND $\hat{\mu}\geq\mu^*-v_r-t_r$}
\STATE{$\displaystyle\alpha\leftarrow\frac{2(v_r+t_r)-(\mu^*-\lambda^*)}{2(v_r+t_r)}$.}
\STATE{Take $(x',y')$ between $(x^*,y^*)$ and $(\alpha w^*+(1-\alpha)\hat{w}, z^*)$ minimizing $f(x',y')$.}
\STATE{\OUTPUT{$(x',y')$}}
\ELSE
\STATE{$\displaystyle\beta\leftarrow\frac{1-\mu^*/2-t_r}{1+\mu^*/2-\lambda^*-t_r}$.}
\STATE{Take $(x',y')$ between $(x^*,y^*)$ and $(w^*, (1-\beta)\hat{y}+\beta z^*)$ minimizing $f(x',y')$.}
\STATE{\OUTPUT{$(x',y')$}}
\ENDIF
\ELSE
\STATE{Do the procedure symmetric to the previous ``if'' case.}
\ENDIF\label{line:case4end}
\end{algorithmic}
\end{breakablealgorithm}

We then show tight instances of the DFM algorithm matching the upper bound of $1/3$. Notice that for the first two cases (line \ref{line:case1start}-\ref{line:case2end}), one can verify that the $1/3$ bound is achieved with the following game, modified from game~\Cref{ex:tight-bound}:
\begin{align}\label{ex:tight-bound-1/3}
    R=\begin{pmatrix}0&0&0\\1/3&1&1\\1/3&1/2&1/2\end{pmatrix},\qquad
C=\begin{pmatrix}0&1/3&1/3\\0&1&1\\0&1&1\end{pmatrix}.
\end{align}

Game \cref{ex:tight-bound-1/3} attains the tight bound $1/3$ of the DFM algorithm at the \SP $x^*=y^*=\T{(1,0,0)}$ with dual solution $\rho^*=2/3$, $w^*=z^*=\T{(0,0,1)}$.

In the rest of this section, we focus on the last two cases (line \ref{line:case3start}-\ref{line:case4end}), which are more sophisticated.  We prove that for arbitrarily small $\epsilon>0$, an approximation ratio of $1/3-\epsilon$ can be reached by some instances.\footnote{The analysis on the proof of the upper bound in \cite{DBLP:journals/corr/abs-2204-11525} suggests that if some instance reaches an approximation ratio of $1/3$ in the last two cases, then it must hold that $\lambda^*=\mu^*=2/3$. However, due to the boundary condition of line \ref{line:case1start}, such instance should terminate in the first case and never fall into the last two cases. Such contradiction implies that $1/3$ is not attainable in the last two cases.} Such instance family is presented in \Cref{ex:tight-bound-1/3-eps}. Again, it is a modification of game \Cref{ex:tight-bound}.
\begin{align}\label{ex:tight-bound-1/3-eps}
    R=\begin{pmatrix}0&0&0\\1/3&1&1\\1/3&2/3-\epsilon/2&2/3-\epsilon/2\end{pmatrix},\qquad
C=\begin{pmatrix}0&1/3-\epsilon&1/3-\epsilon\\0&1&2/3+\epsilon\\0&1&2/3+\epsilon\end{pmatrix}.
\end{align}

The DFM algorithm reaches an approximation ratio of $1/3-\gamma(\epsilon)$ with \SP $x^*=y^*=\T{(1,0,0)}$ and dual solution $\rho^*=1/2$, $w^*=z^*=\T{(0,0,1)}$, where $\gamma(\epsilon)>0$ and $\gamma(\epsilon)\to 0$ as $\epsilon\to 0$.

We verify that for this instance, \Cref{algo:1/3-adj} terminates in case 3 (line \ref{line:case3start}) and outputs a strategy profile with the claimed approximation ratio.

First, we use \Cref{prop:construct-sp} to check that $(x^*,y^*)$ is indeed a stationary point with dual solution $(\rho^*,w^*,z^*)$. The direct calculation shows that 
\[A(\rho^*,y^*,z^*)=\left(\frac{1-3\epsilon}{6},\frac{1+3\epsilon}{6},\frac{1+3\epsilon}{6}\right)^T\text{\quad and}\]
\[B(\rho^*,x^*,w^*)=\left(\frac{1}{6},\frac{1}{6}+\frac{\epsilon}{4},\frac{1}{6}+\frac{\epsilon}{4}\right)^T.\] 
Thus 
\[\supp(x^*)=\{1\}=\suppmin(A(\rho^*,y^*,z^*))\text{\quad and}\]
\[\supp(y^*)=\{1\}=\suppmin(B(\rho^*,x^*,w^*)).\]

Second, it can be shown that $\lambda^*=2/3-\epsilon/2$ and $
\mu^*=2/3+\epsilon$, thus the input of \Cref{algo:1/3-adj} is valid and it falls exactly into case 3 (line \ref{line:case3start}).

Third, we calculate values of variables in case 3. $\hat{y}=(1/2,0,1/2)^T$, $\hat{w}=(0,1,0)^T$, $t_r=1/6+\epsilon/4$, $v_r=0$, $\hat{\mu}=2/3+\epsilon$. Thus when $\epsilon \leq 1/3$, $v_r+t_r\geq(\mu^*-\lambda^*)/2$ and $\hat{\mu}\geq\mu^*-v_r-t_r$. So we need to calculate the first branch, that is, $\alpha=1-(9\epsilon)/(2+3\epsilon)$.

At last, it can be calculated that $f(x^*,y^*)=1/3$ and 
\[f(\alpha w^*+(1-\alpha)\hat{w},z^*)=\max\left\{\left(1-\frac{9\epsilon}{2+3\epsilon}\right)\left(\frac{1}{3}+\frac{\epsilon}{2}\right),\frac{1}{3}-\epsilon\right\}=\frac{1}{3}-\gamma(\epsilon).\]

As $\epsilon\to 0$, $f(\alpha w^*+(1-\alpha)\hat{w},z^*)$ is arbitrarily close to $1/3$, as desired.


\section{Experimental Analysis}\label{sec:experi}

In this section, we further explore the characteristics of the algorithms presented in \Cref{sec:algo} with the help of numerical experiments. Such empirical results may provide us with a deep understanding of the behavior of these algorithms, specifically, the behavior of \SPn s and the descent procedure. Furthermore, we are interested in the tight instance generator itself presented in \Cref{sec:gen-TI}, particularly, on the probability that the generator outputs an instance given random inputs. At last, we compare the algorithms with other approximate \NE algorithms, additionally showing the potentially implicit relationships among these different algorithms. 

Readers can refer to \Cref{app:ED} for the details of the experiments. We here list the key results and insights we gain from these experiments.
\begin{enumerate}
    \item Our studies on the behavior of algorithms presented in \Cref{sec:algo} show that even in a uniformly sampled tight game instance, it is almost impossible for a uniformly-picked initial strategy pair to fall into the tight \SP at the termination. Such results suggest that uniform initialization leads to the dramatic inconsistency of tight instances of \SP algorithms between theory and practice.
    \item We then study the stability of tight \SPn s. A \SP $(x^*,y^*)$ is \emph{stable} if that, when we arbitrarily make a slight perturbation on $(x^*,y^*)$ and run the TS algorithm again, the algorithm generally terminates near $(x^*,y^*)$. We explore the stability on randomly generated tight instances with different sizes. In experiments, most tight instances of large sizes are not stable. Moreover, with the game size growing larger, the probability to find a stable tight instance becomes smaller and even vanishes. Thus it is really hard to meet an empirical approximation ratio of $0.3393$ in large-size games. Based on this result and further empirical studies, we give a time-saving and effective suggestion about the practical usage of the TS algorithm: \emph{If the algorithm terminates with a bad approximation ratio, slightly perturb the solution, and continue the algorithm. If the algorithm still terminates near the bad solution, randomly pick an initial point outside a small neighborhood of the solution, and rerun the algorithm.}
    \item Next, we turn to the tight instance generator described in \Cref{sec:gen-TI}. Given two arbitrary strategy pairs $(x^*, y^*)$ and $(w^*, z^*)$ in $\Delta_m\times \Delta_n$, we are interested in whether the generator outputs a tight game instance. The result shows that the intersecting proportion of $(x^*, y^*)$ and $(w^*, z^*)$ plays a vital role in whether a tight game instance can be successfully generated from these two pairs. It suggests that neither $x^\ast$ and $w^\ast$ share support, nor $y^\ast$ and $z^\ast$.
    \item At last, we measure how other algorithms behave on these tight game instances. Surprisingly, Czumaj et al.'s algorithm~\cite{czumaj2019distributed} terminates at an approximation ratio $b\approx 0.3393$ for all cases and all trials. Meanwhile, regret-matching algorithms~\cite{greenwald2006bounds} always find a pure \NE of a 2-player game if there exists, which is the case for all generated tight instances. Finally, \FP algorithm~\cite{brown1951iterative} behaves well on these instances, with a median approximation ratio of approximately $1\times 10^{-3}$ to $1.2\times 10^{-3}$ for games with different sizes.
\end{enumerate}

\section{Discussion}\label{sec:disc}
We present three problems that are expected to elicit a further understanding of \SPn s and the underlying structure of Nash equilibria.

\begin{enumerate}
    \item Analyze the dynamics of the descent procedure of the TS algorithm and provide stability analysis and smoothed analysis for worst cases.
\end{enumerate}

On both theoretical and experimental sides, we have yet to determine which kinds of \SPn s are easier to reach and which are not. It is also noticeable that a minor perturbation on the initial point leads to a significant difference in the convergence. All these phenomena are owing to the atypical behavior of the descent procedure. When we take these into consideration, the bound analysis becomes stability analysis and smoothed analysis.
\begin{enumerate}[resume]
\item Propose a benchmark for approximate \NE computing such that most existing polynomial-time algorithms have few advantages on the generated games.
\end{enumerate}
There is a natural extension to our tight instance generator: Find a class of games rendering the performances of most existing polynomial-time algorithms unsatisfying. It is worth noting that the classic game generator GAMUT~\cite{GAMUT} is conquered by the TS algorithm~\cite{fearnley2015empirical}: On games generated by GAMUT, the TS algorithm always finds a solution with an approximation ratio far better than $0.3393$. Therefore, a new benchmark is required, which is of great significance to understanding the hardness of \NE computing.

\begin{enumerate}[resume]
\item Propose a novel solution concept that calculates an $\epsilon$-approximate \NE directly without any further adjustment.
\end{enumerate}

The ultimate goal is to improve the approximation ratio. We summarize that all non-trivial polynomial-time approximation algorithms presented up till now involve two steps: first, to find a polynomial-time-solvable concept (usually by linear programming), and second, to make an adjustment step if the concept has an unsatisfying approximation ratio~\cite{bosse2010new,czumaj2019distributed,daskalakis2007progress,DBLP:journals/corr/abs-2204-11525,tsaknakis2008optimization}. The real challenge here is to propose a novel concept that characterizes the $\epsilon$-approximate Nash equilibria directly without any adjustment, which could show some insightful unknown structures of approximate Nash equilibria.

\bibliographystyle{plain}
\bibliography{ref}

\appendix
\section{Missing Calculations in the Main Body}\label{app:MissingCalculations}
\subsection{Calculating Derivatives}
We used explicit forms of $Df$, $Df_R$ and $Df_C$ in \Cref{sec:algo} but omitted the calculations of them. Since these calculations are rather complex, we present the detailed calculations here for completeness.

We first calculate the explicit form of $Df_R(x,y,x',y')$. The main calculation is the directional derivative of $\max(Ry)$ with respect to $y$. Let $h:=y'-y$. Notice that all entries of $\max(Ry)$ are continuous in $y$. Therefore, for sufficient small $\theta>0$, 
\[\max(R(y+\theta h))=\max_{S_R(y)}(R(y+\theta h)).\]

Since all entries of $Ry$ over $S_R(y)$ are equal (called property $(*)$), for sufficient small $\theta>0$, we have
\begin{align*}
    &\max(R(y+\theta h))-\max(Ry)\\
    =&\max_{S_R(y)}(R(y+\theta h))-\max_{S_R(y)}(Ry)\\
        =&\max_{S_R(y)}(R(y+\theta h)-Ry)\text{\quad (by property $(*)$)}\\
        =&\max_{S_R(y)}(\theta Rh)\\
        =&\theta\max_{S_R(y)}(R(y'-y))\\
        =&\theta\max_{S_R(y)}(Ry'-Ry)\\
        =&\theta\left(\max_{S_R(y)}(Ry')-\max_{S_R(y)}(Ry)\right)\text{\quad (by property $(*)$)}\\
        =&\theta\left(\max_{S_R(y)}(Ry')-\max(Ry)\right).
\end{align*}
As a result, we have
\[\lim_{\theta\to 0+}\frac{1}{\theta}\left(\max(R(y+\theta(y'-y)))-\max(Ry)\right)=\max_{S_R(y)}(Ry')-\max(Ry).\]
By basic calculus, we have
\begin{align*}
    &\lim_{\theta\to 0+}\frac{1}{\theta} \left(\T{(x+\theta(x'-x))}R(y+\theta(y'-y))-\T{x}Ry\right)\\
    =&\lim_{\theta\to 0+}\frac{1}{\theta} \left(\T{(\theta(x'-x))}Ry+\T{x}R(\theta(y'-y)+\T{(\theta(x'-x))}R(\theta(y'-y))\right)\\
    =&\T{(x'-x)}Ry+\T{x}R(y'-y).
\end{align*}
Combining these results, we get the following formula
\begin{align*}
    Df_R(x,y,x',y')&=\max_{S_R(y)}(Ry')-\max(Ry)-\T{(x'-x)}Ry-\T{x}R(y'-y)\\
    &=\max_{S_R(y)}(Ry')-\T{x'}Ry-\T{x}Ry'+\T{x}Ry-f_R(x,y).
\end{align*}
Similarly, we have
\begin{align*}
    Df_C(x,y,x',y')&=\max_{S_C(x)}(\T{C}x')-\max(\T{C}y)-\T{(x'-x)}Cy-\T{x}C(y'-y)\\
    &=\max_{S_C(x)}(\T{C}x')-\T{x'}Cy-\T{x}Cy'+\T{x}Cy-f_C(x,y).
\end{align*}

Now we calculate the derivative $Df(x,y,x',y')$. First, we consider the case that $f_R(x,y)\neq f_C(x,y)$. By continuity of $f_R$ and $f_C$, if $f_R(x_0,y_0)>f_C(x_0,y_0)$, then $f(x,y)=f_R(x,y)$ in some neighborhood of $(x_0,y_0)$; and if $f_R(x_0,y_0)<f_C(x_0,y_0)$, then $f(x,y)=f_C(x,y)$ in some neighborhood of $(x_0,y_0)$. Consequently,
\[Df(x,y,x',y')=\begin{cases}
Df_R(x,y,x',y'),& f_R(x,y)>f_C(x,y),\\
Df_C(x,y,x',y'),& f_R(x,y)<f_C(x,y).
\end{cases}\]

Finally, we calculate the derivative $Df(x,y,x',y')$ under the constraint that $f_R(x,y)=f_C(x,y)$, which is rather difficult by direct calculations. We develop the following lemma to handle it.

\begin{lemma}\label{lemma:cal-max-der}
Let $g, h$ be functions from $[0,1]$ to $\R$. $g, h$ are differentiable in the positive direction at $0$, and $g(0)=h(0)$. Then $\phi(x)=\max\{g(x),h(x)\}$ is differentiable in the positive direction at $0$, and
\[\phi'_+(0):=\lim_{\theta\to 0+}\frac{\phi(\theta)-\phi(0)}{\theta}=\max\{g'_+(0),h'_+(0)\}.\]
\end{lemma}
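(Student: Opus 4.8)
\textbf{Proof proposal for \Cref{lemma:cal-max-der}.} The plan is to reduce the statement to the continuity of the function $(u,v)\mapsto\max\{u,v\}$ on $\R^2$. Write $a:=g(0)=h(0)$; then $\phi(0)=\max\{g(0),h(0)\}=a$ as well. The single key algebraic observation is that, for $\theta>0$, subtracting the constant $a$ and dividing by the \emph{positive} number $\theta$ both commute with the $\max$ operation, so the difference quotient of $\phi$ factors through those of $g$ and $h$:
\[\frac{\phi(\theta)-\phi(0)}{\theta}=\frac{\max\{g(\theta),h(\theta)\}-a}{\theta}=\max\left\{\frac{g(\theta)-a}{\theta},\frac{h(\theta)-a}{\theta}\right\}=\max\left\{\frac{g(\theta)-g(0)}{\theta},\frac{h(\theta)-h(0)}{\theta}\right\}.\]
First I would record this identity with care, since it uses $\theta>0$ essentially (for $\theta<0$ dividing by $\theta$ reverses the inequality, so the identity fails — which is precisely why the conclusion is about the one-sided derivative only) and uses $g(0)=h(0)$ to fold the constant back inside the bracket.

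Given the identity, the conclusion is immediate: as $\theta\to 0^+$ the two difference quotients inside the bracket converge to $g'_+(0)$ and $h'_+(0)$ respectively by hypothesis, and since $\max\colon\R^2\to\R$ is continuous, the right-hand side converges to $\max\{g'_+(0),h'_+(0)\}$; in particular the limit defining $\phi'_+(0)$ exists and equals that value. An alternative route, which avoids invoking continuity of a two-variable map and is perhaps the cleaner thing to write out in full, is to prove the two one-sided bounds separately: the inequalities $\phi(\theta)\ge g(\theta)$ and $\phi(\theta)\ge h(\theta)$ give $\liminf_{\theta\to 0^+}\frac{\phi(\theta)-\phi(0)}{\theta}\ge\max\{g'_+(0),h'_+(0)\}$, while for the reverse bound one fixes $\varepsilon>0$, chooses $\theta$ small enough that $\frac{g(\theta)-g(0)}{\theta}\le g'_+(0)+\varepsilon$ and $\frac{h(\theta)-h(0)}{\theta}\le h'_+(0)+\varepsilon$ simultaneously, whence $\frac{\phi(\theta)-\phi(0)}{\theta}\le\max\{g'_+(0),h'_+(0)\}+\varepsilon$, and lets $\varepsilon\to 0$.

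There is no genuinely hard step here; the only point demanding attention is the sign of $\theta$ in the displayed identity, as noted above. This lemma is then applied in the computation of $Df$ at a point where $f_R(x,y)=f_C(x,y)$, taking $g(\theta)=f_R(x+\theta(x'-x),y+\theta(y'-y))$ and $h(\theta)=f_C(x+\theta(x'-x),y+\theta(y'-y))$, so that $Df(x,y,x',y')=\max\{Df_R(x,y,x',y'),Df_C(x,y,x',y')\}$ exactly as used in the main text.
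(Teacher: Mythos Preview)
Your argument is correct and in fact cleaner than the paper's. The paper proceeds by writing $g(x)=g'_+(0)x+\alpha_1(x)$ and $h(x)=h'_+(0)x+\alpha_2(x)$ with $\alpha_1,\alpha_2=o(x)$ (after normalising $g(0)=h(0)=0$), and then splits into two cases: if $g'_+(0)=h'_+(0)$ it observes that $\max\{g(x),h(x)\}=g'_+(0)x+\max\{\alpha_1(x),\alpha_2(x)\}$ and checks that the max of two $o(x)$ terms is $o(x)$; if $g'_+(0)>h'_+(0)$ it argues that $g(x)>h(x)$ for all sufficiently small $x>0$, so $\phi=g$ there. Your displayed identity $\frac{\phi(\theta)-\phi(0)}{\theta}=\max\bigl\{\frac{g(\theta)-g(0)}{\theta},\frac{h(\theta)-h(0)}{\theta}\bigr\}$ collapses this case distinction entirely: it handles both cases uniformly and reduces the conclusion to continuity of $\max$ on $\R^2$ (or, in your alternative, to a direct $\liminf$/$\limsup$ sandwich). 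The paper's route has the minor advantage of making explicit \emph{why} the derivatives coincide when $g'_+(0)\neq h'_+(0)$ (namely, $\phi$ literally equals the larger function near $0$), which is conceptually useful elsewhere; your route buys brevity and avoids any case analysis or little-$o$ bookkeeping.
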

\begin{proof}
Without loss of generality, suppose that $f(0)=g(0)=0$ and $g'_+(0)\geq h'_+(0)$. By the result from analysis,
\begin{align*}
    g(x)&=g'_+(0)x+\alpha_1(x),\\
    h(x)&=h'_+(0)x+\alpha_2(x),
\end{align*}
where $\alpha_1(x)=o(x)$, $\alpha_2(x)=o(x)$ when $x\to 0^+$. It suffices to prove that
\[\phi(x)=g'_+(0)x+o(x).\]

If $g'_+(0)=h'_+(0)$, $\max\{g(x),h(x)\}=g'_+(0)x+\max\{\alpha_1(x),\alpha_2(x)\}$. Since both $\alpha_1(x)$ and $\alpha_2(x)$ are $o(x)$, clearly $\max\{\alpha_1(x),\alpha_2(x)\}=o(x)$.

If $g'_+(0)>h'_+(0)$, then for sufficient small $x_0>0$, $(g'_+(0)-h'_+(0))x+\alpha_1(x)-\alpha_2(x)>0$ holds for all $x\in(0,x_0)$, therefore $\phi(x)=g(x)=g'_+(0)x+o(x)$ when $x\to 0^+$.
\end{proof}

By \Cref{lemma:cal-max-der}, we obtain that
\begin{align*}
    Df(x,y,x',y')&=\max\{Df_R(x,y,x',y'),Df_C(x,y,x',y')\}\\
    &=\max\{T_1(x,y,x',y'),T_2(x,y,x',y')\}-f(x,y),
\end{align*}
where
\begin{align*}
T_1(x,y,x',y')&=\max_{S_R(y)}(Ry')-\T{x'}Ry-\T{x}Ry'+\T{x}Ry,\\ T_2(x,y,x',y')&=\max_{S_C(x)}(\T{C}x')-\T{x'}Cy-\T{x}Cy'+\T{x}Cy.
\end{align*}

\subsection{Calculating the Bilinear Form of \texorpdfstring{$T$}{T} and Dual Solution \texorpdfstring{$(\rho_0,w_0,z_0)$}{(rho0, w0, z0)} from Dual LP}
In this part, we give the detailed variations on $T$ that make $T$ a bilinear form and convert the problem $\min_{x',y'}\max_{\rho,w,z} T$ to the dual LP form. In \Cref{sec:prelim}, we denote an $n$-dimensional column vector with all entries equal to $1$ by $e_n$. We will use this notation in the calculations below. Recall the definition of $T$ in \Cref{sec:algo} that
\begin{align*}
T(x,y,x',y',\rho,w,z)=&\rho(\T{w}Ry'-\T{x}Ry'-\T{x'}Ry+\T{x}Ry)\\
         &\quad+(1-\rho)(\T{x'}Cz-\T{x}Cy'-\T{x'}Cy+\T{x}Cy).
\end{align*}
We have the following identities.
\begin{align*}
&\rho(\T{w}Ry'-\T{x}Ry'-\T{x'}Ry+\T{x}Ry)\\
=&\rho \T{w} Ry'-\rho(\T{w}e_m)x^TRy'-\rho(\T{w}e_m)\T{y}\T{R}x'+\rho(\T{w}e_m)(\T{e_m}x')\T{x}Ry\\
=&(\rho \T{w})(R-e_m\T{x}R)y'+(\rho \T{w})(-e_m\T{y}\T{R}+e_m\T{e_m}\T{x}Ry)x'.
\end{align*}
Similarly,
\begin{align*}
    &(1-\rho)(\T{x'}Cz-\T{x}Cy'-\T{x'}Cy+\T{x}Cy)\\
    =&((1-\rho)\T{z})(-e_n\T{x}C+e_n\T{e_n}\T{x}Cy)y'+((1-\rho)\T{z})(C-e_n\T{y}\T{C})x'.
\end{align*}
Combine these and we get
\begin{align*}
    &T(x,y,x',y',\rho,w,z)\\
    =&(\rho\T{w},(1-\rho)\T{z})\begin{pmatrix}R-e_m\T{x}R&-e_m\T{y}\T{R}+e_m\T{e_m}\T{x}Ry\\-e_n\T{x}C+e_n\T{e_n}\T{x}Cy&C-e_n\T{y}\T{C}\end{pmatrix}\begin{pmatrix}y'\\x'\end{pmatrix}\\
    =&(\rho\T{w},(1-\rho)\T{z})G(x,y)\begin{pmatrix}y'\\x'\end{pmatrix}.
\end{align*}

That is the bilinear form in the main body. To show that minimaxing this bilinear form produces a dual LP, we need the following calculations. Let $u_{i,n}$ denote an $n$-dimensional column vector whose $i$th entry is $1$ and the other entries are $0$. Let $0_n$ denote a $n$-dimensional zero column vector. Consider primal problem about variable $x', y'$:
\begin{gather*}
    \mathrm{minimize} \quad \max_{\substack{\rho\in[0,1],\\\supp(w)\subseteq S_R(y),\\\supp(z)\subseteq S_C(x),\\(w,z)\in\Delta_m\times\Delta_n}} T(x,y,x',y',\rho,w,z)\\
    \mathrm{s.t.}\quad (x',y')\in\Delta_m\times\Delta_n.
\end{gather*}
It is equivalent to the standard LP about variable $x',y',\delta_{1+}, \delta_{1-}$ below:
    \[\mathrm{minimize}~ \delta_{1+} - \delta_{1-}\]
    \begin{equation*}
        \begin{split}
        \mathrm{s.t.} ~-\begin{pmatrix}\T{u_{i,m}}&\T{u_{j,n}}\end{pmatrix}G(x, y) \begin{pmatrix} y' \\ x' \end{pmatrix} + \delta_{1+} - \delta_{1-} &\geq 0,~ i\in S_R(y), j\in S_C(x), \\
            \sum_{i=1}^n y'_i &\geq 1, \\
            \sum_{i=1}^n -y'_i&\geq -1, \\
            \sum_{i=1}^m x'_i &\geq 1, \\
            \sum_{i=1}^m -x'_i &\geq -1,\\
            x'_i &\geq 0,~ i=1,\ldots,m,\\
            y'_i &\geq 0,~ i=1,\ldots,n,\\
            \delta_{1+}&\geq 0,\\
            \delta_{1-}&\geq 0.
        \end{split}
    \end{equation*}

The dual problem about variable $\rho,w,z$ is
\begin{gather*}
    \mathrm{maximize} \quad \min_{(x',y')\in\Delta_m\times\Delta_n} T(x,y,x',y',\rho,w,z)\\
    \mathrm{s.t.}\quad (w,z)\in\Delta_m\times\Delta_n,\\
    \rho\in[0,1],\\
    \supp(w)\subseteq S_R(y),\\
    \supp(z)\subseteq S_C(x).
\end{gather*}
It is equivalent to the standard dual LP about variable $w',z',\delta_{21+},\delta_{21-},\delta_{22+},\delta_{22-}$ below:
\[\mathrm{maximize}~(\delta_{21+}-\delta_{21-})+(\delta_{22+}-\delta_{22-})\]
    \begin{equation*}
        \begin{split}
            \mathrm{s.t.}~-\begin{pmatrix} \T{u_{i,n}} & \T{0_m} \end{pmatrix} G(x, y)^T \begin{bmatrix} w' \\ z' \end{bmatrix} + \delta_{21+} - \delta_{21-} &\leq 0,~ i=1,\ldots,n, \\
            -\begin{pmatrix} \T{0_n} & \T{u_{i,m}} \end{pmatrix} G(x, y)^T \begin{bmatrix} w' \\ z' \end{bmatrix} + \delta_{22+} - \delta_{22-} &\leq 0,~ i=1,\ldots,m, \\
            w_i'&\leq 0,~ i\in\{1,\ldots,m\}\setminus S_R(y),\\
            z_i'&\leq 0,~ i\in\{1,\ldots,n\}\setminus S_C(x),\\
            \sum_{i=1}^m w_i' + \sum_{i=1}^n z_i' &\leq 1, \\
            -\sum_{i=1}^m w_i' - \sum_{i=1}^n z_i' &\leq -1,\\
            w_i'&\geq 0,~ i=1,\ldots,m,\\
            z_i'&\geq 0,~ i=1,\ldots,n,\\
            \delta_{21+}&\geq 0,\\
            \delta_{21-}&\geq 0,\\
            \delta_{22+}&\geq 0,\\
            \delta_{22-}&\geq 0.
        \end{split}
    \end{equation*}
Then the correspondence between dual solution $(\rho_0,w_0,z_0)$ and optimal solution $(w_0', z_0')$ of the dual LP above is $\rho_0=\sum_i w_{0i}'$, $w'=\rho_0 w_0$, and $z'=(1-\rho_0)z_0$. It is clear that these two LPs are the corresponding primal and dual LPs in the standard form.
\newpage
\section{Descent Procedure}\label{app:DP}

In this section, we show how to find a $\delta$-\SP in polynomial time of precision $\delta$. Recall that $V(x,y)$ is the min-max value of $T$ and $V(x,y)\leq f(x,y)$ always holds. Then immediately, we have
\begin{lemma}
    A strategy pair $(x,y)$ is a \emph{$\delta-$\SPn}, if and only if \[f_R(x,y)=f_C(x,y)\]
    and 
    \[V(x,y)-f(x,y)\geq -\delta.\]
\end{lemma}
When $\delta=0$, we have $V(x,y)=f_R(x,y)=f_C(x,y)$, and by \Cref{prop:sp-equi}, $(x,y)$ is a \SPn. 

Now we state the descent procedure given by \cite{tsaknakis2008performance}. The descent procedure is partitioned into 2 steps: seeking descent direction and line search.

Seeking descent direction, as the name suggests, is to find the steepest direction of $f$ in the sense of Dini directional derivative. We first fix one of $x,y$ and adjust the other to make $f_R=f_C$. Then we calculate the min-max value $V$ of $T$ and the steepest direction $(x' - x,y' - y)$ that attains $V$.

Having obtained the steepest direction $(x' - x,y' - y)$, we now search for a proper step size in that direction. This step is called line search in literature.

The pseudo-code of the descent procedure is presented in \Cref{algo:dp}.
\vspace{0.5em}
\begin{breakablealgorithm}
\renewcommand{\algorithmicrequire}{\textbf{Input}}
\newcommand{\OUTPUT}[1]{\textbf{Output} #1}
\newcommand{\GOTO}[1]{\textbf{goto} #1}
\renewcommand{\algorithmiccomment}[1]{// \textit{#1}}
\caption{Searching for a $\delta-$\SPn}
\label{algo:dp}
\begin{algorithmic}[1]
\REQUIRE Precision $\delta$, payoff matrices $R_{m\times n},C_{m\times n}$ and initial strategy $(x,y)\in\Delta_m\times\Delta_n$.
\bigskip
\STATE \COMMENT{seeking descent direction.}
\IF{$f_R(x,y)\neq f_C(x,y)$}\label{algline:DP1}
\IF{$f_R(x,y)>f_C(x,y)$}
\STATE Fix $y$ and solve the following LP with respect to $x$:
\begin{align*}
    &\min\limits_{x}\{\max(Ry)-\T{x}Ry\}, \\
    \text{s.t. }&f_R(x,y)\geq f_C(x,y), \\
    & x\in\Delta_m.
\end{align*} 
\ENDIF
\medskip
\IF{$f_R(x,y)<f_C(x,y)$}
\STATE Fix $x$ and solve the following LP with respect to $y$:
\begin{align*}
    &\min\limits_{y}\{\max(\T{C}x)-\T{x}Cy\}, \\
    \text{s.t. }&f_C(x,y)\geq f_R(x,y), \\
    & y\in\Delta_n.
\end{align*}

\ENDIF
\ENDIF
\medskip
\STATE Solve the following minimax problem with respect to $(x',y')$:
\begin{align*}
    &\min_{x',y'}\max_{\rho,w,z} T(x,y,x',y',\rho,w,z), \\
    \text{s.t. }& (x',y')\in\Delta_m\times\Delta_n, \\
    & \rho\in[0,1],\ \supp(w)\subset S_R(y)\text{ and }\supp(z)\subset S_C(x).
\end{align*} \label{algline:minimax}
\STATE Let the optimal value of LP be $V$ and the corresponding solution be $(x',y')$ and $(\rho, w, z)$.
\medskip
\IF{$V-f(x,y)\geq-\delta$}
\STATE \OUTPUT{$(x,y)$}\RETURN
\ENDIF
\bigskip
\STATE \COMMENT{line search.}
\STATE $\bar{S}_R(y)\leftarrow\{1,2,\ldots,m\}\setminus S_R(y)$, $\bar{S}_C(x)\leftarrow\{1,2,\ldots,n\}\setminus S_C(x)$.
\STATE 
\begin{align*}
    &\epsilon_1^*\leftarrow\min_{i\in\bar{S}_R(y)}\left(\frac{\max(Ry)-(Ry)_i}{\max(Ry)-(Ry)_i+(Ry')_i-\max_{S_R(y)}(Ry')}\right), \\
    &\epsilon_2^*\leftarrow\min_{j\in\bar{S}_C(x)}\left(\frac{\max(\T{C}x)-(\T{C}x)_j}{\max(\T{C}x)-(\T{C}x)_j+(\T{C}x')_j-\max_{S_C(x)}(\T{C}x')}\right), \\
    &\epsilon^*\leftarrow\min\{\epsilon_1^*,\epsilon_2^*,1\}.
\end{align*}
\STATE $H\leftarrow\min\{\T{(x'-x)}R(y'-y),\T{(x'-x)}C(y'-y)\}$.
\medskip
\IF{$H<0$}
\STATE $\epsilon^*\leftarrow\min\{\epsilon^*,|V-f(x,y)|/(2|H|)\}$.
\ENDIF
\medskip
\STATE $x\leftarrow x+\epsilon^*(x'-x)$, $y\leftarrow y+\epsilon^*(y'-y)$.
\STATE \GOTO{line \ref{algline:DP1}}
\end{algorithmic}
\end{breakablealgorithm}
\vspace{0.5em}

The minimax problem at line \ref{algline:minimax} can be solved by using the dual LP in \Cref{app:MissingCalculations}. We have the following convergence result.
\begin{theorem}[\cite{tsaknakis2008optimization}]
\Cref{algo:dp} terminates with a $\delta-$\SP in $O(\delta^{-2})$ steps for any $\delta>0$. Thus \Cref{algo:dp} finds a $\delta-$\SP in time $O(\delta^{-2}\poly(m,n))$.
\end{theorem}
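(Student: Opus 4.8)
The plan is a standard ``descent with guaranteed progress'' argument. Since $f$ takes values in $[0,1]$ on $\Delta_m\times\Delta_n$ and, as one checks directly, $f$ never increases along the run of \Cref{algo:dp} — the direction-seeking block minimizes $f_R$ (or $f_C$) over an affine slice subject to the constraint that keeps that branch on top, which cannot raise $f=\max\{f_R,f_C\}$, and the line-search block strictly descends, as shown below — it suffices to prove that every iteration which does \emph{not} return decreases $f$ by $\Omega(\delta^2)$. This bounds the iteration count by $O(\delta^{-2})$. Each iteration solves a constant number of linear programs of size $\poly(m,n)$: the min-max problem on \Cref{algline:minimax} is an LP via the dual construction in \Cref{app:MissingCalculations} (cf. \Cref{prop:dual}), and the direction-seeking block solves at most one auxiliary LP; so one iteration costs $\poly(m,n)$ time and the total is $O(\delta^{-2}\poly(m,n))$. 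Correctness is then immediate: the algorithm returns exactly when $f_R(x,y)=f_C(x,y)$ and $V(x,y)-f(x,y)\ge-\delta$, which is the definition of a $\delta$-\SPn\ (\Cref{prop:sp-equi}).

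For the per-iteration bound, fix an iteration that does not return. After the direction-seeking block $f_R(x,y)=f_C(x,y)=f(x,y)$, and the direction $(x',y')$ returned by \Cref{algline:minimax} satisfies $Df(x,y,x',y')=V(x,y)-f(x,y)<-\delta$ (using the identity $Df=\max_{\rho,w,z}T-f$ of \Cref{sec:algo} and minimizing over $(x',y')$), hence $Df_R(x,y,x',y')<-\delta$ and $Df_C(x,y,x',y')<-\delta$. Write $u:=x'-x$, $v:=y'-y$, $x_\epsilon:=x+\epsilon u$, $y_\epsilon:=y+\epsilon v$. On $[0,\min\{\epsilon_1^*,\epsilon_2^*,1\}]$ no index enters $S_R(y_\epsilon)$ or $S_C(x_\epsilon)$, so $\max(Ry_\epsilon)$ and $\max(\T{C}x_\epsilon)$ are linear in $\epsilon$; substituting into the explicit forms of $Df_R,Df_C$ from \Cref{app:MissingCalculations} gives the exact expansions
\begin{align*}
f_R(x_\epsilon,y_\epsilon)&=f(x,y)+\epsilon\,Df_R(x,y,x',y')-\epsilon^2\,\T{u}Rv,\\
f_C(x_\epsilon,y_\epsilon)&=f(x,y)+\epsilon\,Df_C(x,y,x',y')-\epsilon^2\,\T{u}Cv,
\end{align*}
valid on that interval. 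Since $R,C$ are normalized and $\|u\|_1,\|v\|_1\le 2$, we have $|\T{u}Rv|,|\T{u}Cv|\le 4$, hence $|H|\le 4$ for $H:=\min\{\T{u}Rv,\T{u}Cv\}$.

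Now one runs a short case analysis on which of $1$, $\epsilon_1^*$, $\epsilon_2^*$, and (when $H<0$) $\varepsilon_H:=|V-f|/(2|H|)$ is the binding term in $\epsilon^*$. If $\epsilon^*=1$: when $H\ge0$ the $\epsilon^2$-terms are nonpositive and $f(x_1,y_1)\le f(x,y)-\delta$, and when $H<0$ one has $|H|\le(f(x,y)-V(x,y))/2$ from $\varepsilon_H\ge1$, so each branch is $\le f(x,y)+\tfrac12 Df_I\le f(x,y)-\delta/2$. If $H<0$ and $\epsilon^*=\varepsilon_H$: then $(\epsilon^*)^2|H|=\tfrac12\epsilon^*(f(x,y)-V(x,y))\le-\tfrac12\epsilon^*Df_I$, so $f_I(x_{\epsilon^*},y_{\epsilon^*})\le f(x,y)+\tfrac12\epsilon^*Df_I\le f(x,y)-\tfrac12\epsilon^*\delta$, and since $\epsilon^*=(f-V)/(2|H|)\ge\delta/8$ the drop is at least $\delta^2/16$. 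The remaining — and main — difficulty is the case where $\epsilon^*=\min\{\epsilon_1^*,\epsilon_2^*\}$ is pinned by a support change and can be arbitrarily small, so that the expansions only yield a drop of order $\epsilon^*\delta$. The way around this is an amortized argument: a step pinned by a support change alters one of the active sets $S_R(y)$, $S_C(x)$, and one shows that a maximal run of such ``boundary'' steps cannot last long before the descent direction stops being blocked and an ``interior'' step — of size $\Omega(\delta)$, or a full step $\epsilon^*=1$ — is taken, which contributes the $\Omega(\delta^2)$ decrease, the intervening boundary steps being charged against the evolving active-set structure. Carrying out this bookkeeping rigorously is exactly the technical heart of the convergence proof of \cite{tsaknakis2008performance}, and I expect it to be the only genuinely delicate part.

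Finally, combining the (amortized) $\Omega(\delta^2)$ decrease per iteration with $0\le f\le 1$ gives $O(\delta^{-2})$ iterations, and multiplying by the $\poly(m,n)$ cost of one iteration gives the stated $O(\delta^{-2}\poly(m,n))$ running time, while the termination condition is precisely the definition of a $\delta$-\SPn.
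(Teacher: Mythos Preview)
The paper does not actually prove this theorem: it is stated in \Cref{app:DP} with a bare citation to \cite{tsaknakis2008optimization} and no argument is supplied. So there is no proof in the paper to compare your attempt against.

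On its own merits, your outline is sound where it is explicit. The quadratic expansion of $f_R,f_C$ along the step on $[0,\min\{\epsilon_1^*,\epsilon_2^*,1\}]$ is correct, the bound $|H|\le 4$ is right, and the cases $\epsilon^*=1$ and $\epsilon^*=\varepsilon_H$ genuinely give a per-iteration drop of order $\delta$ and $\delta^2$ respectively. The gap is exactly where you flag it: when $\epsilon^*=\min\{\epsilon_1^*,\epsilon_2^*\}$ you only get an $O(\epsilon^*\delta)$ decrease with no lower bound on $\epsilon^*$, and instead of supplying the amortization you defer it back to \cite{tsaknakis2008performance}. That is not a proof --- it is a restatement of the theorem's provenance, which is precisely what the paper already does. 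If you want a self-contained argument you must actually control runs of support-change steps, and the bookkeeping is not as clean as ``the active sets have size at most $m+n$'': a boundary step enlarges $S_R(y)$ or $S_C(x)$, but the next equalization block and descent direction can alter them again, so a naive monotone counter does not obviously work. Until that part is written out, what you have is a correct reduction of the theorem to its hardest case, not a proof.
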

\newpage
\section{Details of Experiments}\label{app:ED}
Throughout this section, we consider the distance induced by $L^\infty$ norm in $\Delta_m\times\Delta_n$.
\subsection{Behavior of the Stationary Point Algorithms}
In the very first experiment, we are quite interested in the behavior of algorithms we present in \Cref{sec:algo}. Specifically, given a tight game instance, we care much on the probability that these algorithms reach the tight bound $b\approx 0.3393$ with respect to the random choice of initial strategies. By the convexity of function $f(x,y)$, we can obtain the optimal adjustment by a ternary search algorithm. Therefore, this adjustment gives a lower bound for all convex combinations in square $\Lambda$.

We generate 20 games of size $3\times 3$, 15 games of size $3\times 4$, 10 games of size $4\times 4$, 3 games of size $5\times 5$ and one game of size $6\times 6$ by \Cref{algo:gen}, with respect to a random choice of $(x^*, y^*)$ and $(w^*, z^*)$ in $\Delta_m\times \Delta_n$. For $3\times 3$, $3\times 4$, $4\times 4$, $5\times 5$ and $6\times 6$ games, we partition the total space $\Delta_m\times\Delta_n$ into lattices of side length $1/10$, $1/10$, $1/8$, $1/6$ and $1/5$ respectively, and uniformly sample an initial strategy pair from each of them. \Cref{tab:init_test_approx} shows the behavior of the algorithm with the optimal adjustment on the approximation ratio, given the sampled initial strategies\footnote{We count the cases without distinguishing the specific games of the same size, as the result is similar for every game of the same size.}. It turns out that among all test cases, there are only 2 cases in one game with size $3\times 4$ that finally stop with $f>0.339$, while all other cases terminate with $f\approx 0$.

\begin{table}[ht]
    \caption{Statistics on approximation ratios on games with different sizes. The statistics is obtained by the algorithm with the optimal adjustment in \Cref{sec:algo}.}
    \centering
    \renewcommand{\arraystretch}{1.25}
    \begin{tabular}{|c|c|c|c|c|}
    \hline
    Game Size&$\#[\text{Sampled Points}]$&$\#[f>0.01]$& $\#[f>0.339]$&$\Pr[f>0.339]$\\\hline
    $3\times 3$&$200,000$&$0$&$0$&$0$\\\hline
    $3\times 4$&$1,500,000$&$2$&$2$&$1.3\times 10^{-6}$\\\hline
    $4\times 4$&$2,621,440$&$0$&$0$&$0$\\\hline
    $5\times 5$&$5,038,848$&$0$&$0$&$0$\\\hline
    $6\times 6$&$9,765,625$&$0$&$0$&$0$\\\hline
    \end{tabular}
    \label{tab:init_test_approx}
\end{table}

The result provides us with the following insight: even though we can promise the existence of a tight \SP by \Cref{algo:gen}, we cannot promise a high probability to actually find them in practice! This result roughly implies the inconsistency of tight instances of \SP algorithms between theory and practice.

\subsection{Stability of Tight Stationary Points}\label{sec:stability}
A tight stationary point is hard to find in practice, implying that even if we start the descent procedure near the tight \SPn, the TS algorithm may terminate at a faraway solution with a better approximation ratio. We call a \SP \emph{stable} if, under most slight perturbations, the TS algorithm will ultimately fall back to the same \SPn; Otherwise, we call it \emph{unstable}.

The stability in experiments interprets as follows. Let $(x^*,y^*)$ be a tight \SPn. Choose a ball centered at $(x^*,y^*)$ with radius $r$ (called perturbation ball). Randomly pick $s$ points in the ball and run the TS algorithm with every picked point as the initial point. If the algorithm terminates with a solution whose distance to $(x^*,y^*)$ is less than $r$, we call event ``fall-back" occurs. We say $(x^*,y^*)$ is \emph{stable} if ``fall-back" always occurs for any picked point.

Specifically, randomly generate $1,000$ games of size $3\times 3$, $600$ games of size $5\times 5$, $300$ games of size $7\times 7$, $150$ games of size $10\times 10$, $80$ games of size $15\times 15$, and $50$ games of size $20\times 20$. For a game of size $m\times n$, we randomly pick $s=16(m-1)(n-1)$ points in the perturbation ball with radius $r=0.01$. To sample games and perturbed points, we use different random methods as in \Cref{subsect:random-sample}. In the decent procedure of the TS algorithm, we choose precision parameter $\delta=0.001$. For each trial, we do the above perturbation experiment and count the number of ``fall-back"s. We count the number of tight \SPn s that are stable for each size of the games. The result is presented in \Cref{tab:perturb}.
\begin{table}[ht]
    \centering
    \caption{Statistics of stable tight \SPn s of different game sizes. We also calculate the probability that a randomly generated tight instance has a stable tight \SPn. By viewing the trials as independent Bernoulli trials, we calculate a $95\%$ confidence interval of the probability to be stable using Wilson's score method.}
    \label{tab:perturb}
    \renewcommand{\arraystretch}{1.25}
    \begin{tabular}{|c|c|c|c|c|}
    \hline
         Game Size&\#[trials]&\#[stable]&$\Pr[\text{stable}]$ ($95\%$ CI)  \\\hline
         $3\times3$&1,000&752&0.752 (0.724-0.778)\\\hline
         $5\times5$&600&101&0.168 (0.141-0.200)\\\hline
         $7\times7$&300&32&0.107 (0.077-0.147)\\\hline
         $10\times10$&150&0&0 (0-0.025)\\\hline
         $15\times15$&80&0&0 (0-0.046)\\\hline
         $20\times20$&50&0&0 (0-0.071)\\\hline
    \end{tabular}
\end{table}

As the table shows, in large games, most tight instances are not stable. Moreover, with the game size growing larger, the probability to find a tight instance with a stable tight \SP becomes smaller and even vanishes. Thus it is really hard to meet an empirical approximation ratio of $0.3393$.

\subsection{``Outside-the-Ball" Strategy}\label{sec:another-point}
As the previous part suggests, the TS algorithm performs quite well in practice. But what can we do if we, very unluckily, meet stable tight instances? We propose a strategy called ``outside-the-ball", i.e., randomly select an initial point outside the perturbation ball. Then we expect the algorithm would find better solutions.

We use the stable instances in the previous experiment. For each instance, randomly select $s'$ initial points outside the perturbation ball and run the TS algorithm with each initial point. To sample initial points, we use the random method as in \Cref{subsect:random-sample}. We say the ``outside-the-ball" strategy is \emph{effective} for one trial of one game if the TS algorithm terminates with a solution outside the perturbation ball and whose approximation ratio is below $0.339$. If in 95\% of the trials of a game, the ``outside-the-ball" strategy is effective, we also say the ``outside-the-ball" strategy is \emph{effective} for this game. The parameters to do the experiment are as follows. The perturbation ball has radius $r=0.01$. For a game of size $m\times n$, we have $s'=16(m-1)(n-1)$ initial points to try. We choose precision parameter $\delta=0.01$ in the decent procedure. The statistics of the experiment are shown in \Cref{tab:another-point}.
\begin{table}[ht]
    \centering
    \caption{Results of ``outside-the-ball" strategy. In fact, for any game, there are at most two trials in which ``outside-the-ball" strategy is not effective.}
    \label{tab:another-point}
    \renewcommand{\arraystretch}{1.25}
    \begin{tabular}{|c|c|c|}
         \hline
         Game Size&\#[stable]&\#[``outside-the-ball" effective]  \\\hline
         $3\times3$& 752 & 752\\\hline
         $5\times5$& 101 & 101\\\hline
         $7\times7$& 32 & 32\\\hline
    \end{tabular}
\end{table}

All trials of the ``outside-the-ball" strategy succeed! Thus it is a good strategy if we have bad luck to meet stable tight instances.

\subsection{Behavior of the Tight Instance Generator}\label{subsec:Exper-TIG}

In this part, we turn to the tight instance generator we described in \Cref{sec:gen-TI}. We focus on the ``efficiency'' of the tight instance generator, or formally, given two random strategy pairs $(x^*, y^*)$ and $(w^*, z^*)$ in $\Delta_m\times \Delta_n$, the probability that the generator outputs a tight game instance. As we have already shown in \Cref{prop:generator}, our generator can generally provide all tight game instances. Therefore, the results may bring us with further understanding on the distribution of tight \SPn s in a general sense. 

For the experiment setting, we consider games with $5$ different sizes from $3\times 3$ to $7\times 7$. Further, for the restriction on the \SP and its dual solution, we give four different conditions: (1) no restriction as the control group, (2) $\supp(x^*)\cap \supp(w^*) = \supp(y^*)\cap \supp(z^*) = \varnothing$, (3) $\supp(x^*)\cap \supp(w^*) \ne \varnothing$ and $\supp(y^*)\cap \supp(z^*) \ne \varnothing$, and (4) $\supp(w^*)\subseteq \supp(x^*)$ and $\supp(z^*)\subseteq \supp(y^*)$. Under each of the $5\times 4 = 20$ possible combinations of a game size and a restriction, we uniformly sample $1,000$ pairs of $(x^*, y^*)$ and $(w^*, z^*)$ and count the success rate that a tight game instance can be generated. The result is shown in \Cref{tab:SR-TIG}.
\begin{table}[t]
    \caption{Success rate of generating a tight game instance $(R, C)$ with different game size under different restrictions on $(x^*, y^*)$ and $(w^*, z^*)$.}
    \centering
    \renewcommand{\arraystretch}{1.25}
    \begin{tabular}{|c|c|c|c|c|c|}
    \hline
    \diagbox{Restrictions}{Success Rate}{Game Size} & $3\times 3$ & $4\times 4$ &$5\times 5$ &$6\times 6$ &$7\times 7$\\
    \hline
    No Restriction & 3.0\% & 5.0\% & 5.7\% & 6.8\% & 6.3\% \\
    \hline
    $\supp(x^*)\cap \supp(w^*) = \varnothing,$ & \multirow{2}*{56.4\%} & \multirow{2}*{85.6\%} & \multirow{2}*{94.7\%} & \multirow{2}*{98.0\%} & \multirow{2}*{99.3\%} \\
    $\supp(y^*)\cap \supp(z^*) = \varnothing$ & & & & & \\
    \hline
    $\supp(x^*)\cap \supp(w^*) \ne \varnothing,$ & \multirow{2}*{0.0\%} & \multirow{2}*{0.0\%} & \multirow{2}*{0.4\%} & \multirow{2}*{0.3\%} & \multirow{2}*{1.1\%} \\
    $\supp(y^*)\cap \supp(z^*) \ne \varnothing$ & & & & & \\
    \hline
    $\supp(w^*)\subseteq \supp(x^*),$ & \multirow{2}*{0.0\%} & \multirow{2}*{0.0\%} & \multirow{2}*{0.0\%} & \multirow{2}*{0.0\%} & \multirow{2}*{0.0\%} \\
    $\supp(z^*)\subseteq \supp(y^*)$ & & & & & \\
    \hline
    \end{tabular}
    \label{tab:SR-TIG}
\end{table}

\Cref{tab:SR-TIG} shows that when the supports of $x^*$, $w^*$ and those of $y^*$, $z^*$ do not intersect, respectively, the successful generating rate increases with the game size.  Meanwhile, when their support always intersect (Condition~(3)), the successful generating rates are all small under experimented game sizes. At last, surprisingly, we discover that when $\supp(w^*)\subseteq \supp(x^*)$ and $\supp(z^*)\subseteq \supp(y^*)$, the success rate remains zero whatever the game size is!

\subsection{Comparison with Other Algorithms}

\begin{figure}[t]
    \centering
    \includegraphics[width=\textwidth]{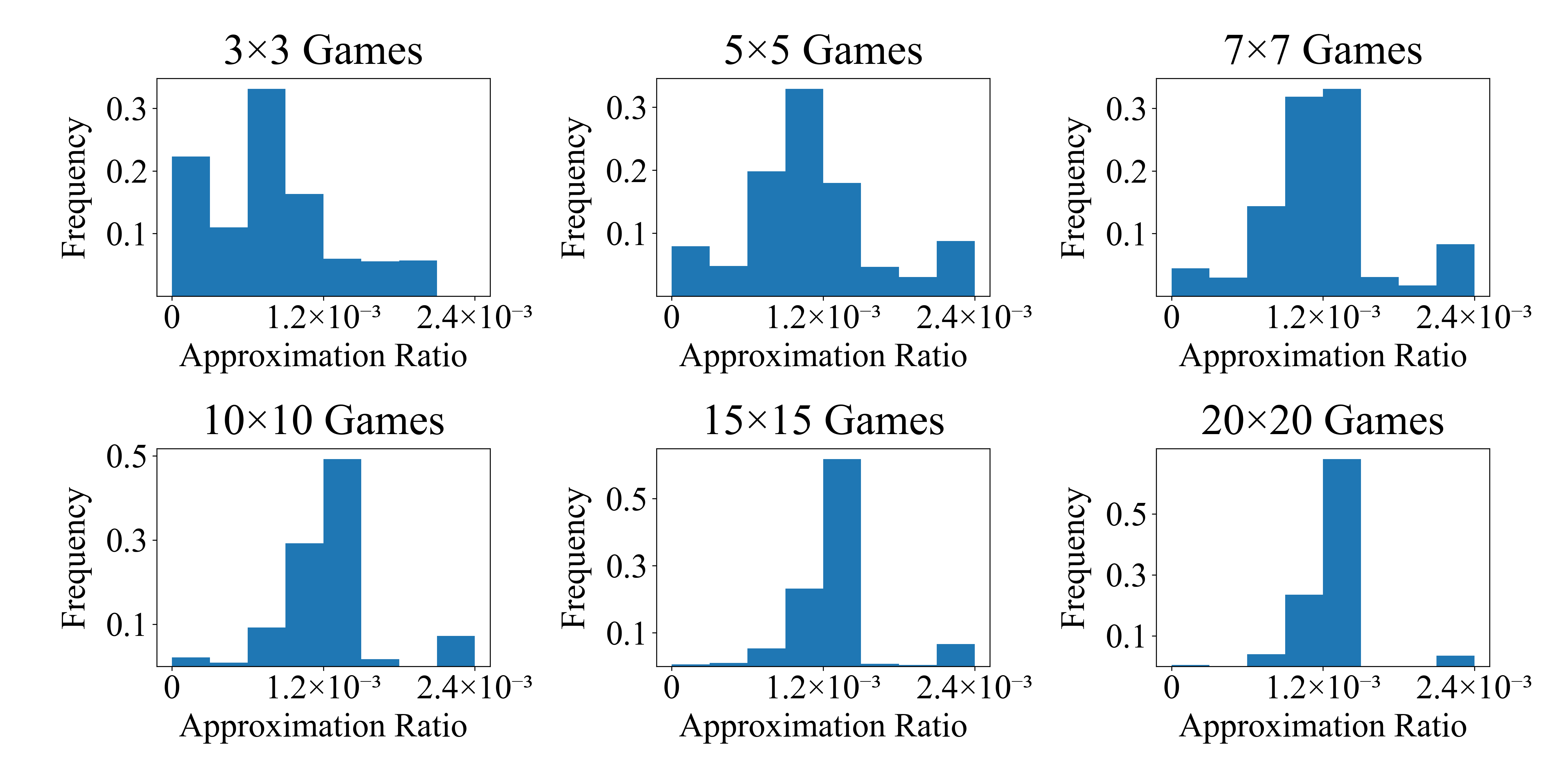}
    \caption{Performance of \FP algorithm on tight game instances.}
    \label{fig:FP}
\end{figure}

In the very last, we make an experiment on how other approximate \NE algorithms behave on those tight instances that \SP algorithms are expected not to perform well, and therefore compare \SP algorithms with these algorithms. Specifically, we consider three algorithms: Czumaj et al.'s algorithm~\cite{czumaj2019distributed} with an approximation ratio of $0.38$, regret-matching algorithms~\cite{greenwald2006bounds} in online learning, and \FP algorithm~\cite{brown1951iterative} with an approximation ratio of $1/2$ within constant rounds~\cite{conitzer2009approximation,feder2007approximating}.

We generate 1,860 tight game instances with different sizes by \Cref{algo:gen} for the test: 1,000 $3\times 3$ games, 500 $5\times 5$ games, 200 $7\times 7$ games, 100 $10\times 10$ games, 50 $15\times 15$ games and 10 $20\times 20$ games. For every game instance, we run each of the three algorithms $20$ times provided the randomness of these algorithms and count the approximation ratio.

The results turn out to be surprising. First, Czumaj et al.'s algorithm terminates at an approximation ratio of $0.3393$ for all cases and all trials. The reason for such a consequence is that the \NE of the zero-sum game specified by the algorithm already satisfies the required approximation ratio of $0.38$; therefore, the further adjustment step never happens. 

Meanwhile, regret-matching algorithms always find a pure \NE of a 2-player game if there exists one, which is the case for all generated tight instances. However, we believe that there still exists some tight game instances with no pure \NEn.

At last, \Cref{fig:FP} shows the performance of \FP algorithm on these tight instances.\footnote{Similar to the first experiment, we count the cases without distinguishing the specific games of the same size, since the result is similar for every game of the same size.} For games of all sizes, \FP algorithm shows a bell-shaped distribution on the approximation ratios, with median value approximately $1.2\times 10^{-3}$ on games with size no smaller than $5\times 5$. For $3\times 3$ games, \FP algorithm behaves even better, with lots of instances finding a \NEn, and median value decreasing to approximately $1.0\times 10^{-3}$. We explain the good performance of \FP algorithm as the set of tight instances of \FP algorithm are of zero measure over the set of tight instances of algorithms presented in \Cref{sec:algo}. In other words, there are some special latent relationships between \FP algorithm and stationary points.

\subsection{Random Methods for Sampling}\label{subsect:random-sample}
In this part, we introduce the omitted details of random samplings in previous experiments.
We employ different random methods for three objects: games in most experiments, perturbed points in \Cref{sec:stability}, and initial points of ``outside-the-ball" in \Cref{sec:another-point}.

\textbf{Games.}
To sample $t$ games of size $m\times n$, we need $10$ groups of $x^*,y^*,w^*,z^*$ that \Cref{algo:gen} can use to efficiently generate games (see \Cref{subsec:Exper-TIG}), and then sample $t/10$ tight instances for each $x^*,y^*,w^*,z^*$.
\begin{enumerate}[label=Step \arabic*.]
    \item Uniformly pick nonempty sets $\supp(x^*)$, $\supp(y^*)$, $\supp(w^*)$ and $\supp(z^*)$ satisfying \[\supp(x^*)\neq\{1,2\ldots,m\},\] \[\supp(y^*)\neq\{1,2\ldots,n\},\] \[\supp(x^*)\cap \supp(w^*) = \varnothing,\] \[\text{and }\supp(y^*)\cap \supp(z^*) = \varnothing.\] For each vector $x^*,y^*,w^*,z^*$, uniformly and independently pick real numbers from $[0,1]$ for the support indices. Normalize these vectors to $\Delta_m$ or $\Delta_n$. By this method, we obtain valid $x^*,y^*,w^*,z^*$. Input it to \Cref{algo:gen}.
    \item For the LP in each enumeration of $k\in S_R(z^*)$ and $l\in S_C(w^*)$ in \Cref{algo:gen}, uniformly and independently pick $m$ vectors in $[0,1]^{2mn}$ as the coefficients of the object function for the LP. And then for each function, choose to maximize or minimize it with equal probabilities. Let the optimal solutions for $m$ object functions be $(R_i,C_i)_{i=1}^m$. Uniformly and independently pick $t/10$ vectors from $[0,1]^{mn}\times[0,1]^{mn}$ and normalize them. Make $t/10$ convex combinations of $(R_i,C_i)$ using these vectors as weights and output the combination results.
\end{enumerate}

\textbf{Perturbed points.}
Suppose the radius of the perturbation ball is $r$. Uniformly and independently pick $m+n$ real numbers from $[-r,r]$. Add $m$ of them to $x^*$ and $n$ of them to $y^*$ index by index. Make the negative entries in the added vectors be $0$. Normalize the vectors and output them as a perturbed point.

\textbf{Initial points of ``outside-the-ball''.}
Uniformly and independently pick two vectors from $[0,1]^m$ and $[0,1]^n$, respectively. Normalize these two vectors. If the normalization result lies in the perturbation ball, redo the previous procedure. Otherwise, output the result as an initial point of the strategy ``outside-the-ball''.
\newpage
\section{More Tight Instances}\label{app:MTI}
We present two more tight instances without proof. One can check their correctness by similar steps in \Cref{thm:worst-case-exist}.

\begin{example}[Tight instances of every size]
Suppose $m,n>2$. 
The game with payoff matrices \cref{ex:every-size} attains the tight bound $b$ at \SP $x^*=\T{(1,0,\ldots,0)_m}$, $y^*=\T{(1,0,\ldots,0)_n}$ and dual solution $\rho^*=\mu_0/(\lambda_0+\mu_0)$, $w^*=\T{(0,1,0,\ldots,0)_m}$ and $z^*=\T{(0,1,0,\ldots,0)_n}$.
\begin{align}
    R&=\begin{pmatrix}
    0.1&0&\ldots&0\\
    0.1+b&\lambda_0&\ldots&\lambda_0\\
    0.1+b&1&\ldots&1\\
    \vdots&\vdots&\ddots&\vdots\\
    0.1+b&1&\ldots&1
    \end{pmatrix}_{m\times n},\notag\\
    C&=\begin{pmatrix}
    0.1&0.1+b&0.1+b&\ldots&0.1+b\\
    0&\mu_0&1&\ldots&1\\
    \vdots&\vdots&\vdots&\ddots&\vdots\\
    0&\mu_0&1&\ldots&1
    \end{pmatrix}_{m\times n}.\label{ex:every-size}
\end{align}
\end{example}

\begin{example}[Tight instance with no dominated strategy]
The game with payoff matrices \cref{ex:no-dominated} attains the tight bound $b$ at \SP $x^*=y^*=\T{(1/2,1/2,0,0)}$, and dual solution $\rho^*=\mu_0/(\lambda_0+\mu_0)$, $w^*=z^*=\T{(0,0,1/2,1/2)}$. One can verify that there is no dominated strategy for either player in this game.
\begin{align}
    R&=\begin{pmatrix}
    2b+0.2&0&0&0\\
    0&2b+0.2&0&0\\
    2b+0.17&2b+0.03&1&1\\
    2b+0.03&2b+0.17&2\lambda_0-1&2\lambda_0-1
    \end{pmatrix},\notag\\
    C&=\begin{pmatrix}
    2b+0.2&0&2b+0.17&2b+0.03\\
    0&2b+0.2&2b+0.03&2b+0.17\\
    0&0&1&2\mu_0-1\\
    0&0&1&2\mu_0-1
    \end{pmatrix}.\label{ex:no-dominated}
\end{align}
\end{example}

\end{document}